\newtheorem{theorem}{Theorem}[section]
\newtheorem{proposition}[theorem]{Proposition}
\newtheorem{lemma}[theorem]{Lemma}
\newtheorem{corollary}[theorem]{Corollary}
\theoremstyle{definition}
\newtheorem{remark}[theorem]{Remark}
\newtheorem*{terminology*}{Terminology}
\newtheorem{definition}[theorem]{Definition}
\newtheorem{example}[theorem]{Example}
\newcommand{\R}{\mathbb{R}}
\renewcommand{\P}{\mathbb{P}}
\newcommand{\eps}{\varepsilon}
\newcommand{\Z}{\mathbb{Z}}
\newcommand{\E}{\mathbb{E}}
\newcommand{\argmin}{\operatorname{argmin}}
\newcommand{\sign}{\operatorname{sign}}
\newcommand{\bS}{\mathbf{S}}
\newcommand{\bs}{\mathbf{s}}
\newcommand{\1}{\mathbbm{1}}
\newcommand{\supp}{\operatorname{supp}}
\newcommand{\Id}{\operatorname{Id}}
\newcommand{\diag}{\operatorname{diag}}
\renewcommand{\r}{\mathrm{r}}
\newcommand{\w}{\mathrm{w}}
\newcommand{\Lz}{\mathrm{L0}}
\newcommand{\W}{\mathrm{W}}
\newcommand{\TV}{\mathrm{TV}}
\newcommand{\ST}{\mathcal{ST}}
\newcommand{\Normal}{\operatorname{Normal}}
\newcommand{\s}{\mathsf{s}}
\renewcommand{\t}{\mathsf{t}}
\renewcommand{\a}{\mathsf{a}}
\newcommand{\sB}{\mathsf{B}}
\title{Approximate $l_0$-penalized estimation of piecewise-constant signals on
graphs}
\author{Zhou Fan}
\author{Leying Guan}
\address{Department of Statistics, Stanford University}
\email{zhoufan@stanford.edu, lguan@stanford.edu}
\thanks{ZF was supported by a Hertz Foundation Fellowship and an NDSEG
Fellowship (DoD, Air Force Office of Scientific Research, 32 CFR
168a).}
\begin{document}
\maketitle
\begin{abstract}
We study recovery of piecewise-constant signals on graphs by the
estimator minimizing an $l_0$-edge-penalized objective. Although exact
minimization of this objective may be computationally intractable, we 
show that the same statistical risk guarantees are achieved by the
$\alpha$-expansion algorithm which computes an approximate minimizer in
polynomial time. We establish that for graphs with small average vertex degree,
these guarantees are minimax rate-optimal over classes of edge-sparse
signals. For spatially inhomogeneous graphs, we propose minimization of an
edge-weighted objective where each edge is weighted by its effective resistance
or another measure of its contribution to the graph's connectivity. We establish
minimax optimality of the resulting estimators over corresponding edge-weighted
sparsity classes. We show theoretically that these risk guarantees are not
always achieved by the estimator minimizing the $l_1$/total-variation
relaxation, and empirically that the $l_0$-based estimates are more accurate in 
high signal-to-noise settings.
\end{abstract}

\section{Introduction}
Let $G=(V,E)$ be a known (undirected) graph, with vertices $V:=\{1,\ldots,n\}$
and edge set $E$.
At each vertex $i \in \{1,\ldots,n\}$, an unknown signal value $\mu_{0,i}$ is
observed with noise:
\[Y_i=\mu_{0,i}+\eps_i.\]
For simplicity, we assume $\eps_1,\ldots,\eps_n \overset{iid}{\sim}
\Normal(0,\sigma^2)$ and $G$ is fully connected. This paper studies the
problem of estimating the true signal vector
$\mu_0:=(\mu_{0,1},\ldots,\mu_{0,n})$ from observed data $Y:=(Y_1,\ldots,Y_n)$,
when $\mu_0$ is (or is well-approximated by) a piecewise-constant signal over
$G$. Informally, this will mean that the set of edges $\{i,j\} \in E$ where
$\mu_{0,i} \neq \mu_{0,j}$ is a small subset of all edges.

Examples of this problem occur in a number of applications:
\begin{itemize}
\item {\bf Multiple changepoint detection.} The graph $G$ is a linear chain
with $n$ vertices and $n-1$ edges, which identifies a sequential order to the
observations. The signal $\mu_0$ is piecewise constant
in the sense $\mu_{0,i} \neq \mu_{0,i+1}$ for a small number of
changepoints $i$.
\item {\bf Image segmentation.} The graph $G$ is a 2-D (or 3-D)
lattice graph, and $\mu_0$ corresponds to the pixels (or voxels) of a digital
image. The assumption of piecewise-constancy implies that $\mu_0$ has regions
of approximately constant pixel value.
\item {\bf Anomaly identification in networks.} The graph $G$ represents a
network. The signal $\mu_0$ indicates locations of anomalous
clusters of vertices, for example representing individuals infected by a
disease or a computer virus. Piecewise-constancy of $\mu_0$ reflects the
assumption that the anomaly spreads along the network connections.
\end{itemize}
Early and pioneering works include
\cite{chernoffzacks,yao1984,barryhartigan} on multiple changepoint detection
and \cite{gemangeman,besag} on image segmentation. For general
graphs and networks, 
\cite{ariascastroetal2008,addarioberryetal,ariascastroetal2011,ariascastrogrimmett,sharpnacketallovasz,sharpnacketalwavelet}
studied related hypothesis testing problems, and 
\cite{hutterrigollet,wangetal,hernanmadridpadillaetal} also recently considered
estimation. We discuss some connections of our work to this existing
literature in Section \ref{subsec:literature}.

The focus of our paper is the method of ``$l_0$-edge-denoising'', which seeks to
estimate $\mu_0$ by the values $\mu \in \R^n$ minimizing the 
residual squared error $\frac{1}{2}\|Y-\mu\|^2$ plus a penalty for each
edge $\{i,j\} \in E$ where $\mu_i \neq \mu_j$. (Here and
throughout, $\|\cdot\|$ without a subscript denotes the standard Euclidean
norm.) More formally, this estimate minimizes the objective function
\begin{equation}\label{eq:objl0}\tag{L0}
F_0(\mu):=\frac{1}{2}\|Y-\mu\|^2+\lambda \|D\mu\|_0,
\hspace{0.1in} \|D\mu\|_0:=\sum_{\{i,j\} \in E} \1\{\mu_i \neq \mu_j\}.
\end{equation}
Here, $D:\R^n \to \R^E$ denotes a vertex-edge incidence matrix with entries
in $\{-1,0,1\}$ that maps $\mu \in \R^n$
to the vector of edge differences $(\mu_i-\mu_j)_{\{i,j\} \in E}$ (with an
arbitrary sign for each edge). The penalty term
$\|D\mu\|_0$ denotes the usual ``$l_0$-norm'' of $D\mu$, and $\lambda$ is
a user-specified tuning parameter that controls the magnitude of this penalty.

For reasons to be discussed, we will also consider procedures that seek to
minimize a more general weighted version of the above objective function,
\begin{equation}\label{eq:objw}\tag{W}
F_\w(\mu):=\frac{1}{2}\|Y-\mu\|^2+\lambda \|D\mu\|_\w,
\hspace{0.1in} \|D\mu\|_\w:=\sum_{\{i,j\} \in E} \w(i,j)\1\{\mu_i \neq \mu_j\},
\end{equation}
where $\w:E \to \R_+$ assigns a non-negative weight to each edge. This allows
possibly different penalty values to be applied to different edges of the graph.

The combinatorial nature of (\ref{eq:objl0}) and (\ref{eq:objw})
render exact minimization of these objectives computationally intractable for
general graphs. A primary purpose of this paper is to show, however, that
approximate minimization is sufficient to obtain statistically rate-optimal
guarantees. We study one such approximation algorithm by Boykov, Veksler, and
Zabih \cite{boykovetal}, suggest its use in minimizing (\ref{eq:objw}) for
applications involving inhomogeneous networks, and provide a unified analysis of
minimax squared-error risk for this estimation problem over edge-sparse signal
classes on general graphs.

We summarize our results as follows:
\begin{enumerate}[1.]
\item A polynomial-time algorithm using the $\alpha$-expansion
procedure of \cite{boykovetal} yields approximate minimizers $\hat{\mu}$ of
(\ref{eq:objl0}) and (\ref{eq:objw}) that achieve
the same statistical risk guarantees as the exact minimizers, up to constant
factors. Computation of $\hat{\mu}$ is reasonably efficient in practice and
yields good empirical signal recovery in our tested examples. In this sense,
inference based on minimizing (\ref{eq:objl0}) or (\ref{eq:objw})
is computationally tractable, even for large graphs.
\item For any graph $G$, the
estimate $\hat{\mu}$ (exactly or approximately) minimizing (\ref{eq:objl0})
with $\lambda \asymp \sigma^2\log |E|$ satisfies an ``edge-sparsity'' oracle
inequality
\begin{equation}\label{eq:oracleproperty}
\E[\|\hat{\mu}-\mu_0\|^2] \lesssim \inf_{\mu \in \R^n} \|\mu-\mu_0\|^2+
\sigma^2 \max(\|D\mu\|_0,1)\log|E|.
\end{equation}
This bounds the squared-error risk of $\hat{\mu}$ in terms of the
approximability of $\mu_0$ by any piecewise-constant signal $\mu$.
If it is known that $\|D\mu_0\|_0 \leq s$, then setting instead
$\lambda \asymp \sigma^2(1+\log\frac{|E|}{s})$ yields
\begin{equation}\label{eq:minimaxbound}
\E[\|\hat{\mu}-\mu_0\|^2] \lesssim \sigma^2s\left(1+\log\tfrac{|E|}{s}\right).
\end{equation}
The risk bound (\ref{eq:minimaxbound})
is rate-optimal in a minimax sense over the edge-sparse signal class
$\{\mu_0:\|D\mu_0\|_0 \leq s\}$ up to a multiplicative factor depending on
the mean vertex degree of $G$.
\item An alternative to minimizing (\ref{eq:objl0}) is to minimize its
$l_1$/total-variation relaxation,
\begin{equation}\label{eq:objTV}\tag{TV}
F_1(\mu):=\frac{1}{2}\|Y-\mu\|^2+\lambda \|D\mu\|_1,
\hspace{0.2in} \|D\mu\|_1:=\sum_{\{i,j\} \in E} |\mu_i-\mu_j|.
\end{equation}
One advantage of this approach is that (\ref{eq:objTV}) is convex
and can be exactly minimized in polynomial time. However, whether the risk
guarantees (\ref{eq:oracleproperty})
and (\ref{eq:minimaxbound}) hold for $\hat{\mu}$ minimizing
(\ref{eq:objTV}) depends on properties of the graph. In particular,
they do not hold for the linear chain graph, where instead
\begin{equation}\label{eq:minimaxTVlower}
\inf_{\lambda \geq 0} \sup_{\mu_0:\|D\mu_0\|_0 \leq s}
\E[\|\hat{\mu}^\lambda-\mu_0\|^2]\gtrsim \sigma^2(\log n)^{-5}\sqrt{sn},
\end{equation}
$\hat{\mu}^\lambda$ denoting the minimizer of (\ref{eq:objTV})
for each $\lambda$. This
result is connected to the ``slow rate'' of convergence in prediction risk
for the Lasso \cite{tibshirani,chenetal} in certain linear regression settings
with correlated predictors.
\item When $G$ has regions of differing connectivity,
$\|D\mu\|_0$ is not a spatially homogeneous
measure of complexity, and it may be more appropriate to minimize the
edge-weighted objective (\ref{eq:objw}) where $\w(i,j)$ measures the
contribution of edge $\{i,j\}$ to the connectivity of the graph. One
such weighting, inspired by the analyses in
\cite{sharpnacketallovasz,sharpnacketalwavelet}, weighs each edge by its
effective resistance when $G$ is viewed as an electrical resistor network.
In simulations on real networks, this weighting can yield
a substantial reduction in error over minimizers of the unweighted objective
(\ref{eq:objl0}). For general
weightings $\w:E \to \R_+$ belonging to the spanning tree polytope of $G$,
the guarantee (\ref{eq:minimaxbound}) holds
over the larger class $\{\mu_0:\|D\mu_0\|_\w \leq s\}$
for $\hat{\mu}$ minimizing (\ref{eq:objw}), and this guarantee is
minimax rate-optimal up to a graph-independent constant factor, for
all graphs.
\end{enumerate}

We provide a more detailed discussion of these results in
Sections \ref{sec:algorithm} to \ref{sec:weighting}.
Simulations comparing minimization of (\ref{eq:objl0}), (\ref{eq:objw}),
and (\ref{eq:objTV}) over several graphs are presented in 
Section \ref{sec:simulations}. Proofs are deferred to the appendices.

\subsection{Related work}\label{subsec:literature}
For changepoint problems where $G$ is the linear chain, (\ref{eq:objl0}) may
be exactly minimized by dynamic programming in quadratic time
\cite{augerlawrence,winklerliebscher,jacksonetal}. Pruning ideas may
reduce runtime to be near-linear in practice \cite{killicketal}.
Correct changepoint recovery and distributional properties of $\hat{\mu}$
minimizing
(\ref{eq:objl0}) were studied asymptotically in \cite{yao1988,yaoau} when the
number of true changepoints is fixed. Non-asymptotic risk bounds similar 
to (\ref{eq:oracleproperty}) and (\ref{eq:minimaxbound}) were established for
estimators minimizing similar objectives in
\cite{lebarbier,birgemassart2007}; we discuss this further below. 
Extension to the recovery of piecewise-constant functions over a continuous
interval was considered in \cite{boysenetal}. 

In image applications where $G$ is the 2-D lattice, (\ref{eq:objl0}) is closely
related to the Mumford-Shah functional \cite{mumfordshah} and Ising/Potts-model
energies for discrete Markov random fields \cite{gemangeman}. In the latter
discrete setting, where each $\mu_i$ is allowed to take value in a finite set
of ``labels'', a variety of algorithms seek to minimize (\ref{eq:objl0}) using
minimum s-t cuts on augmented graphs; see \cite{kolmogorovzabih} and the
contained references for a review. For an Ising model with only two distinct
labels, \cite{greigetal} showed that the exact minimizer may be computed via 
a single minimum s-t cut. For more than two distinct
labels, exact minimization of (\ref{eq:objl0}) is NP-hard \cite{boykovetal}.
We analyze a graph-cut algorithm from
\cite{boykovetal} that applies to more than two labels, where the exact
minimization property is replaced by an approximation guarantee. We show that
the deterministic guarantee of this algorithm implies rate-optimal statistical
risk bounds, for the 2-D lattice as well as for general graphs.

For an arbitrary graph $G$,
the estimators $\hat{\mu}$ exactly minimizing (\ref{eq:objl0}) and
(\ref{eq:objw}) are examples of general model-complexity penalized estimators
studied in \cite{barronetal,birgemassart2007}. The penalties we
impose may be smaller than those needed for the analyses of
\cite{barronetal,birgemassart2007} by
logarithmic factors, and we instead control the supremum of a certain Gaussian
process using an argument specialized to our graph-based problem.
A theoretical focus of \cite{barronetal,birgemassart2007} was on adaptive
attainment of minimax rates over families of models---for example, for
the linear chain graph, \cite{lebarbier,birgemassart2007} considered
penalties increasing but concave in the number
of changepoints, and the resulting estimates achieve the guarantee
(\ref{eq:minimaxbound}) simultaneously for all $s$. Instead of using such a
penalty, which poses additional computational challenges, we will
apply a data-driven procedure to choose $\lambda$, although we will
not study the adaptivity properties of the procedure in this paper.

The method of $l_0$-edge-denoising and the characterization
of signal complexity by $\|D\mu_0\|_0$ are ``nonparametric'' in the sense of
\cite{ariascastroetal2005,ariascastroetal2011}. This is in contrast to methods
that employ additional prior knowledge about $\mu_0$, for instance
that its constant regions belong to parametric classes of shapes
\cite{ariascastroetal2005}, are thick and blob-like in nature
\cite{ariascastroetal2011}, or have sufficiently smooth boundaries when $G$ is
embedded in a
Euclidean space \cite{korostelevtsybakov,donoho}. In this regard, our study is
more closely related to the hypothesis testing work of
\cite{ariascastrogrimmett,sharpnacketallovasz,sharpnacketalwavelet} in similar
nonparametric contexts. An advantage of this perspective is that the inference
algorithm is broadly applicable to general
graphs and networks, where appropriate notions of boundary smoothness or support
constraints for $\mu_0$ are less naturally defined. A disadvantage is
that such an approach may not be statistically optimal in more specialized
settings when such prior assumptions hold true.

A connection between this problem,
effective edge resistances, and graph spanning trees emerged in the analyses of
\cite{sharpnacketallovasz,sharpnacketalwavelet}.
In \cite{sharpnacketalwavelet}, a procedure was proposed to construct an
orthonormal wavelet basis over a spanning tree of $G$ and to perform inference 
by thresholding in this basis. Our proposal to minimize (\ref{eq:objw}) for
$\w:E \to \R_+$ in the
spanning tree polytope of $G$ may be viewed as a derandomization of this idea
when the spanning tree is chosen at random; we discuss this
connection in Remark \ref{remark:wavelettree}. Sampling edges by effective
resistances is also a popular method of graph sparsification
\cite{spielmanteng,spielmansrivastava}, and effective-resistance edge
weighting may be viewed as a derandomization of procedures such as in 
\cite{sadhanalaetalsparsification} that operate on a randomly sparsified graph.

There is a large body of literature on the $l_1$-relaxation (\ref{eq:objTV}).
This method and generalizations were suggested in different
contexts and guises for the linear chain graph in
\cite{landfriedman,mammenvandegeer,davieskovac,chenetal,tibshiranietal}
and also studied theoretically in
\cite{rinaldo,harchaouilevyleduc,dalalyanetal,linetal,guntuboyinaetal}.
For 2-D lattice graphs in image denoising, variants of (\ref{eq:objTV})
were proposed and studied in
\cite{rudinetal,chambollelions,goldsteinosher}. For more general graphs, this
method and generalizations have been studied in
\cite{hoefling,kovacsmith,tibshiranitaylor,sharpnacketalsparsistency,tanseyscott,hutterrigollet,wangetal,sadhanalaetal},
among others. In particular, \cite{chambolle,darbonsigelle,xinetal} developed
algorithms for minimizing (\ref{eq:objTV}) and related objectives also using
iterated graph cuts, although these algorithms yield exact solutions and are
different from the algorithm we study.
A body of theoretical work establishes that $\hat{\mu}$ minimizing
(\ref{eq:objTV}) is (or is nearly)
minimax rate-optimal over signal classes of bounded variation, 
$\{\mu_0:\|D\mu_0\|_1 \leq s\}$, for the linear chain graph
and higher-dimensional lattices
\cite{mammenvandegeer,sadhanalaetal,wangetal,hutterrigollet}. Several risk
bounds over the exact-sparsity classes $\{\mu_0:\|D\mu_0\|_0 \leq s\}$
that we consider were also established for the linear chain graph
in \cite{dalalyanetal,linetal,guntuboyinaetal} and for general graphs in
\cite{hutterrigollet,hernanmadridpadillaetal}; we discuss some of
these results in Section \ref{sec:TV}.
We believe that benefits of using effective resistance weighted
edge penalties may also apply to the $l_1$/TV setting, and we leave further
exploration of this to future work.

\subsection{Notation and conventions}
We assume throughout that $G$ is fully connected with $n \geq 3$
vertices. Theoretical results are non-asymptotic, in the sense that they are
valid for all finite $n$ and $s$ with universal
constants $C,c>0$ independent of $n$, $s$, and the graph $G$. 
For positive $a$ and $b$, we write informally $a \lesssim b$ if $a \leq Cb$ and
$a \asymp b$ if $ca \leq b \leq Ca$ for universal constants $C,c>0$ and all $n
\geq 3$.

For a vector $v$, $\|v\|:=(\sum_i v_i^2)^{1/2}$ is the Euclidean norm,
$\|v\|_0:=|\{i:v_i \neq 0\}|$ the ``$l_0$-norm'', $\|v\|_1:=\sum_i |v_i|$
the $l_1$-norm, and $\|v\|_\infty:=\max_i |v_i|$ the $l_\infty$-norm.
For vectors $v$ and $w$,
$\langle v,w \rangle=\sum_i v_iw_i$ is the Euclidean inner-product.

$\R_+$ denotes the non-negative reals.
For an edge weighting $\w:E \to \R_+$, $\w(i,j)$ is shorthand for $\w(\{i,j\})$,
and we denote $\w(E'):=\sum_{\{i,j\} \in E'} \w(i,j)$ for any edge subset
$E' \subseteq E$. For two edge weightings $\w,\r:E \to \R_+$, we write $\w \geq
\r$ if $\w(i,j) \geq \r(i,j)$ for all edges $\{i,j\} \in E$. For $v \in \R^E$, 
$\|v\|_\w:=\sum_{\{i,j\} \in E} \w(i,j)v_{\{i,j\}}$ denotes the
$l_0$-norm weighted by $\w$.

$\1\{\cdot\}$ denotes the indicator function, i.e.\ $\1\{\mathcal{E}\}=1$ if
condition $\mathcal{E}$ is true and 0 otherwise.

\section{Approximation algorithm}\label{sec:algorithm}
\begin{algorithm}[t]
\caption{Algorithm to compute a
$(\tau,\delta\Z)$-local-minimizer of (\ref{eq:objw})}
\label{alg:boykovouter}
\begin{algorithmic}[1]
\STATE{Let $\bar{Y}$, $Y_{\min}$, and $Y_{\max}$ be the mean, minimum,
and maximum of $Y$, rounded to $\delta\Z$.}
\STATE{Initialize $\hat{\mu} \in \R^n$ by setting $\hat{\mu}_i=\bar{Y}$
for all $i \in \{1,\ldots,n\}$.}
\LOOP
  \FOR{each $c \in \delta\Z \cap [Y_{\min},Y_{\max}]$}
     \STATE{Compute the best $\delta\Z$-expansion $\tilde{\mu}$ of $\hat{\mu}$
with new value $c$ using Algorithm \ref{alg:boykovinner}.}
     \STATE{If $F_\w(\tilde{\mu}) \leq F_\w(\hat{\mu})-\tau$, then
set $\hat{\mu}=\tilde{\mu}$.}
  \ENDFOR
  \STATE{If $\hat{\mu}$ was unchanged, then return $\hat{\mu}$.}
\ENDLOOP
\end{algorithmic}
\end{algorithm}
\begin{algorithm}[t]
\caption{$\alpha$-expansion sub-routine \cite{boykovetal}}
\label{alg:boykovinner}
\begin{algorithmic}[1]
     \STATE{Construct the following edge-weighted augmentation
$G_{c,\hat{\mu}}$ of $G$:}
\begin{ALC@g}
\STATE{Introduce a source vertex $\s$ and a sink vertex $\t$.}
\STATE{Connect $\s$ to each $i \in \{1,\ldots,n\}$ with weight
$\frac{1}{2}(Y_i-c)^2$.}
\STATE{Connect $\t$ to each $i \in \{1,\ldots,n\}$ with weight
$\frac{1}{2}(Y_i-\hat{\mu}_i)^2$ if
$\hat{\mu}_i \neq c$, or weight $\infty$ if $\hat{\mu}_i=c$.}
\FOR{each edge $\{i,j\} \in E$}
\IF{$\hat{\mu}_i=\hat{\mu}_j$}
\STATE{Assign weight $\lambda\w(i,j)\1\{\hat{\mu}_i \neq c\}$ to $\{i,j\}$.}
\ELSE
\STATE{Introduce a new vertex $\a_{i,j}$.}
\STATE{Replace edge $\{i,j\}$ by the three edges $\{i,\a_{i,j}\}$,
$\{j,\a_{i,j}\}$, and $\{\t,\a_{i,j}\}$, with weights
$\lambda\w(i,j)\1\{\hat{\mu}_i \neq c\}$,
$\lambda\w(i,j)\1\{\hat{\mu}_j \neq c\}$,
and $\lambda\w(i,j)$, respectively.}
\ENDIF
\ENDFOR
\end{ALC@g}
     \STATE{Find the minimum s-t cut $(S,T)$ of $G_{c,\hat{\mu}}$ such
that $\s \in S$, $\t \in T$.}
\STATE{For each vertex $i \in \{1,\ldots,n\}$, set $\tilde{\mu}_i=c$ if $i$
belongs to $T$ and $\tilde{\mu}_i=\hat{\mu}_i$ otherwise.}
\STATE{Return $\tilde{\mu}$.}
\end{algorithmic}
\end{algorithm}

As discussed in Section \ref{subsec:literature}, whether (\ref{eq:objl0}) and
(\ref{eq:objw}) may be minimized exactly in polynomial time depends on the graph
$G$. However, good approximation of the solution
is tractable for any graph. We review in this section one approach
that achieves such an approximation,
based on discretizing the range of values of the entries of $\mu$ and applying
the $\alpha$-expansion local move of \cite{boykovetal} for discrete Markov
random fields. We describe the algorithm for (\ref{eq:objw}),
as (\ref{eq:objl0}) is a special case.

The fundamental property of this algorithm will be that its output
is a $(\tau,\delta\Z)$-local-minimizer for the objective function
(\ref{eq:objw}), defined as follows:
\begin{definition}\label{def:localmin}
For $\delta>0$, denote by
\[\delta\Z:=\{\ldots,-3\delta,-2\delta,-\delta,0,
\delta,2\delta,3\delta,\ldots\}\]
the set of all integer multiples of $\delta$.
For any $\mu \in \R^n$, a {\bf $\pmb{\delta\Z}$-expansion} of
$\mu$ is any other vector $\tilde{\mu} \in \R^n$ such that there exists a
single value $c \in \delta\Z$ for which, for every $i \in \{1,\ldots,n\}$,
either $\tilde{\mu}_i=\mu_i$ or $\tilde{\mu}_i=c$.
For $\delta>0$ and $\tau \geq 0$,
a {\bf $\pmb{(\tau,\delta\Z)}$-local-minimizer} of (\ref{eq:objw}) is any
$\mu \in \R^n$ such that for every $\delta\Z$-expansion $\tilde{\mu}$ of $\mu$,
\[F_\w(\mu)-\tau \leq F_\w(\tilde{\mu}).\]
\end{definition}

More informally, a $\delta\Z$-expansion of $\mu$ can replace any subset of
vertex values by a single new value $c \in \delta\Z$,
and a $(\tau,\delta\Z)$-local-minimizer is
such that no further $\delta\Z$-expansion reduces the objective value by more
than $\tau$. This definition does not require $(\tau,\delta\Z)$-local-minimizers
to have all entries belonging to $\delta\Z$---hence, in
particular, a global minimizer of (\ref{eq:objw}) is also a
$(\tau,\delta\Z)$-local-minimizer for any $\delta>0$ and $\tau \geq 0$.
We define analogously $(\tau,\delta\Z$)-local-minimizers for (\ref{eq:objl0}).

The $\alpha$-expansion procedure of \cite{boykovetal} may be used to compute a
$(\tau,\delta\Z)$-local-minimizer efficiently with graph cuts. We review this
procedure and how we apply it to our problem in Algorithm \ref{alg:boykovouter}.
We will use a small discretization $\delta$ so as to
yield a good solution to the original continuous problem.

The following propositions verify that this algorithm returns a
$(\tau,\delta\Z)$-local-minimizer in polynomial time; proofs are contained in 
Appendix \ref{appendix:algorithm}.
\begin{proposition}\label{prop:runtime}
Algorithm \ref{alg:boykovouter}, using Edmonds-Karp or Dinic's algorithm for
solving minimum s-t cut, has worst-case
runtime $O(n|E|^3(Y_{\max}-Y_{\min})^3/(\delta\tau))$.
\end{proposition}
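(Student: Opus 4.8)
The plan is to bound the runtime as a product of three factors: the number of passes through the outer loop of Algorithm \ref{alg:boykovouter}, the number $L:=|\delta\Z \cap [Y_{\min},Y_{\max}]|$ of candidate values $c$ examined in each pass, and the cost of one call to the $\alpha$-expansion subroutine of Algorithm \ref{alg:boykovinner}, which is dominated by a single minimum s-t cut computation. I would dispatch the two easy factors first and then handle the number of passes, which is the crux of the argument.

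For the per-cut cost, I would count the size of the augmented graph $G_{c,\hat{\mu}}$. It consists of the $n$ original vertices together with $\s$, $\t$, and at most one auxiliary vertex $\a_{i,j}$ per edge $\{i,j\}$, so its vertex count is at most $2+n+|E|$; its edge count is at most $2n$ source/sink edges plus at most $3|E|$ edges replacing the original edges, hence again $O(n+|E|)$. Since $G$ is connected we have $n-1 \leq |E|$, so both counts are $O(|E|)$, and Edmonds-Karp ($O(VE^2)$) or Dinic's algorithm ($O(V^2E)$) each run in $O(|E|^3)$ on $G_{c,\hat{\mu}}$; constructing $G_{c,\hat{\mu}}$ and evaluating $F_\w$ cost only $O(n+|E|)$ and are absorbed. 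The candidate count is immediate: $L=O\big((Y_{\max}-Y_{\min})/\delta\big)$.

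The main obstacle is bounding the number of outer-loop passes, for which I would use a monotone-descent (potential-function) argument on the objective $F_\w(\hat{\mu})$. Whenever an expansion is accepted, which happens only when $F_\w(\tilde{\mu}) \leq F_\w(\hat{\mu})-\tau$, the value $F_\w(\hat{\mu})$ strictly decreases by at least $\tau$; and Algorithm \ref{alg:boykovouter} returns precisely when a full pass over the candidate values performs no such acceptance. Thus every pass but the last lowers $F_\w(\hat{\mu})$ by at least $\tau$, so the number of passes is at most $1+\big(F_\w(\hat{\mu}_{\mathrm{init}})-\inf_\mu F_\w(\mu)\big)/\tau$. The initializer sets every coordinate to $\bar{Y}$, so $\|D\hat{\mu}\|_\w=0$ and $F_\w(\hat{\mu}_{\mathrm{init}})=\tfrac12\sum_i(Y_i-\bar{Y})^2$; since $\bar{Y}$ lies within the data range up to $O(\delta)$ rounding, each $|Y_i-\bar{Y}|=O(Y_{\max}-Y_{\min})$, giving $F_\w(\hat{\mu}_{\mathrm{init}})=O\big(n(Y_{\max}-Y_{\min})^2\big)$. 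As $F_\w \geq 0$, the number of passes is $1+O\big(n(Y_{\max}-Y_{\min})^2/\tau\big)$.

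Multiplying the three factors yields total runtime
\[
\Big(1+O\big(n(Y_{\max}-Y_{\min})^2/\tau\big)\Big)\cdot O\big((Y_{\max}-Y_{\min})/\delta\big)\cdot O\big(|E|^3\big)
=O\!\left(\frac{n|E|^3(Y_{\max}-Y_{\min})^3}{\delta\tau}\right),
\]
the stated term dominating whenever $\tau \lesssim n(Y_{\max}-Y_{\min})^2$ (the regime of interest, where the tolerance is small). The one point requiring care is the descent bound: it relies on each accepted expansion lowering $F_\w$ by the full threshold $\tau$ (enforced by the acceptance test) and on the constant initializer making the starting objective a pure residual term, uniformly bounded by the data range.
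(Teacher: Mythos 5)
Your proposal is correct and follows essentially the same argument as the paper: a descent bound of $O(n(Y_{\max}-Y_{\min})^2/\tau)$ on the outer passes from the initial constant-fit objective, $O((Y_{\max}-Y_{\min})/\delta)$ candidate values per pass, and an $O(|E|^3)$ min-cut cost on the augmented graph of size $O(|E|)$. The only difference is that you spell out the graph-size accounting and the caveat about the additive $+1$ in the pass count, which the paper elides.
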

\begin{proposition}\label{prop:minimum}
Among all $\delta\Z$-expansions of $\hat{\mu}$ with new value $c$,
the vector $\tilde{\mu}$ returned by Algorithm \ref{alg:boykovinner} has
lowest objective value in (\ref{eq:objw}). The estimate $\hat{\mu}$
returned by Algorithm \ref{alg:boykovouter} is a
$(\tau,\delta\Z)$-local-minimizer of (\ref{eq:objw}).
\end{proposition}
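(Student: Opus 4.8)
The plan is to prove the two assertions in turn, first analyzing the inner subroutine (Algorithm \ref{alg:boykovinner}) and then the outer loop (Algorithm \ref{alg:boykovouter}). For the first assertion, I would set up a correspondence between s-t cuts $(S,T)$ of $G_{c,\hat{\mu}}$ with $\s \in S$, $\t \in T$ and $\delta\Z$-expansions $\tilde{\mu}$ of $\hat{\mu}$ with new value $c$, via the rule $\tilde{\mu}_i=c$ iff $i \in T$, and show that the weight of such a cut, minimized over the placements of the auxiliary vertices $\a_{i,j}$, equals $F_\w(\tilde{\mu})$; the claim then follows since the returned $\tilde{\mu}$ corresponds to a minimum cut. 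First I would account for the data-fidelity terms: for each $i$, exactly one of $\{\s,i\}$, $\{\t,i\}$ crosses the cut, contributing $\frac12(Y_i-c)^2$ when $i \in T$ and $\frac12(Y_i-\hat{\mu}_i)^2$ when $i \in S$, while the weight $\infty$ on $\{\t,i\}$ when $\hat{\mu}_i=c$ merely forces $i \in T$, consistently giving $\tilde{\mu}_i=c=\hat{\mu}_i$; summing reproduces $\frac12\|Y-\tilde{\mu}\|^2$. Then I would treat the edge terms. When $\hat{\mu}_i=\hat{\mu}_j$, the single edge of weight $\lambda\w(i,j)\1\{\hat{\mu}_i \neq c\}$ is cut precisely when $\tilde{\mu}_i \neq \tilde{\mu}_j$, matching the penalty. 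When $\hat{\mu}_i \neq \hat{\mu}_j$, I would check, for each placement of $i,j$, that minimizing the three gadget edges over the side of $\a_{i,j}$ returns exactly $\lambda\w(i,j)\1\{\tilde{\mu}_i \neq \tilde{\mu}_j\}$; for instance when $\hat{\mu}_i,\hat{\mu}_j \neq c$, the optimal placement of $\a_{i,j}$ gives cut value $\lambda\w(i,j)$ unless $i,j \in T$ (value $0$), tracking exactly whether $\tilde{\mu}_i \neq \tilde{\mu}_j$. Summing over edges yields the penalty $\lambda\|D\tilde{\mu}\|_\w$, so the minimized cut weight is $F_\w(\tilde{\mu})$.

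For the second assertion, I would first observe that every iterate $\hat{\mu}$ has all entries in $\delta\Z \cap [Y_{\min},Y_{\max}]$: this holds at initialization, where $\hat{\mu}_i=\bar{Y}$, and is preserved since each accepted expansion only introduces a value $c \in \delta\Z \cap [Y_{\min},Y_{\max}]$. Finite termination (for $\tau>0$) follows since each update strictly decreases $F_\w$ by at least $\tau$ while $F_\w \geq 0$. Upon termination, i.e.\ a full pass with no update, the first assertion guarantees that for each $c \in \delta\Z \cap [Y_{\min},Y_{\max}]$ the best expansion $\tilde{\mu}$ with new value $c$ obeys $F_\w(\tilde{\mu}) > F_\w(\hat{\mu})-\tau$, hence so does every expansion with new value $c$. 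It then remains only to handle expansions whose new value lies outside $[Y_{\min},Y_{\max}]$, which the loop never examines.

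For such a value, say $c>Y_{\max}$, and any $\delta\Z$-expansion $\tilde{\mu}$ of $\hat{\mu}$ with value $c$, let $A=\{i:\tilde{\mu}_i=c\}$ and compare $\tilde{\mu}$ with the expansion $\tilde{\mu}'$ that instead sets the vertices of $A$ to the in-range value $c'=Y_{\max} \in \delta\Z \cap [Y_{\min},Y_{\max}]$. Since $Y_i \leq Y_{\max}<c$ for all $i$, each $i \in A$ satisfies $|Y_i-c'|=Y_{\max}-Y_i \leq c-Y_i=|Y_i-c|$, so no data term increases; edges within $A$ and within its complement contribute identically, while any edge $\{i,j\}$ with $i \in A$, $j \notin A$ is penalized under $\tilde{\mu}$ (as $c \neq \hat{\mu}_j$, $\hat{\mu}_j$ being in range) but only possibly under $\tilde{\mu}'$, so no edge penalty increases either. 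Thus $F_\w(\tilde{\mu}) \geq F_\w(\tilde{\mu}') > F_\w(\hat{\mu})-\tau$ by the in-range case, with the case $c<Y_{\min}$ symmetric via $c'=Y_{\min}$. Combining the in-range and out-of-range cases shows no $\delta\Z$-expansion reduces $F_\w$ by more than $\tau$, so $\hat{\mu}$ is a $(\tau,\delta\Z)$-local-minimizer. I expect the main obstacle to be the edge-term case analysis of the auxiliary-vertex gadget in the first assertion: one must verify that minimizing over the placement of each $\a_{i,j}$ reproduces the $0/1$-valued penalty in every configuration of $(\hat{\mu}_i,\hat{\mu}_j,c)$ and of the sides of $i,j$, which is precisely where the choice of the three gadget weights is essential; the out-of-range reduction is the other delicate point, but comparatively short.
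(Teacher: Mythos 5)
Your proposal is correct, and its overall skeleton (inner subroutine returns the best expansion; termination then gives local minimality) matches the paper's, but the execution differs in two substantive ways. For the first assertion, the paper does not re-derive the graph-cut gadget at all: it verifies that the edge costs $\lambda\w(i,j)\1\{\mu_i \neq \mu_j\}$ satisfy the triangle inequality $\1\{a \neq c\} \leq \1\{a \neq b\} + \1\{b \neq c\}$ and then simply cites Theorem 5.4 and Corollary 5.5 of Boykov--Veksler--Zabih. You instead prove that result from scratch via the cut--expansion correspondence, which is sound; note that the triangle inequality the paper checks is exactly what makes your gadget case analysis close (e.g.\ when $i,j \in S$ and $\a_{i,j} \in T$, the cut value $\lambda\w(i,j)(\1\{\hat{\mu}_i \neq c\}+\1\{\hat{\mu}_j \neq c\}) \geq \lambda\w(i,j)$ is precisely that inequality). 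The paper's route is shorter and leans on the cited literature; yours is self-contained but would require writing out all the remaining configurations. For the second assertion, the paper handles the in-range values exactly as you do, but disposes of out-of-range values with only the phrase ``and hence also for every new value $c \in \delta\Z$''; your explicit reduction (replace an expansion to $c > Y_{\max}$ by the same expansion to $Y_{\max}$, checking that no data term and no edge penalty increases) is the missing justification, so on this point your write-up is actually more complete than the paper's.

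One small technical point to repair: $Y_{\max}$ in the algorithm is the maximum of $Y$ \emph{rounded to} $\delta\Z$, so your claim $Y_i \leq Y_{\max}$ can fail by up to $\delta/2$. The argument survives because any out-of-range $c \in \delta\Z$ with $c > Y_{\max}$ satisfies $c \geq Y_{\max}+\delta$, while $Y_i \leq Y_{\max}+\delta/2$, whence $|Y_i - Y_{\max}| \leq |Y_i - c|$ still holds for every $i$; the edge-penalty comparison is unaffected. With that adjustment (and its mirror at $Y_{\min}$), your proof is complete.
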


In particular, if $Y_{\max}-Y_{\min}$, $1/\delta$, and
$1/\tau$ are polynomial in $n$, then Algorithm \ref{alg:boykovouter}
is polynomial-time in $n$. We will use the Boykov-Kolmogorov algorithm
\cite{boykovkolmogorov} instead of Edmonds-Karp or Dinic to solve minimum s-t
cut. This has slower worst-case runtime but is much faster in practice on our
tested examples. We have found Algorithm \ref{alg:boykovouter} to be fast in
practice, even with $\tau=0$, and we discuss empirical
runtime in Section \ref{subsec:runtime}.

The $l_2$ vertex-cost $(Y_i-\mu_i)^2$
and $l_0$ edge cost $\1\{\mu_i \neq \mu_j\}$ of (\ref{eq:objl0}) and
(\ref{eq:objw}) are not intrinsic to this
algorithm, and the same method may be applied to approximately minimize
\[F(\mu)=\sum_{i=1}^n c_i(Y_i,\mu_i)+\sum_{\{i,j\} \in E} c_{i,j}(\mu_i,\mu_j)\]
for any vertex cost functions $c_i$ and edge cost functions $c_{i,j}$ such that
each $c_{i,j}$ satisfies a triangle inequality. Thus the algorithm is 
easily applicable to other likelihood models and forms of edge penalties.

\section{Theoretical guarantees for $l_0$ denoising}\label{sec:l0}
In this section, we describe squared-error risk guarantees for
$\hat{\mu}$ (exactly or approximately) minimizing (\ref{eq:objl0}).
Although these results are corollaries of
those in Section \ref{sec:weighting} for the weighted objective
(\ref{eq:objw}), we state them here separately as they are simpler to understand
and also provide a basis for comparison with total-variation denoising
discussed in the next section. We defer discussion of the proofs to Section
\ref{sec:weighting}.

Recall Definition \ref{def:localmin} of $(\tau,\delta\Z)$-local-minimizers,
which include both the exact minimizer and the estimator
computed by Algorithm \ref{alg:boykovouter}.
A sparsity-oracle inequality for any
such minimizer holds when the penalty $\lambda$ in (\ref{eq:objl0})
is set to a ``universal'' level $C\sigma^2\log |E|$:
\begin{theorem}\label{thm:oraclel0}
Let $\delta \leq \sigma/\sqrt{n}$ and $\tau \leq \sigma^2$. For any $\eta>0$,
there exist constants $C_\eta,C_\eta'>0$ depending only on $\eta$ such that
if $\lambda \geq C_\eta\sigma^2\log |E|$ and $\hat{\mu}$ is any
$(\tau,\delta\Z)$-local-minimizer of (\ref{eq:objl0}), then
\begin{equation}\label{eq:oraclel0}
\E[\|\hat{\mu}-\mu_0\|^2] \leq \inf_{\mu \in \R^n}
(1+\eta)\|\mu-\mu_0\|^2+C_\eta'\lambda\max(\|D\mu\|_0,1).
\end{equation}
\end{theorem}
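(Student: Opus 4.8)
The plan is to prove a deterministic (pointwise) oracle inequality relating $F_0(\hat\mu)$ to $F_0$ of a competitor, then to control the resulting Gaussian cross-term by a uniform chi-square bound and integrate over the noise. Fix an arbitrary competitor $\mu \in \R^n$ (the infimum will be taken at the very end), and let $\bar\mu \in (\delta\Z)^n$ be its coordinatewise rounding to the nearest multiple of $\delta$. Since $\delta \le \sigma/\sqrt{n}$ we have $\|\bar\mu-\mu\|^2 \le n\delta^2/4 \le \sigma^2/4$, and rounding cannot split equal coordinates, so $\|D\bar\mu\|_0 \le \|D\mu\|_0$. Writing $Y=\mu_0+\eps$, it then suffices to bound $\|\hat\mu-\mu_0\|^2$ in terms of $\|\bar\mu-\mu_0\|^2$ and $\|D\bar\mu\|_0$.

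\emph{Deterministic basic inequality.} The first step is to extract from the $(\tau,\delta\Z)$-local-minimizer property the inequality
\begin{equation}\label{eq:basicineq}
\tfrac12\|Y-\hat\mu\|^2 + \lambda\|D\hat\mu\|_0 \le \tfrac12\|Y-\bar\mu\|^2 + 2\lambda\|D\bar\mu\|_0 + (\|D\bar\mu\|_0+1)\tau,
\end{equation}
which is the adaptation of the Boykov--Veksler--Zabih expansion-move guarantee to our continuous, discretized setting. Let $c_1,\dots,c_M \in \delta\Z$ be the distinct values of $\bar\mu$; since $\bar\mu$ is constant on the connected components obtained by deleting its $\|D\bar\mu\|_0$ active edges, $M \le \|D\bar\mu\|_0+1$. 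For each $\ell$ let $\mu^{(\ell)}$ be the $\delta\Z$-expansion of $\hat\mu$ with new value $c_\ell$ that sets $\mu^{(\ell)}_i=c_\ell$ exactly when $\bar\mu_i=c_\ell$ and leaves $\mu^{(\ell)}_i=\hat\mu_i$ otherwise. Applying the local-minimizer inequality to each $\mu^{(\ell)}$ and summing gives $\sum_\ell[F_0(\hat\mu)-F_0(\mu^{(\ell)})] \le M\tau$. The vertex part telescopes exactly: each vertex is reassigned its value $\bar\mu_i$ in precisely one expansion, so the vertex contributions sum to $\tfrac12\|Y-\hat\mu\|^2-\tfrac12\|Y-\bar\mu\|^2$. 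For the edge part, one checks the pointwise inequality $\sum_\ell \1\{\mu^{(\ell)}_i\neq\mu^{(\ell)}_j\} \le 2\,\1\{\bar\mu_i\neq\bar\mu_j\}+(M-1)\1\{\hat\mu_i\neq\hat\mu_j\}$ by a short case analysis on whether the edge is active in $\bar\mu$ and in $\hat\mu$; the factor $2$ here is exactly the approximation constant. Rearranging yields \eqref{eq:basicineq}.

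\emph{Reduction to a Gaussian process.} Expanding the squares in \eqref{eq:basicineq} and cancelling $\tfrac12\|\eps\|^2$ gives
\begin{equation}\label{eq:afterexpand}
\tfrac12\|\hat\mu-\mu_0\|^2 + \lambda\|D\hat\mu\|_0 \le \tfrac12\|\bar\mu-\mu_0\|^2 + \langle\eps,\hat\mu-\bar\mu\rangle + 2\lambda\|D\bar\mu\|_0 + (\|D\bar\mu\|_0+1)\tau.
\end{equation}
The crucial term is $\langle\eps,\hat\mu-\bar\mu\rangle$. The key structural observation is that $\hat\mu-\bar\mu$ is constant on the common refinement $\mathcal P$ of the two partitions, obtained by deleting the union of the active edge sets of $\hat\mu$ and $\bar\mu$; thus $\hat\mu-\bar\mu$ lies in the subspace $V_{\mathcal P}$ of signals constant on the blocks of $\mathcal P$, of dimension at most $k+1$ with $k:=\|D\hat\mu\|_0+\|D\bar\mu\|_0$. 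Writing $\Pi_{\mathcal P}$ for orthogonal projection onto $V_{\mathcal P}$,
\begin{equation}\label{eq:cs}
\langle\eps,\hat\mu-\bar\mu\rangle = \langle \Pi_{\mathcal P}\eps,\hat\mu-\bar\mu\rangle \le \|\Pi_{\mathcal P}\eps\|\,(\|\hat\mu-\mu_0\|+\|\bar\mu-\mu_0\|),
\end{equation}
after which two applications of $xy \le \tfrac a2 x^2+\tfrac1{2a}y^2$, one factor tuned below $1$ (to be absorbed into $\tfrac12\|\hat\mu-\mu_0\|^2$ on the left) and the other tuned to $\eta$ (to produce the allowed $(1+\eta)$ factor on $\|\bar\mu-\mu_0\|^2$), reduce everything to controlling $\|\Pi_{\mathcal P}\eps\|^2$.

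\emph{Uniform chi-square bound and conclusion.} The main obstacle is to bound $\|\Pi_{\mathcal P}\eps\|^2$ simultaneously over all partitions $\mathcal P$ that can arise, since $\hat\mu$, and hence $\mathcal P$ and $k$, depend on $\eps$. For fixed $\mathcal P$ with $m$ blocks, $\|\Pi_{\mathcal P}\eps\|^2 \sim \sigma^2\chi^2_m$, so the Laurent--Massart tail gives $\P(\|\Pi_{\mathcal P}\eps\|^2 > C\sigma^2(m+u)) \le e^{-u}$ for a universal $C$. A partition arising from deleting $j$ edges is indexed by an edge subset of size $j$, of which there are at most $\binom{|E|}{j}\le|E|^j$; choosing the level-dependent threshold $u_j=C_1(j+1)\log|E|+t$ makes the union bound $\sum_{j\ge0}\binom{|E|}{j}e^{-u_j}$ a convergent series bounded by $e^{-t}$ for a large enough universal $C_1$. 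Hence, with probability at least $1-e^{-t}$, simultaneously for all realizable $\mathcal P$, one has $\|\Pi_{\mathcal P}\eps\|^2 \lesssim \sigma^2((k+1)\log|E|+t)$. Substituting into \eqref{eq:cs} and \eqref{eq:afterexpand}, the resulting $\sigma^2\|D\hat\mu\|_0\log|E|$ is absorbed into $\lambda\|D\hat\mu\|_0$ on the left provided $\lambda \ge C_\eta\sigma^2\log|E|$, while the terms $\sigma^2\|D\bar\mu\|_0\log|E|$, $(\|D\bar\mu\|_0+1)\tau$ (using $\tau\le\sigma^2$), and the $\sigma^2/4$ rounding error are all $\lesssim_\eta \lambda\max(\|D\mu\|_0,1)$. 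This gives, on the event of probability $\ge 1-e^{-t}$,
\begin{equation}\label{eq:hp}
\|\hat\mu-\mu_0\|^2 \le (1+\eta)\|\mu-\mu_0\|^2 + C_\eta'\lambda\max(\|D\mu\|_0,1) + C_\eta\sigma^2 t.
\end{equation}
Since the event in \eqref{eq:hp} is uniform in $\mu$ and the bound holds for every $t\ge0$, integrating the tail $\P(\|\hat\mu-\mu_0\|^2-(1+\eta)\|\mu-\mu_0\|^2-C_\eta'\lambda\max(\|D\mu\|_0,1) > C_\eta\sigma^2 t)\le e^{-t}$ over $t$ adds only $O(\sigma^2)\lesssim\lambda$ to the expectation; taking the infimum over $\mu$ yields \eqref{eq:oraclel0}. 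I expect the uniform chi-square step---balancing the combinatorial $|E|^j$ count against the tail so as to retain a $\log|E|$ (rather than $\log n$) penalty while preserving an arbitrary $(1+\eta)$ constant---to be the most delicate part.
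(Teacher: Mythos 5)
Your proof is correct, and its skeleton mirrors the paper's: a deterministic factor-2 expansion-move inequality (the paper's Lemma \ref{lemma:factor2approx}, proved the same way by summing the local-minimizer inequality over the constant pieces of the rounded competitor), a basic inequality with a projection/Cauchy--Schwarz step (as in Lemma \ref{lemma:oracleinequalityprob}), a uniform bound on the projected noise, and a conversion to expectation. Where you genuinely diverge is in the last two steps, essentially because the paper proves Theorem \ref{thm:oraclel0} as a corollary of the edge-weighted Theorem \ref{thm:oraclew}. For the uniform bound, the paper's Lemma \ref{lemma:badcut} draws a random spanning tree from the distribution of a weighting $\r \in \ST(G)$, moves the expectation over the tree inside the supremum by convexity/Jensen, and union-bounds only over subsets of the $n-1$ tree edges; you union-bound directly over all edge subsets of $G$ of each cardinality $j$, paying $\binom{|E|}{j} \le |E|^j$ against a level-dependent threshold $C_1(j+1)\log|E|$. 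Your counting argument is more elementary and fully adequate here, since for the unweighted penalty with $\lambda \gtrsim \sigma^2\log|E|$ the cardinality of a cut is the right complexity measure; what it cannot do is handle the weighted objective (\ref{eq:objw}), where a cut of small weight $\w(\partial\bS)$ may contain arbitrarily many light edges, nor does it give the sharper factor $\log\frac{e\cdot\w(E)}{\max(\w(\partial\bS),1)}$ needed for the $\log(|E|/s)$-scale penalty of Theorem \ref{thm:minimaxupperl0}---that is exactly what the spanning-tree argument buys. Second, because your union bound carries a free tail parameter $t$, you can pass to expectation by integrating the exponential tail; the paper's Lemma \ref{lemma:badcut} is stated at the single probability level $1-C'\w(E)^{-3}$, so it instead combines H\"older's inequality with the crude moment bound of Lemma \ref{lemma:qmoment}. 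Both conversions are valid, and yours is arguably cleaner in this unweighted setting. One small imprecision: the common refinement of the partitions induced by $\hat\mu$ and $\bar\mu$ is not literally the connected-component partition obtained by deleting the union of their active edge sets (its blocks need not be connected in $G$); but the deletion-based partition refines the value-based one, so projecting onto its (larger) subspace only increases the noise norm, and your union bound indexed by edge subsets then applies verbatim.
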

The upper bound in (\ref{eq:oraclel0}) trades off the edge-sparsity of $\mu$
and its approximation of the true signal $\mu_0$. Setting
$\lambda=C_\eta\sigma^2 \log |E|$ yields the guarantee (\ref{eq:oracleproperty})
described in the introduction. If $\mu_0$ is exactly edge-sparse with
$\|D\mu_0\|_0=s$, then evaluating (\ref{eq:oraclel0})
at $\mu=\mu_0$ yields a risk bound
of order $\sigma^2 s\log |E|$. When $s$ is known, we may obtain the tighter
guarantee (\ref{eq:minimaxbound}) by using a smaller penalty:
\begin{theorem}\label{thm:minimaxupperl0}
Let $\delta \leq \sigma/\sqrt{n}$ and $\tau \leq \sigma^2$.
There exist universal constants $C,C'>0$ such that for any $s \in [1,|E|]$,
if $\lambda \geq C\sigma^2(1+\log\frac{|E|}{s})$ and $\hat{\mu}$ is any
$(\tau,\delta\Z)$-local-minimizer of (\ref{eq:objl0}), then
\begin{equation}\label{eq:minimaxupperl0}
\sup_{\mu_0:\|D\mu_0\|_0 \leq s} \E[\|\hat{\mu}-\mu_0\|^2] \leq C'\lambda s.
\end{equation}
\end{theorem}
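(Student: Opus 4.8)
The plan is to pair a deterministic comparison inequality for $(\tau,\delta\Z)$-local-minimizers with a noise bound that is \emph{restricted to the sparsity scale $s$}, so that the penalty need only scale like $\sigma^2(1+\log\tfrac{|E|}{s})$ rather than $\sigma^2\log|E|$ as in Theorem \ref{thm:oraclel0}. Fix $\mu_0$ with $s^\ast:=\|D\mu_0\|_0\le s$, and let $\mu^\ast\in(\delta\Z)^n$ be the entrywise rounding of $\mu_0$ to the grid $\delta\Z$. Rounding never separates equal values, so $\|D\mu^\ast\|_0\le s^\ast\le s$ and $\mu^\ast$ has at most $L\le s+1$ distinct values, while $\|\mu^\ast-\mu_0\|^2\le n\delta^2\le\sigma^2$ since $\delta\le\sigma/\sqrt n$. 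For each distinct value $c_l$ of $\mu^\ast$, the relabeling that sets $\hat\mu_i=c_l$ on $\{i:\mu^\ast_i=c_l\}$ and leaves other coordinates fixed is a valid $\delta\Z$-expansion of $\hat\mu$, so by Definition \ref{def:localmin} it cannot lower $F_0$ by more than $\tau$. Summing these $L$ inequalities and accounting for edge costs — the key being to \emph{retain} the savings $-\lambda$ for each edge cut by $\hat\mu$ that lies inside a level set of $\mu^\ast$, of which there are at least $\|D\hat\mu\|_0-s^\ast$ — yields the deterministic bound
\[
\tfrac12\|\hat\mu-\mu_0\|^2\le\tfrac12\|\mu^\ast-\mu_0\|^2+\langle\eps,\hat\mu-\mu^\ast\rangle+3\lambda\|D\mu^\ast\|_0-\lambda\|D\hat\mu\|_0+L\tau.
\]
Writing $R=\|\hat\mu-\mu_0\|$ and $k=\|D\hat\mu\|_0$ and using $L\tau\le(s+1)\sigma^2$ and $\tau\le\sigma^2$, this reads $\tfrac12R^2\le\langle\eps,\hat\mu-\mu^\ast\rangle+3\lambda s-\lambda k+C\sigma^2 s$.

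The crux is the noise term. Since $\hat\mu-\mu^\ast$ is constant on the connected components of $G$ left after deleting the edges cut by $\hat\mu$ or by $\mu^\ast$ — a set of size at most $k+s$ — it lies in a subspace $V(E')$ of dimension at most $k+s+1$ for some edge set $E'$ with $|E'|\le k+s$. As $k$ is random, I would peel over $m:=k+s$ and control
\[
W_m:=\sup_{|E'|\le m}\ \sup_{v\in V(E'),\,\|v\|=1}\langle\eps,v\rangle=\sup_{|E'|\le m}\|P_{V(E')}\eps\|.
\]
Each $\|P_{V(E')}\eps\|^2$ is $\sigma^2$ times a $\chi^2$ with at most $m+1$ degrees of freedom, there are at most $\binom{|E|}{\le m}\le\exp(m\log\tfrac{e|E|}{m})$ choices of $E'$, and a union bound over $E'$ and over $m$ — with the extra exponent $m$ supplying summability in $m$ — gives an event of probability at least $1-e^{-t}$ on which, for all $m\ge1$ simultaneously, $W_m^2\le C\sigma^2\big(m\log\tfrac{e|E|}{m}+t\big)$. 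The decisive observation is that $\log\tfrac{e|E|}{k+s}\le\log\tfrac{e|E|}{s}=1+\log\tfrac{|E|}{s}$ for every $k\ge0$, so that $W_{k+s}^2\le C\sigma^2(k+s)\big(1+\log\tfrac{|E|}{s}\big)+C\sigma^2 t$, with the per-edge cost held at $\log\tfrac{|E|}{s}$ rather than $\log|E|$.

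Combining, on this event $\langle\eps,\hat\mu-\mu^\ast\rangle\le\|\hat\mu-\mu^\ast\|\,W_{k+s}\le(R+\sigma)W_{k+s}$, and the splitting $(R+\sigma)W_{k+s}\le\tfrac18R^2+CW_{k+s}^2+\sigma^2$ turns the bound above into
\[
\tfrac38R^2\le C\sigma^2(k+s)\big(1+\log\tfrac{|E|}{s}\big)+3\lambda s-\lambda k+C\sigma^2 s+C\sigma^2 t.
\]
Choosing the constant in the hypothesis $\lambda\ge C\sigma^2(1+\log\tfrac{|E|}{s})$ large enough that $C\sigma^2(1+\log\tfrac{|E|}{s})\le\tfrac14\lambda$ renders the coefficient of $k$ nonpositive, so the $k$-term drops and (using $\sigma^2\le\lambda\le\lambda s$) we obtain $R^2\le C'\lambda s+C''\sigma^2 t$ with probability at least $1-e^{-t}$. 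Equivalently $\P[R^2>C'\lambda s+v]\le e^{-v/(C''\sigma^2)}$, and integrating this tail gives $\E[\|\hat\mu-\mu_0\|^2]\le C'\lambda s+C''\sigma^2\le C\lambda s$, uniformly over $\{\mu_0:\|D\mu_0\|_0\le s\}$, as claimed.

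I expect the main obstacle to be the restricted noise control of the second step: bounding the noise projected onto the union of \emph{all} subspaces indexed by edge subsets of size up to the random quantity $k+s$, uniformly in $k$, while keeping the logarithmic factor at $\log\tfrac{|E|}{s}$. This is precisely what allows the smaller penalty here to succeed where the oracle-level penalty $\sigma^2\log|E|$ is needed for the fully adaptive inequality of Theorem \ref{thm:oraclel0}; everything else (the expansion-move bookkeeping that preserves the $-\lambda\|D\hat\mu\|_0$ term, and the routine tail integration) follows the machinery already underlying that theorem.
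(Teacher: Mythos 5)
Your proposal is correct, and it reaches the stated bound by a route that differs from the paper's in two substantive ways. The paper does not prove Theorem \ref{thm:minimaxupperl0} directly: it deduces it from the weighted result (Theorem \ref{thm:minimaxw}), whose noise control (Lemma \ref{lemma:badcut}) bounds $\sup_{\bS}\|P^{\bS}\eps\|^2$ over \emph{all} multi-cuts via a Karger-style random spanning tree argument --- conditioning on a tree $T$ drawn from the distribution associated with $\r\in\ST(G)$, using Jensen's inequality to pass from $\w(\partial\bS)$ to $|T\cap\partial\bS|$, and union-bounding over subsets of the $n-1$ tree edges. Your replacement --- a direct union bound over edge subsets $E'\subseteq E$ with $|E'|\le m$, each giving a subspace of dimension at most $m+1$, peeled over $m$ and evaluated at the random scale $m=k+s$ --- is more elementary and suffices here precisely because in the unweighted problem $\|D\hat{\mu}\|_0$ \emph{is} the cardinality of the cut; it would not extend to Theorem \ref{thm:minimaxw}, where a cut of small $\w$-weight can contain arbitrarily many edges, which is what forces the paper's spanning-tree device. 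The second difference is the passage from a high-probability bound to a bound in expectation: the paper's Lemma \ref{lemma:badcut} yields a single event of probability $1-C'\w(E)^{-3}$, so the paper must invoke H\"older's inequality together with a crude $q$-th moment bound (Lemma \ref{lemma:qmoment}) to control the risk on the complement; you instead thread the deviation parameter $t$ through the union bound, obtaining an exponential tail $\P[R^2>C'\lambda s+v]\le e^{-cv/\sigma^2}$ that integrates directly --- arguably cleaner, and it avoids Lemma \ref{lemma:qmoment} altogether. The remaining ingredients coincide with the paper's: your level-set expansion bookkeeping that retains the $-\lambda\|D\hat{\mu}\|_0$ term is exactly the mechanism of Lemma \ref{lemma:factor2approx} (the paper even gets constant $2$ where you get $3$, which is immaterial), the Cauchy--Schwarz step onto the subspace of vectors constant on the components left after deleting $\partial\hat{\bS}\cup\partial\bS^*$ mirrors the paper's use of the common refinement $\bS^\delta\vee\hat{\bS}$, and holding the logarithm at scale $\log\frac{e|E|}{s}$ via monotonicity of $x\mapsto x\log(e|E|/x)$ before absorbing the $k$-term into $-\lambda k$ is the same absorption performed in Lemma \ref{lemma:oracleinequalityprob}(b).
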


Theorems \ref{thm:oraclel0} and \ref{thm:minimaxupperl0} are analogous to
estimation guarantees in the sparse normal-means problem:
For estimating a signal $\mu_0 \in \R^n$ with at most $k$ nonzero entries,
asymptotically if $n \to \infty$ and $k/n \to 0$, then
\begin{equation}\label{eq:minimaxnormalmeans}
\inf_{\hat{\mu}} \sup_{\mu_0:\|\mu_0\| \leq k}
\E[\|\hat{\mu}-Y\|^2] \sim 2\sigma^2k\log \frac{n}{k}.
\end{equation}
This risk is achieved by $\hat{\mu}=\argmin_\mu \frac{1}{2}\|Y-\mu\|^2
+\lambda\|\mu\|_0$ for $\lambda=\sigma^2\log \frac{n}{k}$, corresponding to
entrywise hard-thresholding at level $\sqrt{2\lambda}$
\cite[Theorem 8.20]{johnstone}. Setting $\lambda=\sigma^2 \log n$
hard-thresholds instead at the universal level $\sqrt{2\sigma^2\log n}$, and
Lemma 1 of \cite{donohojohnstone} implies an oracle bound
\[\E[\|\hat{\mu}-\mu_0\|^2] \leq \inf_{\mu \in \R^n}
1.2\|\mu-\mu_0\|^2+\sigma^2(2\log n+1)(\|\mu\|_0+1)\]
for any true signal $\mu_0 \in \R^n$.

When there is an underlying graph $G$, the sparsity condition
$\|\mu_0\|_0 \leq k$ is a notion of
vertex sparsity, in contrast to our notion of edge-sparsity. 
The edge-sparsity of a ``typical'' piecewise-constant signal may be
graph-dependent---for example, if $G$ is a $K$-dimensional lattice graph with
side length $n^{1/K}$ and
$\mu_0$ consists of two constant pieces separated by a smooth boundary, then
$s \asymp n^{1-1/K}$. For such choices of $\mu_0$ and for $K \geq 2$,
the risk in (\ref{eq:minimaxupperl0}) grows polynomially in $n$ and does not
represent a parametric rate. On the other hand, vertex-sparse signals are also
edge-sparse for low-degree graphs. This containment may be used to show, when
$G$ has bounded average degree, that the above nonparametric rate is optimal
in a minimax sense over $\{\mu_0:\|D\mu_0\|_0 \leq s\}$:
\begin{theorem}\label{thm:minimaxlowerl0}
Suppose $G$ has average vertex degree $d$.
There exists a universal constant $c>0$ such that for any $s \in [4d,|E|]$,
\begin{equation}\label{eq:minimaxlowerl0}
\inf_{\hat{\mu}} \sup_{\mu_0:\|D\mu_0\|_0 \leq s} \E[\|\hat{\mu}-\mu_0\|^2]
\geq c\sigma^2\frac{s}{d}\left(1+\log \frac{|E|}{s}\right),
\end{equation}
where the infimum is taken over all possible estimators
$\hat{\mu}:=\hat{\mu}(Y)$.
\end{theorem}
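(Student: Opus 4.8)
The plan is to reduce the bound to a sparse Gaussian normal-means lower bound by embedding, inside the edge-sparse class $\{\mu_0:\|D\mu_0\|_0 \le s\}$, a rich family of signals that are ``vertex-sparse'' on a carefully chosen set of low-degree vertices. First I would isolate the low-degree vertices. Since $\sum_{i \in V}\deg(i)=2|E|=nd$, Markov's inequality applied to the degree sequence shows that $V':=\{i \in V:\deg(i) \le 2d\}$ satisfies $|V'| \ge n/2$. I then set $m:=\lfloor s/(2d)\rfloor$; the hypothesis $s \ge 4d$ gives $m \ge s/(4d)$, so $m \asymp s/d$, while $s \le |E|$ gives $m \le n/4 \le |V'|/2$.

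The key step is combinatorial. For \emph{any} $\mu$ with $\supp\mu \subseteq V'$ and $|\supp\mu| \le m$, every edge $\{i,j\}$ contributing to $\|D\mu\|_0$ must have an endpoint in $\supp\mu$, and therefore
\[\|D\mu\|_0 \le \sum_{i \in \supp\mu}\deg(i) \le 2d\,|\supp\mu| \le 2dm \le s.\]
Crucially, this bound does not require $\supp\mu$ to be an independent set; controlling the edge boundary by the degree sum is precisely what lets the construction use the \emph{entire} candidate set $V'$ (of size $\asymp n$), even for dense graphs such as the complete graph where no large independent set exists. Consequently the whole family $\{\mu:\supp\mu \subseteq V',\ |\supp\mu| \le m\}$ is contained in the edge-sparse class, so the minimax risk over $\{\mu_0:\|D\mu_0\|_0 \le s\}$ is bounded below by the minimax risk of the $m$-sparse normal-means problem on the index set $V'$ of dimension $N:=|V'|$.

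It then remains to invoke the standard minimax lower bound $\gtrsim \sigma^2 m\big(1+\log\frac{N}{m}\big)$ for that problem (the analogue of the rate in (\ref{eq:minimaxnormalmeans})), which I would obtain by the usual two constructions, both supported inside $V'$ and hence feasible by the display above. For the parametric term, Assouad's lemma applied to the hypercube $\{a\,\1_S:S \subseteq S_0\}$ for a fixed $S_0 \subseteq V'$ with $|S_0|=m$ and $a^2 \asymp \sigma^2$ yields a lower bound $\gtrsim \sigma^2 m$. For the logarithmic term, the sparse Varshamov--Gilbert bound (valid since $m \le N/2$) produces $M \ge \exp(c\,m\log\frac{N}{m})$ subsets of $V'$ of size $m$ with pairwise symmetric differences $\gtrsim m$; taking $a^2 \asymp \sigma^2\log\frac{N}{m}$ keeps the pairwise Kullback--Leibler divergences of order $m a^2/\sigma^2 = O(\log M)$, so Fano's inequality gives a lower bound $\gtrsim \sigma^2 m\log\frac{N}{m}$. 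Combining the two yields $\inf_{\hat\mu}\sup\,\E[\|\hat\mu-\mu_0\|^2] \gtrsim \sigma^2 m\big(1+\log\frac{N}{m}\big)$.

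Finally I would insert the arithmetic. From $N \ge n/2$ and $m \le s/(2d)$ we get $N/m \ge nd/s = 2|E|/s$, hence $\log\frac{N}{m} \ge \log\frac{|E|}{s}$, while $m \asymp s/d$. The combined bound is therefore $\gtrsim \sigma^2 \frac{s}{d}\big(1+\log\frac{|E|}{s}\big)$, as claimed. The main obstacle, and the point requiring care, is securing the correct logarithmic factor $\log\frac{|E|}{s}$ rather than the weaker $\log\frac{n}{s}$: this is exactly why one must \emph{not} restrict the support to an independent set, but instead bound $\|D\mu\|_0$ by the degree sum. Doing so lets the ambient dimension $N$ of the embedded normal-means problem be of order $n$ while each nonzero vertex costs only $\asymp d$ of the edge budget, so that $N/m \asymp |E|/s$ and the bound holds uniformly across all graphs of average degree $d$.
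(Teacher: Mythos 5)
Your proposal is correct and follows essentially the same route as the paper: the paper (in its proof of the lower bound of Theorem \ref{thm:minimaxw}, specialized to $\w \equiv 1$) likewise restricts to the low-degree vertex set $V'=\{i:\deg(i)\le 2d\}$ with $|V'|>n/2$, embeds the vertex-sparse class $\{\mu_0:\supp(\mu_0)\subseteq V',\ \|\mu_0\|_0\le \lfloor s/(2d)\rfloor\}$ into $\{\mu_0:\|D\mu_0\|_0\le s\}$ via the same degree-sum bound, and then invokes a sparse normal-means minimax lower bound. The only difference is cosmetic: the paper cites an existing result (Theorem 1(b) of Raskutti et al.) for the normal-means bound, whereas you re-derive it via Assouad and Fano with the sparse Varshamov--Gilbert construction.
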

\noindent When the average degree $d$ is not small, there is a gap
between (\ref{eq:minimaxupperl0}) and (\ref{eq:minimaxlowerl0}) of order $d$,
which we will discuss in Section \ref{sec:weighting}.

\begin{remark}\label{remark:smalls}
We assume $s \geq 4d$ for (\ref{eq:minimaxlowerl0}) so
that the result does not depend on the exact structure of near-minimum cuts
in $G$. For example, if vertices $\{1,\ldots,n-1\}$ are
connected in a single cycle and vertex $n$ is
connected to vertex 1 by a single edge, then for $s=1$, any $\mu_0$ with
$\|D\mu_0\|_0 \leq 1$ must be constant over vertices
$\{1,\ldots,n-1\}$ and take a possibly different value on vertex $n$. The
minimax risk over this class is then $2\sigma^2$, rather than
order $\sigma^2\log n$. Considering the graph tensor
product of this example with the complete graph on $d$ vertices, a
similar argument shows that a general lower bound must restrict to
$s \geq cd$ for some small constant $c>0$.
\end{remark}

While our main focus is estimation, let us
state a result relevant to testing:
\begin{theorem}\label{thm:nulll0}
Let $\delta \leq \sigma/\sqrt{n}$ and $\tau \leq \sigma^2$,
and suppose $\mu_0$ is constant over $G$. There
exist universal constants $C,C'>0$ such that if $\lambda \geq C\sigma^2\log |E|$
and $\hat{\mu}$ is any $(\tau,\delta\Z)$-local-minimizer of (\ref{eq:objl0}),
then
\[\P[\hat{\mu} \text{ is constant over } G] \geq 1-C'n^{-3}.\]
\end{theorem}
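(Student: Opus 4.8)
The plan is to argue by contradiction: assuming $\hat{\mu}$ is non-constant, I would play the local-minimality of $\hat{\mu}$ against the single $\delta\Z$-expansion that collapses every vertex to one constant value, and show that on a high-probability event this is impossible once $\lambda \gtrsim \sigma^2\log|E|$. Write $\eps=Y-\mu_0$, let $\mathbf{1}\in\R^n$ be the all-ones vector, and let $c\in\delta\Z$ be the nearest grid point to the sample mean $\bar{Y}$. Collapsing all vertices of $\hat{\mu}$ to $c$ is a valid $\delta\Z$-expansion, so Definition \ref{def:localmin} gives $F_0(\hat{\mu})-\tau \le F_0(c\mathbf{1})=\frac12\|Y-c\mathbf{1}\|^2$. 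Since $\delta\le\sigma/\sqrt{n}$, the rounding cost satisfies $\frac12\|Y-c\mathbf{1}\|^2-\frac12\|Y-\bar{Y}\mathbf{1}\|^2=\frac{n}{2}(c-\bar{Y})^2 \le \sigma^2/8$, so rearranging and using $\|D\hat{\mu}\|_0\ge 1$,
\[\lambda \le \lambda\|D\hat{\mu}\|_0 \le \tfrac12\|Y-\bar{Y}\mathbf{1}\|^2-\tfrac12\|Y-\hat{\mu}\|^2+\tfrac{\sigma^2}{8}+\tau.\]

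Next I would convert the residual gain on the right into a Gaussian projection. Let $\Pi$ be the partition of $V$ into the level sets of $\hat{\mu}$, and let $P_\Pi$ and $P_{\mathbf{1}}$ be the orthogonal projections onto the spaces of vectors constant on each block of $\Pi$ and onto the constants, respectively. Since $\hat{\mu}$ lies in the range of $P_\Pi$ while $P_\Pi Y$ is the closest such vector to $Y$, and $P_{\mathbf{1}}Y=\bar{Y}\mathbf{1}$, the Pythagorean identity gives $\frac12\|Y-\bar{Y}\mathbf{1}\|^2-\frac12\|Y-\hat{\mu}\|^2 \le \frac12\|(P_\Pi-P_{\mathbf{1}})Y\|^2$. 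Because $\mu_0$ is constant, $(P_\Pi-P_{\mathbf{1}})\mu_0=0$, so $(P_\Pi-P_{\mathbf{1}})Y=(P_\Pi-P_{\mathbf{1}})\eps$. Finally, letting $B\subseteq E$ be the set of edges cut by $\Pi$, so $|B|=\|D\hat{\mu}\|_0$, every $\Pi$-constant vector is constant on the connected components of $G\setminus B$; writing $P_B$ for the projection onto component-constant vectors (of rank at most $|B|+1$), nesting of ranges yields $\|(P_\Pi-P_{\mathbf{1}})\eps\|\le\|(P_B-P_{\mathbf{1}})\eps\|$. Combining, on the event that $\hat{\mu}$ is non-constant with $\|D\hat{\mu}\|_0=m\ge 1$,
\[\lambda m \le \tfrac12\max_{B\subseteq E,\,|B|=m}\|(P_B-P_{\mathbf{1}})\eps\|^2+\tfrac{\sigma^2}{8}+\tau.\]

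It then suffices to establish the complementary high-probability bound. For fixed $B$ with $|B|=m$, the vector $(P_B-P_{\mathbf{1}})\eps$ has rank at most $m$, so $\|(P_B-P_{\mathbf{1}})\eps\|^2$ is $\sigma^2$ times a $\chi^2$ variable with at most $m$ degrees of freedom, and a standard tail bound gives $\P[\frac12\|(P_B-P_{\mathbf{1}})\eps\|^2\ge\frac{\sigma^2}{2}(m+2\sqrt{mt}+2t)]\le e^{-t}$. I would take $t=t_m:=m\log|E|+5\log n$ and union-bound over the at most $\binom{|E|}{m}\le|E|^m$ choices of $B$; the per-$m$ failure probability is at most $|E|^m e^{-t_m}\le n^{-5}$, and summing over $1\le m\le|E|\le n^2$ gives total failure probability $\le n^{-3}$, supplying the constant $C'$. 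On the complementary event, using $\log n\lesssim\log|E|$ for connected $G$ with $n\ge 3$, the quantity $\frac{\sigma^2}{2}(m+2\sqrt{mt_m}+2t_m)+\frac{\sigma^2}{8}+\tau$ is at most $C_0\sigma^2 m\log|E|$ for a universal $C_0$ (here using $\tau\le\sigma^2$); choosing the constant $C$ in $\lambda\ge C\sigma^2\log|E|$ larger than $C_0$ makes $\lambda m$ strictly exceed this for every $m\ge 1$, contradicting the displayed inequality. Hence $m=0$ and $\hat{\mu}$ is constant on this event.

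The main obstacle is the Gaussian-process control in the last paragraph: the bound must hold simultaneously over all exponentially many edge subsets $B$, and it is the entropy of this family—roughly $\binom{|E|}{m}$ cuts per cut size $m$—that forces the $\log|E|$ factor in the penalty and dictates the $\chi^2$ deviation level $t_m$. This supremum bound is essentially the same device underlying Theorems \ref{thm:oraclel0} and \ref{thm:minimaxupperl0}, so I expect it to be extracted as a shared lemma rather than re-derived here.
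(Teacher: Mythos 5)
Your proof is correct, and its key probabilistic step takes a genuinely different route from the paper's. The paper proves Theorem \ref{thm:nulll0} as a corollary of the weighted Theorem \ref{thm:nullw}: there, $\hat{\mu}$ is compared against the rounding of $\mu_0$, the Cauchy--Schwarz machinery leading to (\ref{eq:oracleprob2}) gives $\lambda\|D\hat{\mu}\|_\w \le C''(1+\|P^{\bS}\eps\|^2)$, and the supremum is then handled by Lemma \ref{lemma:badcut}, a uniform bound over \emph{all} multi-cuts proved by conditioning on a random spanning tree $T$ drawn according to $\r \in \ST(G)$, applying Jensen's inequality, and union-bounding only over the $\binom{n-1}{l}$ subsets of tree edges. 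You instead union-bound directly over the $\binom{|E|}{m}\le |E|^m$ edge subsets of each cardinality $m$ with $\chi^2$ tails. This is sound for the present theorem: your preliminary steps (collapse to the rounded $\bar{Y}$ with the $\sigma^2/8$ rounding cost, the Pythagorean reduction to $\frac12\|(P_\Pi-P_{\mathbf{1}})\eps\|^2$, the nesting $\|(P_\Pi-P_{\mathbf{1}})\eps\|\le\|(P_B-P_{\mathbf{1}})\eps\|$, and the rank count for $G\setminus B$) are all valid, and in the unweighted problem the cut cardinality and the penalty coincide, so your entropy term $m\log|E|$ is exactly absorbed by $\lambda \ge C\sigma^2\log|E|$. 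What the paper's heavier lemma buys is the weighted generalization: when $\w \ge \r \in \ST(G)$, a multi-cut of small weighted boundary $\w(\partial\bS)$ can cut arbitrarily many edges (e.g.\ effective-resistance weights on the complete graph are $2/n$ per edge), so counting cuts by cardinality in $E$ gives an entropy not controlled by the penalty; replacing cardinality-in-$E$ by cardinality-in-$T$ is what makes Theorem \ref{thm:nullw} and the $\log(\w(E)/s)$-level penalties of Theorem \ref{thm:minimaxw} go through. Your closing guess is right in spirit---the paper does isolate the supremum bound as a shared lemma (Lemma \ref{lemma:badcut}) used by all the upper-bound theorems---but that lemma is proved by the spanning-tree argument rather than by the naive subset count you use.
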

This implies that we may test the null hypothesis
\begin{equation}\label{eq:null}
H_0:\mu_0 \text{ is constant}
\end{equation}
by setting $\lambda \asymp \sigma^2\log |E|$ and rejecting $H_0$ if
$\hat{\mu}$ is not constant.
Denoting by $P^\perp$ the orthogonal projection onto the space orthogonal to
the all-1's vector, since
\[\min_{\mu:\mu \text{ is constant}} \|\mu-\mu_0\|^2=\|P^\perp \mu_0\|^2,\]
the risk bound (\ref{eq:minimaxupperl0}) (or more precisely, Lemma
\ref{lemma:oracleinequalityprob}(b) which establishes an analogous bound in
probability) implies that this test can distinguish a non-constant alternative
$\mu_0$ with probability approaching 1 as long as
$\|P^\perp \mu_0\|^2 \geq C\sigma^2\|D\mu_0\|_0 \log |E|$, for a universal
constant $C>0$.

\section{Comparison with $l_1$/total-variation denoising}
\label{sec:TV}
We compare the guarantees of the preceding section with those attainable by
$\hat{\mu}$ minimizing (\ref{eq:objTV}). Theoretical risk bounds for the
TV-penalized estimator have been established for both piecewise-constant classes
$\{\mu_0:\|D\mu_0\|_0 \leq s\}$ and bounded-variation classes
$\{\mu_0:\|D\mu_0\|_1 \leq s\}$, and we focus our comparison on the
former. We will empirically explore in Section \ref{sec:simulations} some
trade-offs between the $l_0$ and TV approaches for signals that are both
piecewise-constant and of small total-variation norm.

One general risk bound for $\hat{\mu}$ minimizing (\ref{eq:objTV})
was established in \cite{hutterrigollet}. For an arbitrary graph $G$, let
$D:\R^n \to \R^E$ be its vertex-edge incidence matrix, $S=D^\dagger$ the
Moore-Penrose pseudo-inverse of $D$,
and $\rho$ the maximum Euclidean norm of any column
of $S$. Theorem 2 of \cite{hutterrigollet}
implies, for the estimator $\hat{\mu}$
minimizing (\ref{eq:objTV}) with the choice $\lambda=\sigma\rho\sqrt{2(1+\log
(|E|/\delta))}$, and for any $\mu \in \R^n$, with
probability at least $1-2\delta$,
\begin{equation}\label{eq:oracleTV}
\|\hat{\mu}-\mu_0\|^2 \leq 
\|\mu-\mu_0\|^2+8\sigma^2\left(\frac{\rho^2\|D\mu\|_0}{\kappa^2}\left(1+\log
\frac{|E|}{\delta}\right)+\log \frac{e}{\delta}\right),
\end{equation}
where $\kappa$ is a compatibility constant bounded
as $\kappa^{-2} \leq 4\min(d,\|D\mu\|_0)$ and $d$ is the mean vertex degree of
$G$. (The result of \cite{hutterrigollet} is more general, involving
both $\|D\mu\|_1$ and $\|D\mu\|_0$, and we have specialized to the
``pure-$l_0$'' setting.)

An important difference between this result and Theorem \ref{thm:oraclel0}
is the appearance of
$\rho^2$, which is graph-dependent. Assuming $G$ has small average
degree $d$, the above guarantee is similar to Theorem \ref{thm:oraclel0}
if $\rho^2$ is small. It is shown in \cite{hutterrigollet}
that $\rho^2 \lesssim 1$ for 3-D (and higher-dimensional) lattice graphs and
$\rho^2 \lesssim \log n$ for 2-D lattice graphs, indicating that
$\hat{\mu}$ is nearly rate-optimal over $\{\mu_0:\|D\mu_0\|_0 \leq s\}$ for
these graphs. However, for example, when $G$ is the linear chain,
$\rho^2 \asymp n$ and the bound (\ref{eq:oracleTV}) is larger than
those of the preceding section by a factor of $n$.

More specialized analyses were performed for the linear chain
in \cite{dalalyanetal,linetal}, where sharper results were obtained that depend
on the minimum spacing $\Delta(\mu_0)$
between two changepoints of $\mu_0$. More precisely,
denoting by $1 \leq i_1<\ldots<i_s<n$ the values $i$ for which $\mu_{0,i}
\neq \mu_{0,i+1}$ and letting $i_0:=0$ and $i_{s+1}:=n$, define
$\Delta(\mu_0):=\min_{0 \leq r \leq s} i_{r+1}-i_r$. Then Theorem 4 of
\cite{linetal} shows, if $\|D\mu_0\|_0=s$ and
$\lambda=\sigma(n\Delta(\mu_0))^{1/4}$, then
\[\E[\|\hat{\mu}-\mu_0\|^2] \lesssim \sigma^2s\left((\log s+\log\log n)\log n
+\sqrt{n/\Delta(\mu_0)}\right).\]
If $\Delta(\mu_0) \gtrsim n/(s+1)$ so that changepoints are
nearly equally-spaced, then this bound is of order $s^{3/2}$ times logarithmic
factors, and furthermore this has been improved to the optimal bound
$\E[\|\hat{\mu}-\mu_0\|^2] \lesssim \sigma^2s\log(1+n/s)$ in
\cite{guntuboyinaetal}. However,
if $\Delta(\mu_0) \lesssim n^\alpha$ for any $\alpha<1$, then the above
bound differs from the guarantee of Theorem \ref{thm:minimaxupperl0} by a
factor of roughly $n^{(1-\alpha)/2}$, and in the worst case this
suboptimality is of order $\sqrt{n}$.

It has been conjectured, for example in Remark 3 of \cite{linetal} and
Remark 2.3 of \cite{guntuboyinaetal}, that this suboptimality is not an
artifact of the theoretical analysis, but rather that the TV-penalized estimate
$\hat{\mu}$ exhibits a slower rate of convergence when the equal spacing
condition $\Delta(\mu_0) \gtrsim n/(s+1)$ is not met. We provide in this
section a theoretical validation of this conjecture; proofs are given in
Appendix \ref{appendix:TV}.

First, suppose the true signal $\mu_0$ is constant and equal to zero:
\begin{theorem}\label{thm:TVnull}
Let $G$ be the linear chain graph with $n$ vertices, and
suppose $\mu_0=0$. There exists a constant $c>0$
such that for any fixed $\lambda \in [0,\sigma\sqrt{cn/\log n}]$,
if $\hat{\mu}$ is the minimizer of (\ref{eq:objTV}), then the following hold:
\begin{enumerate}[(a)]
\item For some constants $C,c'>0$, letting $\hat{k}:=\|D\hat{\mu}\|_0+1$ be 
the number of constant intervals of $\hat{\mu}$,
\[\P\left[\hat{k}>\frac{cn}{\max(\lambda^2/\sigma^2,1)\log n}\right]
 \geq 1-Ce^{-c'n/\max(\lambda^2/\sigma^2,1)}.\]
\item For some constant $c'>0$, the squared-error risk of $\hat{\mu}$ satisfies
\[\E[\|\hat{\mu}\|^2] \geq \frac{c'\sigma^2n}{\max(\lambda^2/\sigma^2,1)
(\log n)^4}.\]
\end{enumerate}
\end{theorem}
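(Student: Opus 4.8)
The plan is to bypass the taut-string characterization entirely and reduce the global optimization to independent local events by passing to cumulative sums. Since $\mu_0=0$ we have $Y_i=\eps_i$; set $W_i:=\sum_{j=1}^i \eps_j$ and $M_i:=\sum_{j=1}^i \hat{\mu}_j$ for $0 \le i \le n$, with $W_0=M_0=0$. The first-order optimality condition for (\ref{eq:objTV}) reads $Y-\hat{\mu}=\lambda D^\top s$ for some $s$ with $\|s\|_\infty \le 1$; summing the coordinate identity $\hat{\mu}_j-Y_j=-\lambda(D^\top s)_j$ over $j\le i$ telescopes on the chain to $M_i-W_i=-\lambda s_i$ for $1\le i\le n-1$, while $\sum_k (D^\top s)_k=0$ gives $M_n=W_n$. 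Hence the cumulative estimate lies in a tube of half-width $\lambda$ around the random walk,
\[
|M_i-W_i| \le \lambda \qquad (0 \le i \le n),
\]
and $\hat{\mu}_i=M_i-M_{i-1}$ is exactly the increment of $M$. This tube bound, derived purely from subdifferential calculus, is the only structural input I will need.

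The deterministic core is a local ``spike'' lemma. Fix $p<q<r$ and suppose $W_q-W_p \ge 4\max(\lambda,\sigma)$ and $W_q-W_r \ge 4\max(\lambda,\sigma)$. First, $\hat{\mu}$ cannot be constant on $\{p,\dots,r\}$: were it so, $M$ would be affine there, so $M_q$ would be a convex combination of $M_p,M_r$, giving $M_q \le \max(M_p,M_r) \le \max(W_p,W_r)+\lambda$; together with $M_q \ge W_q-\lambda$ this forces $W_q-\max(W_p,W_r)\le 2\lambda$, contradicting the hypothesis (as $4\max(\lambda,\sigma)>2\lambda$). Thus the block contains a changepoint. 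Second, using only the up-portion, $M_q-M_p \ge (W_q-\lambda)-(W_p+\lambda) \ge 4\max(\lambda,\sigma)-2\lambda \ge 2\max(\lambda,\sigma)$, so by Cauchy--Schwarz
\[
\sum_{i=p+1}^{q} \hat{\mu}_i^2 \ge \frac{(M_q-M_p)^2}{q-p} \ge \frac{4\max(\lambda,\sigma)^2}{q-p}.
\]
With $q-p \asymp b:=\max(\lambda^2/\sigma^2,1)$ and the identity $\sigma\sqrt{b}=\max(\lambda,\sigma)$, this contribution is $\gtrsim \sigma^2$.

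The probabilistic step assembles many such spikes. Partition $\{1,\dots,n\}$ into $N \asymp n/b$ consecutive blocks of length $m=Cb$ for a large universal constant $C$, and on each block take $p,q,r$ to be its left end, midpoint, and right end. Then $W_q-W_p$ and $W_r-W_q$ are independent centered Gaussians of variance $\sigma^2 m/2=C\sigma^2 b/2$, so the spike event $\{W_q-W_p\ge 4\sigma\sqrt b,\ W_r-W_q\le -4\sigma\sqrt b\}$ has standardized threshold $4\sqrt2/\sqrt C$ and hence probability at least a universal $p_0>0$ once $C$ is large. These events depend on disjoint increments, so they are independent, and a Chernoff bound produces at least $p_0N/2 \gtrsim n/b$ good blocks with probability $\ge 1-C'e^{-c'n/b}$. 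On this event the good blocks supply $\gtrsim n/b$ distinct changepoints (one per disjoint block), giving part (a), while their disjoint up-portions supply $\|\hat{\mu}\|^2 \gtrsim \sigma^2\cdot n/b$. For part (b), the hypothesis $\lambda \le \sigma\sqrt{cn/\log n}$ forces $b \le cn/\log n$, hence $n/b \gtrsim \log n$, so the good event has probability $\ge \tfrac12$ and $\E[\|\hat{\mu}\|^2] \gtrsim \sigma^2 n/b$.

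The step I expect to demand the most care is the uniform calibration of the three quantities --- block length, spike height, and per-block energy --- so that a single choice of constants simultaneously forces a changepoint, delivers $\gtrsim\sigma^2$ of fitted energy, and keeps the spike probability bounded below across both regimes $\lambda \le \sigma$ and $\lambda>\sigma$; phrasing every threshold through $\max(\lambda,\sigma)=\sigma\sqrt b$ is what makes this uniform, and the degenerate case $\lambda=0$ (where $\hat{\mu}=Y$) is handled trivially. The only other subtlety is passing from the high-probability bound to the expectation in (b), which the range restriction on $\lambda$ resolves. I note that this route in fact proves $\hat{k}\gtrsim n/b$ and $\E[\|\hat{\mu}\|^2]\gtrsim \sigma^2 n/b$ with no logarithmic loss, so the factors $\log n$ and $(\log n)^4$ in the statement are slack one could remove.
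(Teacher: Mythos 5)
Your proposal is correct (up to routine technicalities: integer rounding of block lengths, and the small-$n$ adjustment that the paper itself makes by adjusting constants for $n$ below a fixed threshold), but it takes a genuinely different route from the paper's proof. The paper proves part (a) through the exact subgradient characterization of Lemma \ref{lemma:subgradient}: for a fixed partition $\bS$ into $k$ intervals it bounds $\P[Q^\bS Y \in \lambda Q^\bS D^T[-1,1]^{n-1}]$ by a Gaussian volume computation using the singular values of $Q^\bS D^T$ (equivalently, the chain Laplacian eigenvalues), then takes a union bound over all $\binom{n-1}{k-1}$ partitions; part (b) is obtained by conditioning on the induced partition and signs and invoking two further lemmas --- a second-moment bound for a Gaussian conditioned to lie in a convex cone (Lemma \ref{lemma:conebound}) and a lower bound on the lengths of local max/min intervals (Lemma \ref{lemma:localmaxmin}) --- and it is this conditioning route that produces the $\log n$ and $(\log n)^4$ losses. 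You instead use only the necessity half of the KKT conditions, in the form of the tube bound $|M_i-W_i|\le\lambda$ on partial sums (this is precisely the taut-string tube constraint, so you have re-derived rather than bypassed that characterization), and then localize: disjoint blocks of length $\asymp b:=\max(\lambda^2/\sigma^2,1)$, a deterministic spike lemma forcing both a changepoint and $\gtrsim\sigma^2$ of fitted energy in each good block, and independence of the block events (disjoint increments of $W$) plus Chernoff. All of these steps check out, and the calibration through $\max(\lambda,\sigma)=\sigma\sqrt{b}$ correctly unifies the two regimes. What your route buys: it is more elementary (no volume or singular-value computation, no cone-conditioning lemma) and strictly stronger, giving $\hat{k}\gtrsim n/b$ and $\E[\|\hat{\mu}\|^2]\gtrsim\sigma^2 n/b$ with no logarithmic loss, which would also sharpen the downstream Theorem \ref{thm:TValternative}. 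What the paper's route buys: Lemmas \ref{lemma:subgradient} and \ref{lemma:localmaxmin} are formulated for general bounded signals $\mu_0$ and are reused directly in the proof of Theorem \ref{thm:TValternative}, whereas your spike argument as written is tied to the null $\mu_0=0$; to replace the paper's machinery entirely one would still need Lemma \ref{lemma:localmaxmin} or a shifted-tube version of your spike lemma for nonzero signals.
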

Hence if $\mu_0=0$ and
$\lambda \asymp \sigma n^\alpha$ for any $\alpha<1/2$, then the number of 
changepoints and the squared-error risk of the TV-penalized estimator
$\hat{\mu}$ are (up to logarithmic factors) at least of order $n^{1-2\alpha}$
and $\sigma^2 n^{1-2\alpha}$, respectively.
As a consequence, we obtain the following lower bound in a minimax sense:
\begin{theorem}\label{thm:TValternative}
Let $G$ be the linear chain graph with $n$ vertices, and let
$\Delta(\mu_0)$ denote the minimum distance between two changepoints in $\mu_0$.
For each fixed $\lambda \geq 0$, let $\hat{\mu}^\lambda$ denote the minimizer
of (\ref{eq:objTV}) for this $\lambda$. Then there exists a constant
$c>0$ such that for any $s \in [2,n-1]$ and $\Delta \leq n/(s+1)$,
\[\inf_{\lambda \geq 0} \sup_{\mu_0:\|D\mu_0\|_0 \leq s,\,\Delta(\mu_0) \geq 
\Delta} \E[\|\hat{\mu}^\lambda-\mu_0\|^2] \geq
\frac{c\sigma^2}{(\log n)^5}\sqrt{\frac{sn}{\Delta}}.\]
\end{theorem}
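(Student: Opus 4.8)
The plan is to bound the minimax risk from below by exhibiting, for each fixed $\lambda \geq 0$, a signal $\mu_0$ in the class $\{\mu_0:\|D\mu_0\|_0 \leq s,\,\Delta(\mu_0)\geq\Delta\}$ on which $\hat{\mu}^\lambda$ has risk at least the stated target $\frac{c\sigma^2}{(\log n)^5}\sqrt{sn/\Delta}$. Guided by the heuristic balance between the variance of undersmoothing and the bias of oversmoothing, I split at the threshold $\lambda_*$ with $\lambda_*^2 \asymp \sigma^2\sqrt{n\Delta/s}$, handling small $\lambda$ by the null signal and large $\lambda$ by a localized ``bump'' signal.

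\textbf{Regime 1 (small $\lambda$).} For $\lambda \leq \lambda_*$ I take $\mu_0=0$, which lies in the class since it has no changepoints. Theorem \ref{thm:TVnull}(b) gives $\E[\|\hat{\mu}^\lambda\|^2] \gtrsim \sigma^4 n/(\max(\lambda^2,\sigma^2)(\log n)^4)$ throughout its admissible range $\lambda \leq \sigma\sqrt{cn/\log n}$. A direct computation shows that for $\lambda^2 \lesssim \sigma^2(\log n)\sqrt{n\Delta/s} = \lambda_*^2\log n$ this lower bound already exceeds the target, with the case $\lambda \leq \sigma$ immediate from the $\sigma^2 n/(\log n)^4$ bound; the constraint $\Delta \leq n/(s+1)$ is what keeps the relevant values of $\lambda$ safely inside the range of Theorem \ref{thm:TVnull}. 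Thus Regime 1 is essentially bookkeeping around the null result.

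\textbf{Regime 2 (large $\lambda$).} For $\lambda > \lambda_*$ I use a bump signal: partition the first $\asymp s\Delta$ vertices (which fit, since $s\Delta < n$) into $\asymp s/2$ blocks of length $\Delta$ at height $h$ alternating with length-$\Delta$ blocks at height $0$, and set $\mu_0=0$ elsewhere, giving at most $s$ changepoints and minimum spacing $\Delta$. I take $h \asymp \lambda/\Delta$. Let $g$ be the $\{0,1\}$-indicator of the raised blocks, so $\mu_0=h\,g$ and $\|g\|^2 \asymp s\Delta$. Since $\E\hat{\mu}^\lambda$ is deterministic, Jensen's inequality and a one-dimensional projection give
\[
\E[\|\hat{\mu}^\lambda-\mu_0\|^2] \geq \|\E\hat{\mu}^\lambda-\mu_0\|^2 \geq \frac{\langle \mu_0-\E\hat{\mu}^\lambda,\,g\rangle^2}{\|g\|^2}.
\]
So it suffices to prove the oversmoothing (undershoot) bound $\langle \mu_0-\E\hat{\mu}^\lambda,\,g\rangle \gtrsim \lambda s$, which yields risk $\gtrsim (\lambda s)^2/(s\Delta) = \lambda^2 s/\Delta \gtrsim \lambda_*^2 s/\Delta \asymp \sigma^2\sqrt{sn/\Delta}$, the target. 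Writing $Y-\hat{\mu}^\lambda = \lambda D^\top \hat{z}$ with $\hat{z}$ a subgradient of $\|\cdot\|_1$ at $D\hat{\mu}^\lambda$, and noting $\E\langle\eps,g\rangle=0$, the undershoot quantity equals $\lambda\,\E\langle \hat{z},Dg\rangle$, where $Dg$ is supported on the $\asymp s$ block boundaries with entries $\pm 1$. Hence the required bound is equivalent to $\E\langle \hat{z},Dg\rangle \gtrsim s$: at a constant fraction of the block boundaries the penalty must, in expectation, be active with the correct sign.

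The main obstacle is precisely this per-bump oversmoothing estimate, uniform in $\lambda > \lambda_*$. The difficulty is that the subgradient optimality conditions alone only bound $\langle \hat{z},Dg\rangle$ from above, and are consistent with no undershoot at all; one must actually control the nonlinear solution. The noiseless one-dimensional minimizer on a single raised block of height $h \asymp \lambda/\Delta$ keeps both jumps but undershoots the block value by $\asymp \lambda/\Delta$, saturating the two boundary subgradients at $\pm1$; the plan is to make this robust to noise via the taut-string/dual characterization of the fused lasso (the admissible ``tube'' of radius $\lambda$ caps how much $\hat{\mu}^\lambda$ can rise over a width-$\Delta$ block), or equivalently via a monotonicity-and-concentration argument on each block, using that the length-$\geq\Delta$ flat regions decouple the bumps, to certify that with probability bounded below both jumps of each bump persist with the correct sign. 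Finally I verify that the two regimes overlap: the oversmoothing bound in fact holds for all $\lambda^2 \gtrsim \lambda_*^2/(\log n)^5$, well below $\lambda_*^2$, while Regime 1 reaches up to $\lambda^2 \asymp \lambda_*^2\log n \gg \lambda_*^2/(\log n)^5$, so every $\lambda \geq 0$ falls into at least one regime. The generous $(\log n)^5$ in the statement is exactly the slack absorbing the logarithmic losses in Theorem \ref{thm:TVnull} and in this overlap.
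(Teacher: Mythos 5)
Your overall architecture coincides with the paper's: split at $\lambda_*^2 \asymp \sigma^2\sqrt{n\Delta/s}$ (up to logs), handle small $\lambda$ with the null signal via Theorem \ref{thm:TVnull}(b), and handle large $\lambda$ with $\asymp s$ bumps of width $\Delta$ and height $\asymp \lambda/\Delta$; your arithmetic in both regimes is correct. But there is a genuine gap at exactly the point you yourself flag as ``the main obstacle'': the regime-2 undershoot estimate $\E\langle \hat{z},Dg\rangle \gtrsim s$ is the \emph{entire} content of the large-$\lambda$ case, and you never prove it --- the noiseless single-bump computation, the taut-string tube, and the ``monotonicity-and-concentration'' argument are named but not executed. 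Moreover, the strengthened claim you assert (that the undershoot bound holds for all $\lambda^2 \gtrsim \lambda_*^2/(\log n)^5$) is almost certainly false: when $\Delta \asymp n/(s+1)$ and $\lambda^2 \asymp \lambda_*^2/(\log n)^5$, the bump height $\lambda/\Delta$ is smaller than the block-average noise level $\sigma/\sqrt{\Delta}$ by a factor $(\log n)^{-5/2}$, and since under the null the map $\eps \mapsto -\eps$ flips $\hat{z}$ to $-\hat{z}$ (so that $\E\langle\hat{z},Dg\rangle=0$ exactly), a perturbation of this size can only produce an expected alignment of order $o(s)$, not $\gtrsim s$. Your two regimes can still overlap without that claim, but the obligation to prove regime 2 for $\lambda^2 \gtrsim \lambda_*^2\log n$ remains undischarged. (A smaller shared issue: your assertion that $\Delta \leq n/(s+1)$ keeps all regime-1 values of $\lambda$ inside the admissible range of Theorem \ref{thm:TVnull} fails for small $s$ and $\Delta$ near $n/(s+1)$, where the regime-1 threshold can exceed $c\sigma^2 n/\log n$; the paper's own write-up shares this looseness.)

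The paper closes precisely the hole you left open, using Lemma \ref{lemma:localmaxmin} --- the same lemma that powers Theorem \ref{thm:TVnull}, which you did not invoke. For a signal with $\|\mu_0\|_\infty \leq M$, that lemma says that with probability $1-2/n$ every local-maximum or local-minimum interval of the partition induced by $\hat{\mu}^\lambda$ has length at least $c_0\min(\lambda^2/\log n,\;\lambda/M)$. Choosing bump height $M=c_0\lambda/(3\Delta)$ makes $\lambda/M=3\Delta/c_0$, and in the large-$\lambda$ regime $\lambda^2/\log n$ also exceeds $3\Delta/c_0$, so on this event $\hat{\mu}^\lambda$ has no local max/min interval shorter than $3\Delta$ and is therefore \emph{monotone} across each length-$3\Delta$ bump interval. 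A monotone vector incurs squared error at least $\Delta M^2/4$ against a $0$--$M$--$0$ pattern of widths $\Delta$, so summing over $k \asymp s$ bumps gives $\gtrsim s\lambda^2/\Delta \gtrsim (\log n)\sqrt{sn/\Delta}$ \emph{deterministically on that event} --- no control of subgradient signs, of $\E\hat{\mu}^\lambda$, or of per-bump detection probabilities is needed, which is why the paper's argument avoids the difficulty that stopped you. This event-based monotonicity bound is exactly the robust version of your heuristic; without proving it (or an equivalent of Lemma \ref{lemma:localmaxmin}), your proposal is an outline rather than a proof.
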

\noindent In particular, setting $\Delta=1$ removes restrictions on the
minimum spacing between changepoints and yields (\ref{eq:minimaxTVlower}) stated
in the introduction.

Theorem \ref{thm:TValternative} may be re-interpreted in the context of the
Lasso estimate for sparse linear regression: Setting $\beta_0:=D\mu_0$ and
\[X:=D^\dagger=\begin{pmatrix}
-\frac{n-1}{n} & -\frac{n-2}{n} & -\frac{n-3}{n} & \cdots & -\frac{1}{n} \\
\frac{1}{n} & -\frac{n-2}{n} & -\frac{n-3}{n} & \cdots & -\frac{1}{n} \\
\frac{1}{n} & \frac{2}{n} & -\frac{n-3}{n} & \cdots & -\frac{1}{n} \\
\vdots & \vdots & \vdots & \ddots & \vdots \\
\frac{1}{n} & \frac{2}{n} & \frac{3}{n} & \cdots & \frac{n-1}{n}
\end{pmatrix} \in \R^{n \times (n-1)},\]
minimizing (\ref{eq:objTV}) is equivalent to minimizing the Lasso objective
\[\frac{1}{2}\|\tilde{Y}-X\beta\|^2+\lambda \|\beta\|_1\]
over $\beta \in \R^{n-1}$,
where $\tilde{Y}=(Y_1-\bar{Y},\ldots,Y_n-\bar{Y})$ denotes $Y$ centered by its
mean. The maximum column norm of $X$ is $\sqrt{n/4}$, the
error $\|\hat{\mu}-\mu_0\|^2$ corresponds to $n$ times the
``prediction loss'' $n^{-1}\|X\hat{\beta}-X\beta_0\|^2$, and in this context
Theorem \ref{thm:TValternative} (with $\Delta=1$) implies
\[\inf_{\lambda \geq 0} \sup_{\beta_0:\|\beta_0\|_0 \leq s}
\E\left[\frac{1}{n}\|X\hat{\beta}^\lambda-X\beta_0\|^2\right]
\geq \frac{c\sigma^2}{(\log n)^5} \sqrt{\frac{s}{n}}.\]
Hence the minimax prediction risk for the Lasso estimate over
$\{\beta_0:\|\beta_0\|_0 \leq s\}$
decays essentially no faster than order $n^{-1/2}$. This is in contrast to the
faster rate of $n^{-1}$ that is achievable when $X$ has well-behaved restricted
eigenvalue constants (see, for example,
\cite{bickeletal,vandegeerbuhlmann} and the references contained therein).

More generally, for any connected graph $G$, noting that $D^\dagger \in \R^{n
\times E}$ is of rank $n-1$ with range orthogonal to the all-1's
vector, minimizing (\ref{eq:objTV}) is equivalent to minimizing
\[\frac{1}{2}\|\tilde{Y}-D^\dagger \beta\|^2+\lambda \|\beta\|_1
\text{ subject to } \beta \in \operatorname{range}(D),\]
where $\operatorname{range}(D)$ denotes the column span
of $D$ in $\R^E$. The results of the two preceding sections imply that whenever
$G$ has small average degree, the ``fast'' optimal rate for
prediction risk over the class
$\{\beta_0 \in \operatorname{range}(D):\|\beta_0\|_0 \leq s\}$ for the above
problem is attainable in polynomial time,
even if it is not achieved by the $l_1$ estimator. This
may be contrasted with the negative results of \cite{zhangetal1,zhangetal2},
which show that there exist adversarial design matrices $X$ for sparse
regression where such fast rates are not achieved by a broad class of
$M$-estimators or by any polynomial-time algorithm returning an $s$-sparse
output.

\section{Edge-weighting for inhomogeneous graphs}\label{sec:weighting}
In this section, we generalize the results of Section \ref{sec:l0} by
considering (exact or approximate) minimizers $\hat{\mu}$ of the edge-weighted
objective (\ref{eq:objw}). Proofs are contained in Appendix
\ref{appendix:proofs}, with a brief summary of proof ideas at the end of this
section.

We motivate our discussion by the following
example, which examines the factor-$d$ gap between the upper and lower
bounds of (\ref{eq:minimaxupperl0}) and (\ref{eq:minimaxlowerl0}):
\begin{example}\label{ex:tadpole}
Let $G$ be the complete graph on $n$ vertices. Then the
average vertex degree of $G$ is $d=n-1$, and (\ref{eq:minimaxlowerl0}) implies
\begin{equation}\label{eq:lowerboundtadpole}
\inf_{\hat{\mu}} \sup_{\mu_0:\|D\mu_0\|_0 \leq s}
\E[\|\hat{\mu}-\mu_0\|^2] \gtrsim \sigma^2
\frac{s}{n}\left(1+\log \frac{|E|}{s}\right).
\end{equation}
This lower bound is in fact tight, and the upper bound of
(\ref{eq:minimaxupperl0}) is loose by a factor of $n$:
Theorem \ref{thm:minimaxw} below will imply that setting $\lambda \asymp
\frac{\sigma^2}{n}(1+\log \frac{|E|}{s})$ in (\ref{eq:objl0}) achieves the above
level of risk, when $G$ is the complete graph.

On the other hand, let $G$ be a ``tadpole'' graph consisting of a linear chain
of $n/2$ vertices with one endpoint connected by an edge to a clique of
$n/2$ remaining vertices. The average vertex degree of $G$ in this case is
$d=(n+1)/2$, so a direct application of (\ref{eq:minimaxlowerl0}) still yields
(\ref{eq:lowerboundtadpole}). However,
by restricting to the sub-class of signals that take a constant value on the
$n/2$-clique, it is clear that the minimax risk over
$\{\mu_0:\|D\mu_0\|_0 \leq s\}$ is at least that of estimating the signal over
only the linear chain portion of $G$ with $n/2$ vertices.
The lower bound (\ref{eq:minimaxlowerl0}) applied to only this sub-graph
implies that in this case, the upper bound (\ref{eq:minimaxupperl0}) is tight
up to a constant factor, and the lower bound (\ref{eq:lowerboundtadpole}) is
loose by a factor of $1/n$.
\end{example}

This example highlights the problem that the complexity measure $\|D\mu_0\|_0$
is not necessarily spatially homogeneous over $G$. For example, when $G$ is the
tadpole graph, a signal $\mu_0$ that is constant over all but one vertex
belonging to the $n/2$-clique has
$\|D\mu_0\|_0=n/2-1$, but a signal taking a different value at each of the
vertices of the linear chain also has $\|D\mu_0\|_0=n/2-1$. A theoretical
consequence is that the minimax risk over $\{\mu_0:\|D\mu_0\|_0 \leq s\}$ is
controlled by the least well-connected portion of the graph. A practical
consequence is that any choice of $\lambda$ will either oversmooth the signal
over the $n/2$-clique or undersmooth the signal over the linear chain, and no
single choice of $\lambda$ leads to good signal recovery in both of these
regions.

While the tadpole graph is an extreme example, the same phenomenon arises in
any graph with regions of varying connectivity. In such applications, we propose
to consider the weighted objective function (\ref{eq:objw}) where each edge is
weighted by a measure of its contribution to the connectivity of $G$. We
believe both that minimizing this weighted objective is usually a more 
reasonable procedure in practice and that the value
$\|D\mu_0\|_\w$ provides a better indication of the complexity of the
piecewise-constant signal $\mu_0$.

One specific weighting $\w:E \to \R_+$ that implements this idea is to weigh
each edge by its effective resistance.
\begin{definition}\label{def:resistance}
Let $G$ be a connected graph and $\{i,j\}$ an edge in $G$. The {\bf effective
resistance} $\r(i,j)$ of this edge has the following four equivalent
definitions:
\begin{enumerate}
\item $\r(i,j)$ is the effective electrical resistance measured across vertices
$i$ and $j$ when $G$ represents an electrical network where each edge is a
resistor with resistance 1.
\item Let $L$ be the (unweighted) Laplacian matrix of $G$, $L^\dagger$ the
pseudo-inverse of $L$, and $e_i$ the basis vector with $i^\text{th}$
entry 1 and remaining entries 0. Then $\r(i,j)=(e_i-e_j)L^\dagger(e_i-e_j)$.
\item Consider a simple random walk on $G$ starting at vertex $i$, and let $t$
be the number of steps taken to reach vertex $j$ and then return to vertex $i$
for the first time. Then $\r(i,j)=\frac{1}{2|E|}\E[t]$.
\item Let $T$ be (the edges of) a random spanning tree of $G$ chosen uniformly
from the set of all spanning trees of $G$. Then $\r(i,j)=\P[\{i,j\} \in T]$.
\end{enumerate}
\end{definition}
For verification of the equivalence of these definitions, see
\cite{lovasz,ghoshetal}. In practice, $\r(i,j)$ may be computed via the second
characterization using fast Laplacian solvers \cite{spielmanteng,livnebrandt}.

The fourth characterization above describes one sense
in which $\r(i,j)$ measures the ``contribution'' of edge $\{i,j\}$ to the
connectivity of $G$: For example, if removing $\{i,j\}$ breaks $G$ into two 
disconnected components, then
every spanning tree $T$ of $G$ must contain $\{i,j\}$, so $\r(i,j)=1$.
Conversely, if there are many short alternative paths from $i$ to $j$ not using
edge $\{i,j\}$, then $\r(i,j)$ is much smaller than 1.

More generally, the contribution of each edge to the graph
connectivity may be measured by any weighting belonging to the spanning tree
polytope of $G$.
\begin{definition}
A weighting $\w:E \to \R_+$ is a \emph{spanning tree weighting} if there exists
a spanning tree $T$ of $G$ such that $\w(i,j)=1$ if $\{i,j\} \in T$ and
$\w(i,j)=0$ otherwise. The {\bf spanning tree polytope} $\ST(G)$ is the convex
hull of all spanning tree weightings.
\end{definition}
(With a slight abuse of notation, we will henceforth denote general
weightings by $\w$ and any weighting in $\ST(G)$ by $\r$.)
Thus, if $\r \in \ST(G)$, then there exist spanning trees $T_1,\ldots,T_M$ of
$G$ and $\lambda_1,\ldots,\lambda_M>0$ with $\sum_{m=1}^M
\lambda_m=1$ such that
\[\r(i,j)=\sum_{m=1}^M \lambda_m \1\{\{i,j\} \in T_m\}\]
for every edge $\{i,j\} \in E$. The weighting $\r$ is thus identified with the
probability distribution of a random spanning tree $T$, where
$T=T_m$ with probability $\lambda_m$. This distribution satisfies the property,
for all $\{i,j\} \in E$,
\[\r(i,j)=\P[\{i,j\} \in T].\]
For any subset of edges $E' \subseteq E$ and weighting $\w:E \to \R_+$, let us
denote $\w(E')=\sum_{\{i,j\} \in E'} \w(i,j)$ as the total weight of these
edges. Then the above implies, for the random spanning tree $T$ associated to
$\r \in \ST(G)$,
\begin{equation}\label{eq:expectedtreeedges}
\r(E')=\E[|E' \cap T|].
\end{equation}
The effective resistance weighting of Definition \ref{def:resistance}
corresponds to the uniform distribution for $T$.

The results below describe the squared-error risk of the
estimator $\hat{\mu}$ that (exactly or approximately)
minimizes (\ref{eq:objw}) for any
edge-weighting $\w:E \to \R_+$. We derive, for all graphs $G$,
minimax upper and lower bounds on this risk over the class
$\{\mu_0:\|D\mu_0\|_\w \leq s\}$. The tightness of these bounds will depend on
how close $\w$ is to the spanning tree polytope $\ST(G)$---for effective
resistance weighting, or more generally for any $\r \in \ST(G)$, these
bounds are tight up to a universal constant factor independent of the graph.

Let us make a remark about scaling, which is important for the interpretation of
the below results: As rescaling $\w$ by $c>0$ and $\lambda$ by $1/c$
leads to the same penalty in (\ref{eq:objw}), we will state all of our results,
for simplicity and without loss of generality, under a scaling such that
$\w \geq \r$ for some $\r \in \ST(G)$, meaning $\w(i,j) \geq \r(i,j)$ for every
edge. For any $\w$ (where $G$ remains connected by edges with positive weight),
there is a smallest constant $c>0$ for which this property
holds for $\tilde{\w}=c\w$; the below results yield the tightest risk bounds
when applied to $\tilde{\w}$ scaled in this way. Whenever $\w \geq \r$ for some
$\r \in \ST(G)$, (\ref{eq:expectedtreeedges}) implies that the total
weight of all edges satisfies
\begin{equation}\label{eq:wE}
\w(E) \geq \r(E)=n-1,
\end{equation}
since every spanning tree has $n-1$ edges.
The ratio $\w(E)/(n-1)$ provides a measure of the distance of $\w$ to $\ST(G)$.
Furthermore, if $E'$ is any subset of edges whose removal 
disconnects $G$ into $k+1$ connected components,
then every spanning tree contains at least $k$ edges
of $E'$, so $\w(E') \geq \r(E') \geq k$. In particular,
\begin{equation}\label{eq:wcut}
\mu \in \R^n \text{ has } k+1 \text{ distinct values} \Rightarrow
\|D\mu\|_\w \geq k.
\end{equation}

The following results generalize Theorems \ref{thm:oraclel0},
\ref{thm:minimaxupperl0}, and \ref{thm:minimaxlowerl0}:
\begin{theorem}\label{thm:oraclew}
Let $\w:E \to \R_+$ be such that $\w \geq \r$ for some $\r \in \ST(G)$, and let
$\delta \leq \sigma/\sqrt{n}$ and $\tau \leq \sigma^2$. For any $\eta>0$,
there exist constants $C_\eta,C_\eta'>0$ depending only on $\eta$ such that if
$\lambda \geq C_\eta\sigma^2\log \w(E)$ and $\hat{\mu}$ is any
$(\tau,\delta\Z)$-local-minimizer of (\ref{eq:objw}), then
\begin{equation}\label{eq:oraclew}
\E[\|\hat{\mu}-\mu_0\|^2] \leq \inf_\mu
(1+\eta)\|\mu-\mu_0\|^2+C_\eta'\lambda\max(\|D\mu\|_\w,1).
\end{equation}
\end{theorem}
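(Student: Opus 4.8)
The plan is to prove (\ref{eq:oraclew}) on a single high-probability event and to dispatch its complement by a crude deterministic estimate. On the good event the argument has two ingredients: a deterministic ``basic inequality'' extracted from the local-minimizer property, and a uniform control of a Gaussian process indexed by partitions of $V$ with bounded weighted cut.

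First I would derive the basic inequality. Fix any comparison $\mu$, let $v_1,\dots,v_K$ be its distinct values with level sets $A_k=\{i:\mu_i=v_k\}$, let $c_k\in\delta\Z$ be $v_k$ rounded to the grid, and write $\mu^\delta_i=c_{k(i)}$. For each $k$ let $\tilde\mu^{(k)}$ be the $\delta\Z$-expansion of $\hat\mu$ that resets $\hat\mu$ to $c_k$ on $A_k$; this is admissible, so Definition \ref{def:localmin} gives $F_\w(\hat\mu)-\tau\le F_\w(\tilde\mu^{(k)})$. Summing these $K$ inequalities, the data-fidelity terms telescope: the expansion targets contribute $\tfrac12\|Y-\mu^\delta\|^2$ and the untouched coordinates contribute $\tfrac{K-1}2\|Y-\hat\mu\|^2$, leaving one net copy of $\tfrac12\|Y-\hat\mu\|^2$ on the left. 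The penalty terms require edge bookkeeping: classifying each edge by whether it is cut by $\mu$, I expect $\sum_k\|D\tilde\mu^{(k)}\|_\w\le(K-1)\|D\hat\mu\|_\w+2\|D\mu\|_\w$, after which $(K-1)\|D\hat\mu\|_\w$ cancels and a single copy of $\lambda\|D\hat\mu\|_\w$ survives on the left. Substituting $Y=\mu_0+\eps$ and cancelling $\|\eps\|^2$ yields
\[\tfrac12\|\hat\mu-\mu_0\|^2+\lambda\|D\hat\mu\|_\w\le\tfrac12\|\mu^\delta-\mu_0\|^2+\langle\hat\mu-\mu^\delta,\eps\rangle+2\lambda\|D\mu\|_\w+K\tau.\]
The discretization is harmless since $\|\mu-\mu^\delta\|\le\tfrac12\delta\sqrt n\le\sigma/2$, so $\|\mu^\delta-\mu_0\|^2\le(1+\tfrac\eta4)\|\mu-\mu_0\|^2+C_\eta\sigma^2$; and by (\ref{eq:wcut}), $K\le\|D\mu\|_\w+1$, so $K\tau\le2\sigma^2\max(\|D\mu\|_\w,1)$, with both stray $\sigma^2$ terms absorbed once $\lambda\gtrsim\sigma^2\log\w(E)\gtrsim\sigma^2$.

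The crux is the term $\langle\hat\mu-\mu^\delta,\eps\rangle$. Writing $\nu=\hat\mu-\mu^\delta$, letting $\Pi(\nu)$ be its partition into level sets and $P_\Pi$ the orthogonal projection onto signals constant on the pieces of $\Pi$, one has $\langle\nu,\eps\rangle\le\|\nu\|\,\|P_{\Pi(\nu)}\eps\|$ and $\|D\mathbf 1_{\Pi(\nu)}\|_\w=\|D\nu\|_\w$. The key lemma I would prove is that, with probability at least $1-c\,\w(E)^{-4}$, \emph{simultaneously} over all partitions $\Pi$,
\[\|P_\Pi\eps\|^2\le C\sigma^2\big(\|D\mathbf 1_\Pi\|_\w+1\big)\log\w(E).\]
For a \emph{fixed} $\Pi$ with $K=|\Pi|$ pieces, $\|P_\Pi\eps\|^2\sim\sigma^2\chi^2_K$ with $K\le\|D\mathbf 1_\Pi\|_\w+1$ by (\ref{eq:wcut}), so the content is a union bound, hence a count of partitions of weighted cut at most $t$. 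This count is exactly where $\w\ge\r\in\ST(G)$ must be used, and it is the main obstacle: a partition can have small weighted cut yet many cut edges, so no deterministic count is available. I would instead introduce the random spanning tree $T\sim\r$ (independent of $\eps$) and exploit $\r(E')=\E|T\cap E'|\le\w(E')$ from (\ref{eq:expectedtreeedges}): by Markov most realizations of $T$ cut a partition of weighted cut $\le t$ in at most $4t$ tree edges, while for a \emph{fixed} tree the partitions cutting it in $\le4t$ edges are obtained by deleting $\le4t$ tree edges and merging the resulting subtrees, giving a count $\exp(Ct\log\w(E))$ (using $n-1\le\w(E)$). Averaging the $\eps$-union bound over $T$ against this per-tree count, balancing the $\chi^2$ tail, and peeling dyadically over $t\in[1,\w(E)]$ should yield the displayed lemma.

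Finally I would assemble the pieces. On the good event, bounding $\langle\nu,\eps\rangle\le\|\nu\|\sigma\sqrt{C(\|D\nu\|_\w+1)\log\w(E)}$ and applying $ab\le\tfrac\eta4a^2+\tfrac C\eta b^2$ with $a=\|\nu\|\le\|\hat\mu-\mu_0\|+\|\mu^\delta-\mu_0\|$ and $\|D\nu\|_\w\le\|D\hat\mu\|_\w+\|D\mu\|_\w$, the term $\tfrac\eta2\|\hat\mu-\mu_0\|^2$ moves left, while $\tfrac C\eta\sigma^2\|D\hat\mu\|_\w\log\w(E)\le\tfrac\lambda2\|D\hat\mu\|_\w$ is absorbed by the surviving penalty once $\lambda\ge C_\eta\sigma^2\log\w(E)$. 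Rearranging and taking the infimum over $\mu$ (legitimate since the Gaussian lemma is uniform in $\nu$) gives (\ref{eq:oraclew}) pointwise on the good event. To pass to expectation I would control the bad event by a deterministic bound obtained from the basic inequality at $\mu=\mu_0$: since $\|\mu_0^\delta-\mu_0\|\le\sigma/2$ and $\langle\hat\mu-\mu_0^\delta,\eps\rangle\le\tfrac14\|\hat\mu-\mu_0\|^2+\|\eps\|^2+\tfrac\sigma2\|\eps\|$, one gets deterministically $\|\hat\mu-\mu_0\|^2\lesssim\|\eps\|^2+\lambda\|D\mu_0\|_\w+K_0\tau+\sigma^2$, so that Cauchy--Schwarz with the $c\,\w(E)^{-4}$ failure probability makes the bad-event contribution to $\E\|\hat\mu-\mu_0\|^2$ of order $\sigma^2$, absorbed into $C_\eta'\lambda$.
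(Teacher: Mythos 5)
Your proposal is correct, and its architecture coincides with the paper's own proof: the same deterministic basic inequality obtained from one $\delta\Z$-expansion per level set of the comparator (the paper's Lemma \ref{lemma:factor2approx}, with the identical factor $2$ on the penalty and the $k\tau$ term), the same key uniform Gaussian bound over partitions proved by passing to the random spanning tree attached to $\r \in \ST(G)$ (Lemma \ref{lemma:badcut}), the same Cauchy--Schwarz/absorption step (Lemma \ref{lemma:oracleinequalityprob}(a)), and a crude bound dispatching the low-probability complement. Two execution details differ, and both of your variants are valid. For the spanning-tree lemma, the paper controls the supremum by exponential moments together with Jensen's inequality applied to the convex map $x \mapsto \exp(-\eta C x\log(e\cdot\w(E)/x))$, which delivers the sharper bound $\|P^{\bS}\eps\|^2 \lesssim \sigma^2\max(\w(\partial\bS),1)\log\frac{e\cdot\w(E)}{\max(\w(\partial\bS),1)}$ in one stroke; your first-moment (Markov) transfer with per-tree counting and dyadic peeling yields only the $\log \w(E)$ factor, which suffices for this theorem with $\lambda \gtrsim \sigma^2 \log \w(E)$ but would not support the refined choice $\lambda \asymp \sigma^2(1+\log\frac{\w(E)}{s})$ needed for Theorem \ref{thm:minimaxw}. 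For the complement event, the paper proves a $2q$-th moment bound (Lemma \ref{lemma:qmoment}) via single-vertex expansions and applies H\"older with $p=3$, $q=3/2$, whereas you reuse the basic inequality at $\mu=\mu_0$ with a deterministic Cauchy--Schwarz on the cross term; note that your bad-event contribution is $O(\lambda)$ rather than $O(\sigma^2)$ as you state, since in the worst case $\lambda\|D\mu_0\|_\w\,\P[\mathrm{bad}] \le \lambda\,\w(E)\cdot\w(E)^{-4}$, but this is still absorbed into $C_\eta'\lambda\max(\|D\mu\|_\w,1)$, so the conclusion stands.
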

\begin{theorem}\label{thm:minimaxw}
Let $\w:E \to \R_+$ be such that $\w \geq \r$ for some $\r \in \ST(G)$, and let
$\delta \leq \sigma/\sqrt{n}$ and $\tau \leq \sigma^2$.
There exist universal constants $C,C'>0$ such that
for any $s \in [1,\w(E)]$, if $\lambda \geq C\sigma^2(1+\log\frac{\w(E)}{s})$
and $\hat{\mu}$ is any $(\tau,\delta\Z)$-local-minimizer of (\ref{eq:objw}),
then
\begin{equation}\label{eq:minimaxupperw}
\sup_{\mu_0:\|D\mu_0\|_\w \leq s} \E[\|\hat{\mu}-\mu_0\|^2]
\leq C'\lambda s.
\end{equation}
Conversely, there exists a universal constant $c>0$ such that for any
$s \in [\frac{4\w(E)}{n},\w(E)]$,
\[\inf_{\hat{\mu}} \sup_{\mu_0:\|D\mu_0\|_\w \leq s}
\E[\|\hat{\mu}-\mu_0\|^2] \geq c\sigma^2
s\frac{n}{\w(E)}\left(1+\log \frac{\w(E)}{s}\right),\]
where the infimum is taken over all possible estimators
$\hat{\mu}:=\hat{\mu}(Y)$.
\end{theorem}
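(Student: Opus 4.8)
The two halves are proved by separate arguments, mirroring the unweighted case (Theorems \ref{thm:minimaxupperl0} and \ref{thm:minimaxlowerl0}). For the upper bound (\ref{eq:minimaxupperw}), the plan is to re-run the oracle argument behind Theorem \ref{thm:oraclew}, but with the peeling of the empirical process centered at the known complexity scale $s$ rather than at the worst scale; this is what allows the smaller penalty $\log\frac{\w(E)}{s}$ in place of $\log\w(E)$, exactly as the sparsity-adapted threshold $\sqrt{2\log(n/k)}$ replaces the universal threshold $\sqrt{2\log n}$ in the normal-means problem (\ref{eq:minimaxnormalmeans}). First I would record a basic inequality: evaluating the $(\tau,\delta\Z)$-local-minimizer property at a rounding of $\mu_0$ into $\delta\Z$ (the rounding error contributing at most $O(n\delta^2)\le\sigma^2$, absorbed since $\delta\le\sigma/\sqrt n$) yields
\[
\tfrac12\|\hat\mu-\mu_0\|^2 \le \langle\eps,\hat\mu-\mu_0\rangle
+\lambda s-\lambda\|D\hat\mu\|_\w+O(\sigma^2),
\]
using $\|D\mu_0\|_\w\le s$. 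Writing $\hat k:=\|D\hat\mu\|_\w$ and projecting $\eps$ onto the span of the indicators of the common refinement of the level sets of $\hat\mu$ and $\mu_0$, the inner product is at most $\|\Pi\eps\|\,\|\hat\mu-\mu_0\|$, where $\|\Pi\eps\|^2$ is $\sigma^2$ times a chi-square whose degrees of freedom equal the number of refinement pieces; by (\ref{eq:wcut}) this number is at most $\hat k+s+1$.

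The combinatorial heart, and the step I expect to be hardest, is the entropy bound: the number of connected partitions of $V$ with weighted boundary at most $k$ is at most $\exp\!\big(Ck\log\tfrac{\w(E)}{k}\big)$. This is where the hypothesis $\w\ge\r\in\ST(G)$ enters: via (\ref{eq:expectedtreeedges}) a weighted boundary of size $k$ forces the associated random spanning tree to cross the cut at most $k$ times in expectation, so such cuts may be encoded by selecting $O(k)$ edges out of a total weight $\w(E)$. Peeling over dyadic ranges of $\hat k$ and union-bounding the sub-Gaussian suprema against the gain $\lambda\hat k$, the penalty dominates the fluctuation as soon as $\lambda\ge C\sigma^2(1+\log\tfrac{\w(E)}{s})$, and taking expectations gives $\E[\|\hat\mu-\mu_0\|^2]\lesssim\lambda s$.

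For the lower bound, the plan is to embed a sparse normal-means problem on a set of low-weighted-degree vertices. Since $\sum_{v}\w(\partial v)=2\w(E)$, a Markov bound produces a set $V_0$ with $|V_0|\ge n/2$ on which every vertex has weighted degree $\w(\partial v)\le 4\w(E)/n$. Set $k:=\lfloor sn/(4\w(E))\rfloor$, which satisfies $k\ge 1$ precisely because $s\ge 4\w(E)/n$, and consider signals equal to a common value $\theta$ on some $k$-subset $A\subseteq V_0$ and $0$ elsewhere; each such $\mu_0$ obeys $\|D\mu_0\|_\w\le\sum_{v\in A}\w(\partial v)\le s$, so the whole family lies in the class. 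Recovering $\mu_0$ is then exactly the $k$-sparse normal-means problem in dimension $|V_0|\asymp n$. Invoking the minimax lower bound $\gtrsim\sigma^2 k(1+\log\tfrac{|V_0|}{k})$, which I would supply non-asymptotically through a Fano argument over a Varshamov--Gilbert packing of $k$-sparse binary vectors rather than through the asymptotic form (\ref{eq:minimaxnormalmeans}), and substituting $k\asymp sn/\w(E)$ and $|V_0|\asymp n$, yields $\gtrsim\sigma^2\frac{sn}{\w(E)}(1+\log\tfrac{\w(E)}{s})$, as claimed. The only points requiring care are verifying $k\le |V_0|/2$ across the stated range of $s$ (so the log factor is well-posed, with the parametric ``$+1$'' term dominating as $s\to\w(E)$) and confirming that specializing $\w$ to the all-ones weighting, where $4\w(E)/n=2d$, recovers Theorem \ref{thm:minimaxlowerl0}.
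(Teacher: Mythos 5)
Your lower-bound argument coincides with the paper's own: the same Markov selection of a vertex set of size at least $n/2$ whose weighted degrees are at most $4\w(E)/n$, the same embedding of $\lfloor sn/(4\w(E))\rfloor$-sparse vectors into $\{\mu_0:\|D\mu_0\|_\w\le s\}$, and a standard sparse normal-means minimax bound (the paper cites one; your Fano/Varshamov--Gilbert derivation is an acceptable substitute). The upper bound, however, has two genuine gaps.

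First, the basic inequality. A $(\tau,\delta\Z)$-local-minimizer admits comparison only against $\delta\Z$-expansions of $\hat{\mu}$ itself, i.e.\ vectors obtained by moving some subset of coordinates of $\hat{\mu}$ to a single common value $c\in\delta\Z$. The rounding $\mu^\delta$ of $\mu_0$ has up to $s+1$ distinct values and is generally not such an expansion, so ``evaluating the local-minimizer property at a rounding of $\mu_0$'' is not a legal move; as written, your inequality is justified only for a global minimizer over $(\delta\Z)^n$, whereas the theorem is precisely about local minimizers. The paper fills this hole with Lemma \ref{lemma:factor2approx}: perform one expansion of $\hat{\mu}$ per level set $S_\alpha$ of $\mu^\delta$ (new value $\mu^\delta_{S_\alpha}$, applied on $S_\alpha$), cancel the untouched coordinates, and sum over $\alpha$; boundary edges are counted twice, giving $F_\w(\hat{\mu})\le\frac{1}{2}\|Y-\mu^\delta\|^2+2\lambda\|D\mu^\delta\|_\w+(s+1)\tau$. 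This differs from your display only in harmless constants, but the argument producing it is the crux of extending the guarantee from exact to approximate minimizers, and it is absent from your proposal.

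Second, the entropy bound you call the combinatorial heart is false. Take $G=K_n$ and $\w=\r$ the resistance weighting, so $\r(i,j)=2/n$ and $\w(E)=n-1$. Fix any $m=\lfloor k/2\rfloor$ vertices and partition them arbitrarily, keeping the remaining $n-m$ vertices as one block: at most $m(n-m)+\binom{m}{2}\le mn$ edges are cut, so the weighted boundary is at most $2m\le k$, and every block is connected. There are at least $B_m$ (the Bell number) $=\exp(\Theta(m\log m))$ such partitions, so for $k\asymp n$ the count is $\exp(\Theta(n\log n))$, while your bound allows only $\exp(Ck\log(\w(E)/k))=\exp(O(n))$. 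A union bound over partitions in dyadic shells of $\hat{k}$ therefore cannot produce the penalty level $\sigma^2(1+\log\frac{\w(E)}{s})$. The paper's Lemma \ref{lemma:badcut} never counts cuts: for a fixed spanning tree $T$ and fixed crossing set $E'=T\cap\partial\bS$, every multi-cut with that crossing set is refined by the components of $T\setminus E'$, whose projection dominates all of theirs, so the supremum effectively runs over subsets of tree edges only (in the example above, all $B_m$ partitions are refined by the all-singletons partition of the chosen $m$ vertices); and since $\r(\partial\bS)=\E_T[|T\cap\partial\bS|]$ controls crossings only in expectation---no single tree works for all cuts simultaneously, and the cut-to-crossing-set map is far from injective---this reduction must be carried out inside the expectation over the random tree, via Jensen's inequality applied to the convex function $x\mapsto\exp(-\eta Cx\log(e\cdot\w(E)/x))$, not via the Chernoff/encoding step you sketch. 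A last, smaller omission: your concluding ``taking expectations'' needs justification, since the complementary event has small probability but a priori unbounded loss; the paper handles this with H\"older's inequality together with the crude moment bound of Lemma \ref{lemma:qmoment}.
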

\noindent The restriction to $s \gtrsim \w(E)/n$ in the
lower bound is necessary for generality of the result to all graphs $G$,
for the same reason as in Remark \ref{remark:smalls}.

The minimax upper and lower bounds above differ by the factor
$n/\w(E)$. Recall from (\ref{eq:wE}) that $\w(E) \geq n-1$, with
$\w(E)=n-1$ precisely when $\w \in \ST(G)$. Hence the above immediately implies
the following corollary:
\begin{corollary}\label{cor:minimaxw}
If $\w=\r$ where $\r(i,j)$ is the effective resistance of each edge $\{i,j\}$,
or more generally where $\r \in \ST(G)$, then for any $s \in [4,n-1]$,
\begin{equation}\label{eq:minimaxSTG}
\inf_{\hat{\mu}} \sup_{\mu_0:\|D\mu_0\|_\w \leq s} \E[\|\hat{\mu}-\mu_0\|^2]
\asymp \sigma^2 s\left(1+\log \frac{n}{s}\right).
\end{equation}
\end{corollary}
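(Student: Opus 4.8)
The plan is to derive Corollary \ref{cor:minimaxw} as a direct specialization of Theorem \ref{thm:minimaxw} to the case $\w = \r \in \ST(G)$, for which the gap factor $n/\w(E)$ separating the upper and lower bounds collapses to a constant. The starting point is the identity (\ref{eq:wE}): since $\r \in \ST(G)$ is associated to a random spanning tree $T$ and every spanning tree of a connected graph on $n$ vertices has exactly $n-1$ edges, we have $\w(E) = \r(E) = \E[|T|] = n-1$ with equality. Consequently $n/\w(E) = n/(n-1)$, which lies in $[1,2]$ for all $n \geq 3$ and hence satisfies $n/\w(E) \asymp 1$. This single observation is what makes the two bounds of Theorem \ref{thm:minimaxw} agree up to universal constants.

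First I would establish the upper bound. Setting $\lambda = C\sigma^2(1 + \log\tfrac{n}{s})$ with $C$ the constant from Theorem \ref{thm:minimaxw}, I note that $\log\tfrac{\w(E)}{s} = \log\tfrac{n-1}{s} \leq \log\tfrac{n}{s}$, so the chosen $\lambda$ satisfies the hypothesis $\lambda \geq C\sigma^2(1 + \log\tfrac{\w(E)}{s})$. The range condition is also met, since $s \in [4,n-1] \subseteq [1,\w(E)]$. Theorem \ref{thm:minimaxw} then gives that any $(\tau,\delta\Z)$-local-minimizer $\hat{\mu}$ of (\ref{eq:objw})---and in particular one specific such estimator, so that $\inf_{\hat\mu}$ is bounded above---satisfies $\sup_{\mu_0:\|D\mu_0\|_\w \leq s}\E[\|\hat\mu - \mu_0\|^2] \leq C'\lambda s \asymp \sigma^2 s(1 + \log\tfrac{n}{s})$.

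For the matching lower bound, I would verify that $s \in [4,n-1]$ falls inside the admissible range $[\tfrac{4\w(E)}{n},\w(E)] = [\tfrac{4(n-1)}{n},\,n-1]$; this holds because $\tfrac{4(n-1)}{n} \leq 4 \leq s \leq n-1$. The lower half of Theorem \ref{thm:minimaxw} then yields $\inf_{\hat\mu}\sup \geq c\sigma^2 s\tfrac{n}{n-1}(1 + \log\tfrac{n-1}{s}) \geq c\sigma^2 s(1 + \log\tfrac{n-1}{s})$. Combining the two displays gives the claimed equivalence in (\ref{eq:minimaxSTG}).

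The proof has no substantive obstacle; it is bookkeeping layered on top of Theorem \ref{thm:minimaxw}. The only point requiring care is confirming that $1 + \log\tfrac{n-1}{s} \asymp 1 + \log\tfrac{n}{s}$ uniformly over $s \in [4,n-1]$, so that the logarithmic term is interchangeable in both directions. This follows from $\log\tfrac{n}{n-1} \leq \log 2$: one has $\log\tfrac{n}{s} \leq \log\tfrac{n-1}{s} + \log 2$, whence $1 + \log\tfrac{n}{s} \leq (1 + \log 2)(1 + \log\tfrac{n-1}{s})$, while the reverse inequality $1 + \log\tfrac{n-1}{s} \leq 1 + \log\tfrac{n}{s}$ is immediate. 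With this equivalence, the universal constants from Theorem \ref{thm:minimaxw} absorb all discrepancies between $n-1$ and $n$, establishing the corollary for every $\r \in \ST(G)$ and in particular for the effective-resistance weighting.
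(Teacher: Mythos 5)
Your proposal is correct and follows exactly the paper's route: the corollary is obtained by specializing Theorem \ref{thm:minimaxw} to $\w = \r \in \ST(G)$, where (\ref{eq:wE}) gives $\w(E) = n-1$ so the gap factor $n/\w(E)$ becomes a universal constant; the paper treats this as immediate, and your additional bookkeeping (range checks on $s$ and the equivalence $1+\log\tfrac{n-1}{s} \asymp 1+\log\tfrac{n}{s}$) is a correct and careful filling-in of those implicit details.
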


\begin{remark}\label{remark:wavelettree}
One may compare (\ref{eq:minimaxupperw}) with a guarantee achieved by the
wavelet spanning tree method of \cite{sharpnacketalwavelet}: In this method,
for a fixed spanning tree $T$ of $G$, an orthonormal basis of Haar-like wavelet
functions is constructed over $T$ such that a signal $\mu_0$ cutting
$s$ edges of $T$ has a representation of sparsity $s(\log
d_{\max}(T))(\log n)$ in this basis, where $d_{\max}(T)$ is the maximal vertex
degree of $T$. The corresponding wavelet thresholding estimator then satisfies
\[\E[\|\hat{\mu}-\mu_0\|^2] \lesssim \sigma^2s(\log d_{\max}(T))(\log n)^2.\]
If $T$ is chosen at random from the spanning tree distribution corresponding to
any weighting $\w \in \ST(G)$, then bounding $d_{\max}(T) \leq d_{\max}(G)$
and averaging over the random choice of $T$ yields
\[\E[\|\hat{\mu}-\mu_0\|^2] \lesssim \sigma^2(\log d_{\max}(G))
(\log n)^2\|D\mu_0\|_\w,\]
which agrees with (\ref{eq:minimaxupperw}) up to extra logarithmic factors.
Whereas this defines a randomized
algorithm and the above risk is averaged also over the algorithm execution,
minimizing (\ref{eq:objw}) for $\w \in \ST(G)$ directly penalizes the number of
edges cut by $\hat{\mu}$ in the average spanning tree, and thus may be
interpreted loosely as a derandomization of this wavelet approach.
\end{remark}

Finally, we state a result of relevance to testing the null
hypothesis (\ref{eq:null}):
\begin{theorem}\label{thm:nullw}
Let $\w:E \to \R_+$ be such that $\w \geq \r$ for some $\r \in \ST(G)$,
and let $\delta \leq \sigma/\sqrt{n}$ and $\tau \leq \sigma^2$. There exist
universal constants $C,C'>0$ such that if $\mu_0$ is constant over $G$,
$\lambda \geq C\sigma^2\log \w(E)$, and $\hat{\mu}$ is
any $(\tau,\delta\Z)$-local-minimizer of (\ref{eq:objw}), then
\[\P[\hat{\mu} \text{ is constant over } G] \geq 1-C'n^{-3}.\]
\end{theorem}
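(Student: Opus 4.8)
The plan is to reduce this to a deterministic statement about local minimizers combined with a Gaussian concentration bound. First I would observe that the constant signal $\bar{Y}\cdot\mathbf{1}$ (the mean of $Y$) is one candidate $\delta\Z$-expansion starting point, and that the quantity controlling whether $\hat\mu$ can be non-constant is the maximum ``gain'' in residual error achievable by cutting a subset of edges, weighted against the penalty $\lambda\|D\mu\|_\w$ charged for those cuts. Concretely, if $\hat\mu$ is a $(\tau,\delta\Z)$-local-minimizer that is \emph{not} constant, then it cuts at least one edge, and by \eqref{eq:wcut} with $k\geq 1$ we have $\|D\hat\mu\|_\w\geq 1$. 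The key is to show that for a constant truth $\mu_0$, with high probability no partition of the vertices into $k+1$ nonempty pieces can reduce the squared-error term by as much as $\lambda$ times the weight $\|D\hat\mu\|_\w\geq k$ of the cut edges (up to the slack $\tau\leq\sigma^2$ and discretization $\delta\leq\sigma/\sqrt n$, both of which contribute only lower-order terms). This is precisely an oracle-type comparison: evaluating the objective at $\hat\mu$ versus at the constant $\bar Y\mathbf 1$.

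The main analytic step is to bound the supremum of the relevant Gaussian process. The reduction in the squared-error part of $F_\w$ obtained by moving from a constant signal to a signal with connected constant pieces $P_0,\ldots,P_k$ is controlled by a sum of terms of the form $|P_r|^{-1}\big(\sum_{i\in P_r}\eps_i\big)^2$, i.e.\ by $\sum_r \|P_{P_r}\eps\|^2$ where $P_{P_r}$ projects onto the indicator of block $P_r$. I would bound this supremum over all partitions cutting a given weighted edge-set by relating the number of partitions achievable with weighted-cut budget $s$ to the quantity $\w(E)$, using the spanning-tree interpretation \eqref{eq:expectedtreeedges} and the hypothesis $\w\geq\r$ for some $\r\in\ST(G)$: a cut of weighted size $s$ separates the graph into components joined by few tree edges on average, so the number of such cuts is controlled by $\binom{\w(E)}{\lesssim s}$-type counting rather than by the full $2^n$. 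A union bound over these configurations, combined with a $\chi^2$ tail bound for each $\sum_r\|P_{P_r}\eps\|^2$, shows that with probability $\geq 1-C'n^{-3}$ the total achievable reduction never exceeds $C\sigma^2\big(\|D\hat\mu\|_\w\log\w(E)+\text{const}\big)$. Choosing $\lambda\geq C\sigma^2\log\w(E)$ then guarantees that cutting any edge strictly increases $F_\w$ by more than $\tau$, so the local-minimizer property forces $\hat\mu$ to be constant.

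I would organize this so that Theorem \ref{thm:nullw} follows as a corollary of the probabilistic oracle inequality already needed for Theorem \ref{thm:minimaxw} (the ``Lemma \ref{lemma:oracleinequalityprob}'' referenced in Section \ref{sec:l0}): specializing that in-probability bound to $\mu_0=\mathrm{const}$ and $\lambda\asymp\sigma^2\log\w(E)$ shows $\|\hat\mu-\mu_0\|^2\lesssim\sigma^2\|D\hat\mu\|_\w\log\w(E)$ on the good event, while the local-minimality comparison against $\bar Y\mathbf 1$ forces the penalty to dominate unless $\|D\hat\mu\|_\w=0$. The $n^{-3}$ rate comes from setting the union-bound/$\chi^2$ exponent to exceed $3\log n$; here it is important that $\log\w(E)\gtrsim\log n$ by \eqref{eq:wE}, so the penalty level $\lambda\asymp\sigma^2\log\w(E)$ is large enough to absorb the entropy of the cut configurations with room to spare. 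The main obstacle, as usual in these graph-cut arguments, is the combinatorial counting of weighted cuts of bounded size: the clean bound requires the scaling normalization $\w\geq\r\in\ST(G)$ so that $\|D\mu\|_\w$ genuinely lower-bounds the number of separated components via \eqref{eq:wcut}, and so that $\w(E)\geq n-1$ makes $\log\w(E)$ the correct entropy scale rather than an artifact of an arbitrary rescaling of $\w$.
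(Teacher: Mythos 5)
Your proposal matches the paper's proof in all essentials: the paper likewise compares $\hat{\mu}$ against a rounded constant vector (a $\delta\Z$-expansion) via the local-minimizer property, bounds $\|P^{\bS}\eps\|^2$ uniformly over all multi-cuts by the random-spanning-tree counting argument (Lemma \ref{lemma:badcut}), and combines (\ref{eq:wcut}) with $\lambda \geq C\sigma^2\log \w(E)$ and $\w(E)\geq n-1$ to force $\|D\hat{\mu}\|_\w=0$ with probability $1-C'n^{-3}$. The only cosmetic difference is that the paper does not cite Lemma \ref{lemma:oracleinequalityprob} as a black box (whose stated conclusion would not suffice) but reruns the steps leading to (\ref{eq:oracleprob2}) so as to retain the $-\lambda\|D\hat{\mu}\|_\w$ term, which is precisely the refinement your sketch also relies on.
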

Thus we may test $H_0$ in (\ref{eq:null}) by setting
$\lambda \asymp \sigma^2\log \w(E)$ and rejecting $H_0$ if $\hat{\mu}$
minimizing (\ref{eq:objw}) is not constant. Denoting by $P^\perp$ the projection
orthogonal to the all-1's vector, the risk bound (\ref{eq:minimaxupperw})
(or more precisely, the probability guarantee of Lemma
\ref{lemma:oracleinequalityprob}(b))
implies that this test can distinguish a non-constant alternative $\mu_0$ with
probability approaching 1 as long as $\|P^\perp \mu_0\|^2 \geq C\sigma^2
\|D\mu_0\|_\w \log \w(E)$, for a universal constant $C>0$.
When $\w:E \to \R_+$ is the effective resistance
weighting, this recovers a similar detection threshold as established for
the tests in \cite{sharpnacketalwavelet,sharpnacketallovasz}.

In the case of uniform edge weights $\w \equiv 1$, it is clear that $\w(E)=|E|$
and $\w \geq \r$ for all $\r \in \ST(G)$. Then Theorems \ref{thm:oraclel0},
\ref{thm:minimaxupperl0}, \ref{thm:minimaxlowerl0}, and \ref{thm:nulll0}
follow directly by specializing these
results. If there exists $\r \in \ST(G)$ such that $\r(i,j)<1$
for every edge, then the results of Section \ref{sec:l0} are trivially 
strengthened by rescaling $\lambda$ by $\max_{\{i,j\} \in E} \r(i,j)$. For
example, if $G$ is the complete graph, then every edge has effective resistance 
$\r(i,j)=2/n$, and Theorems \ref{thm:oraclew} and \ref{thm:minimaxw} imply that 
$\lambda$ may in fact be set to $Cn^{-1}\log |E|$ and $Cn^{-1}\log |E|/s$ in
Theorems \ref{thm:oraclel0} and \ref{thm:minimaxupperl0} respectively, as
claimed in Example \ref{ex:tadpole}.

We prove Theorems \ref{thm:oraclew}, \ref{thm:minimaxw}, and \ref{thm:nullw}
in Appendix \ref{appendix:proofs}. The upper bound in Theorem \ref{thm:minimaxw}
uses the idea of \cite[Theorem 3.3]{karger} for bounding the number of small
graph cuts by controlling the
number of cut edges in a given spanning tree. We apply this idea in
Lemma \ref{lemma:badcut}, controlling a supremum over all small cuts by
selecting a random spanning tree according to the weighting $\r$ and taking a
union bound over cuts of this tree. In conjunction with a Chernoff bound and a
standard Cauchy-Schwarz argument,
this establishes (\ref{eq:oraclew}) and (\ref{eq:minimaxupperw}) for
the exact minimizer of (\ref{eq:objw}) with high probability. We obtain
bounds in expectation using Holder's inequality to control the risk on the
complementary low-probability event. The extension to approximate minimizers
uses the factor-2 approximation guarantee for the alpha-expansion
algorithm established in \cite{boykovetal}. However, whereas the optimal
objective value for (\ref{eq:objw}) is usually dominated by the squared-error
term, we verify in Lemma \ref{lemma:factor2approx} that the approximation
factor applies not to this term but only to the $l_0$ penalty, and it holds not
only with respect to the global minimizer of (\ref{eq:objw})
but also with respect to any candidate vector $\mu$. This yields
(\ref{eq:oraclew}) and (\ref{eq:minimaxupperw}) for local
minimizers. Theorem \ref{thm:nullw} uses the preceding risk bounds together with
the observation that the optimal constant estimate is within one alpha-expansion
from any vector $\mu$. Finally, the lower bound in
Theorem \ref{thm:minimaxw} follows from an embedding of vertex-sparse vectors
into $\{\mu_0:\|D\mu_0\|_\w \leq s\}$
and a standard lower bound for sparse normal-means; similar arguments were used
in \cite{sharpnacketalwavelet,sadhanalaetal}.

\section{Simulations}\label{sec:simulations}
We study empirically the squared-errors of the approximate minimizers of
(\ref{eq:objl0}) and (\ref{eq:objw}) as returned by Algorithm
\ref{alg:boykovouter},
as well as the exact minimizer of (\ref{eq:objTV}) (computed using the
pygfl library \cite{tanseyscott}). We denote these
estimates by $\hat{\mu}^{\Lz}$, $\hat{\mu}^{\W}$, and $\hat{\mu}^{\TV}$.
We consider piecewise-constant signals over various graphs,
corrupted by Gaussian noise for various noise levels $\sigma$.
We report in each setting the standardized mean-squared-error
\begin{equation}\label{eq:stMSE}\tag{st.MSE}
\frac{1}{n\sigma^2}\|\hat{\mu}-\mu_0\|^2.
\end{equation}
Due to this normalization by $\sigma^2$, one may equivalently interpret these
results as for a fixed noise level $\sigma$ under various rescalings of the
true signal $\mu_0$.

\subsection{Parameter tuning}\label{subsec:tuning}
For Algorithm \ref{alg:boykovouter},
we fix throughout $\delta=0.01$ and $\tau=0$.
This value of $\delta$ may be larger than that prescribed by the theory of the
preceding section, but represents a compromise to yield faster runtime.

We select $\lambda$ by minimizing an empirical estimate of
$\E[\|\hat{\mu}-\mu_0\|^2]$. Typically, cross-validation is used to obtain such 
an estimate. However, we observe that
naive cross-validation does not necessarily work well for all graphs and
signals. (Consider, for example, a case where the primary contribution to error
comes from
vertices $i$ near the boundaries of the constant pieces of $\mu_0$, and
estimation of these values $\mu_{0,i}$ is more difficult when $Y_i$ is removed.)
We instead use the following procedure based on \cite{tiantaylor,harris}:
\begin{enumerate}
\item Compute an estimate $\hat{\sigma}$ for $\sigma$. Set $\alpha=0.04$.
\item For repetitions $b=1,\ldots,B$:
\begin{enumerate}
\item Generate $z=(z_1,\ldots,z_n) \sim \Normal(0,\alpha\hat{\sigma}^2\Id)$,
and set $Y^*=Y+z$ and $Y^{**}=Y-z/\alpha$.
\item For each $\lambda$, compute $\hat{\mu}$ based on data $Y^*$,
and compute $\text{err}^{(b)}(\lambda)=\|\hat{\mu}-Y^{**}\|^2$.
\end{enumerate}
\item Choose $\lambda$ that minimizes the average error
$\overline{\text{err}}(\lambda)=\frac{1}{B}\sum_{b=1}^B
\text{err}^{(b)}(\lambda)$.
\end{enumerate}
This is motivated by the insight that
if $\hat{\sigma}=\sigma$, then $Y^*$ and $Y^{**}$
are independent, so $\E[\text{err}^{(b)}(\lambda)]=n\sigma^2(1+\alpha^{-1})
+\E[\|\hat{\mu}-\mu_0\|^2]$. Hence $\overline{\text{err}}(\lambda)$
estimates a constant plus the risk of $\hat{\mu}$
applied to data at the slightly elevated noise level
$\sigma(1+\sqrt{\alpha})=1.2\sigma$. Due to this elevation in noise level, this
procedure has a slight tendency to oversmooth.

For each edge $\{i,j\}$ where $\mu_{0,i}=\mu_{0,j}$, we have
$Y_i-Y_j \sim \Normal(0,2\sigma^2)$. Hence $\sigma$ may be estimated from
the edge differences $(Y_i-Y_j)_{\{i,j\} \in E}$ by identifying a
normal mixture component corresponding to this subset of values; we used
the method of \cite{efron} as implemented in the locfdr R package.
Increasing $B$ reduces the variability of the selection procedure. For the
smaller graphs (linear chain, Oldenburg, Gnutella P2P) we set $B=20$, and for
the larger graphs (2-D cow, San Francisco, Enron email) we set $B=5$.

We will report both the st.MSE achieved using this method, as well as the
best-attained st.MSE corresponding to retrospective optimal tuning of $\lambda$.
For (\ref{eq:objTV}), $\lambda$ may alternatively be selected by minimizing
Stein's unbiased risk estimate (SURE) using the simple degrees-of-freedom
formula derived in \cite{tibshiranitaylor}. We found results of the SURE
approach to be very close to those obtained using the above procedure.

\subsection{Empirical runtime}\label{subsec:runtime}
For Algorithm \ref{alg:boykovouter}, we computed minimum s-t cuts using the
method of \cite{boykovkolmogorov}. The outer loop required
no more than 15 iterations, and typically fewer than 10 iterations,
in all tested examples. Table \ref{tab:runtime} displays the average runtime
of this algorithm on our personal computer for computing $\hat{\mu}^{\Lz}$ with
a single value of $\lambda$. The runtime of this algorithm
for computing $\hat{\mu}^\W$ was comparable, although 
computing effective resistance weights required an additional a priori cost of
10 seconds, 3 hours, 45 seconds, and 30 minutes for the four networks in the
order listed, using the approxCholLap method of the Laplacians-0.2.0 Julia
package with error tolerance $0.01$. (The effective resistance computation is a
one-time cost per network, reusable across different $\lambda$ values and data
vectors $Y$.) Parameter tuning using the approach of Section
\ref{subsec:tuning} is slower as it requires running the method multiple times
over a range of $\lambda$ values, although this computation is easily
parallelized.

\begin{table}[b]
\begin{tabular}{l|cccccc}
Graph & 1-D & cow & Oldenburg & San Fran. & Gnutella & Enron \\
\hline
Runtime (seconds) & 0.13 & 45 & 0.7 & 40 & 4 & 240
\end{tabular}
\caption{Average computational time of Algorithm \ref{alg:boykovouter}
for one value of $\lambda$}\label{tab:runtime}
\end{table}

\subsection{Linear chain graph}
\begin{figure}
\includegraphics[width=0.33\textwidth]{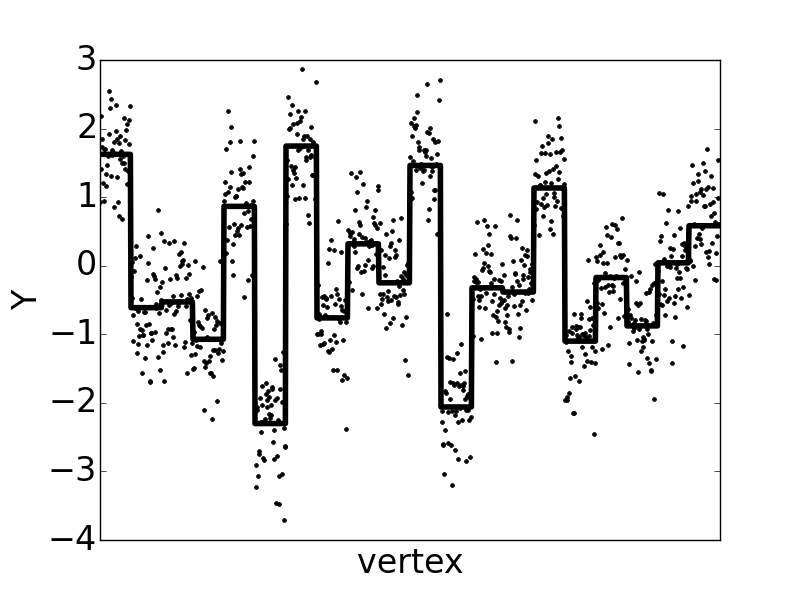}%
\includegraphics[width=0.33\textwidth]{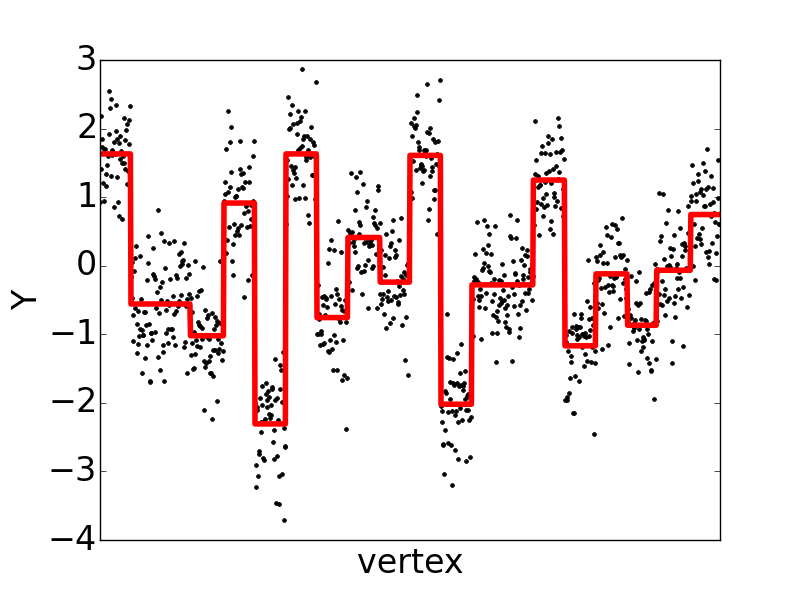}%
\includegraphics[width=0.33\textwidth]{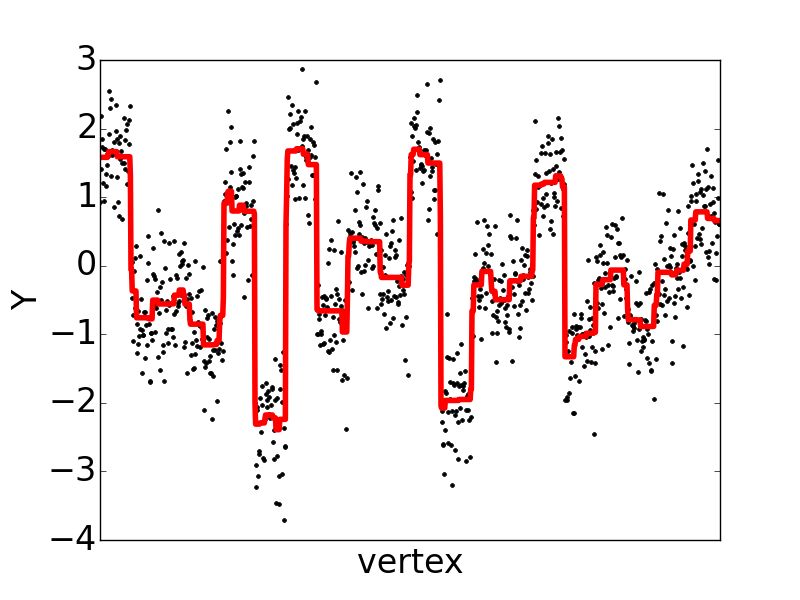}
\includegraphics[width=0.33\textwidth]{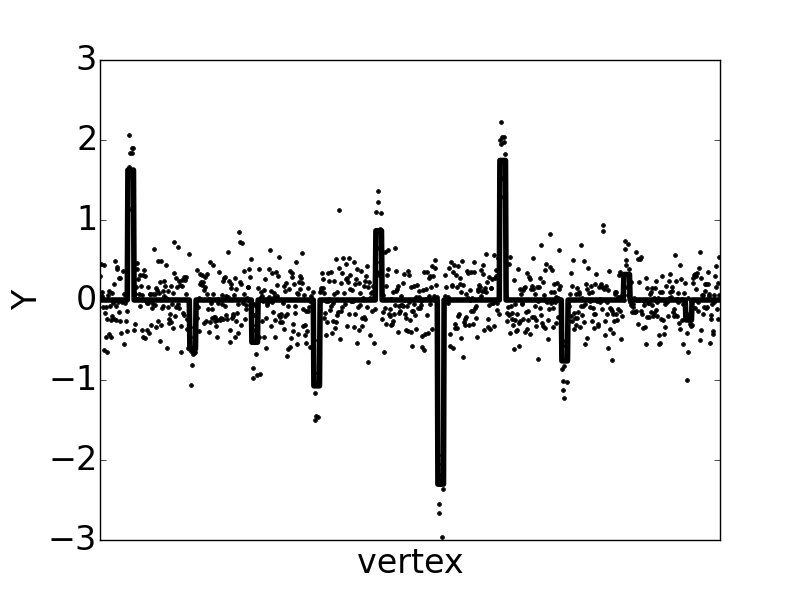}%
\includegraphics[width=0.33\textwidth]{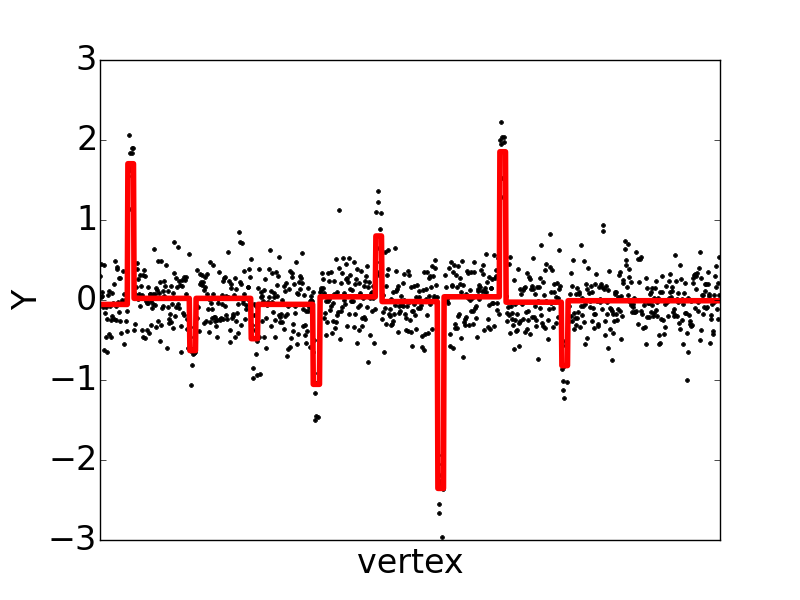}%
\includegraphics[width=0.33\textwidth]{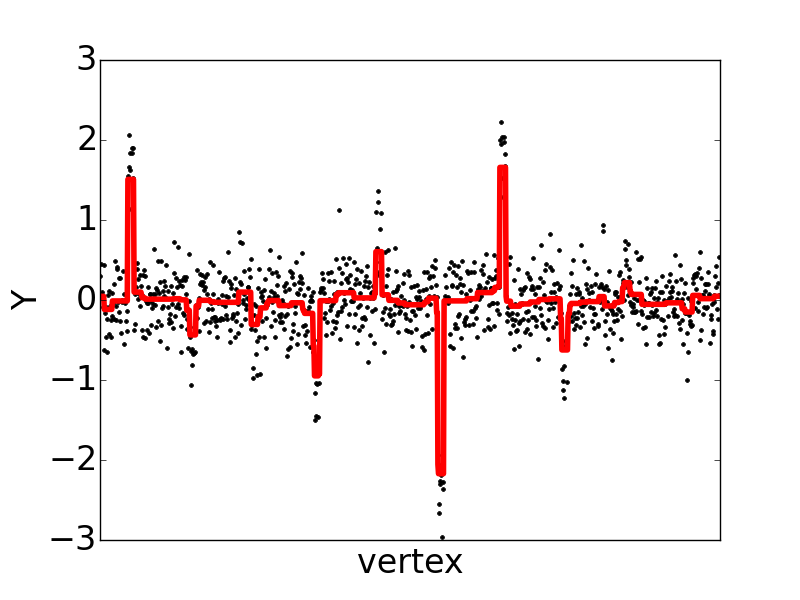}
\caption{Signals on a linear chain graph. Top: Equally-spaced breaks,
$\sigma=0.5$. Bottom: Unequally-spaced breaks, $\sigma=0.3$.
The true signal $\mu_0$ is displayed on the left, $\hat{\mu}^\Lz$
in the middle, and $\hat{\mu}^\TV$ on the right (both with data-tuned
$\lambda$).}\label{fig:1Dsignals}
\end{figure}
\begin{figure}
\includegraphics[width=0.5\textwidth]{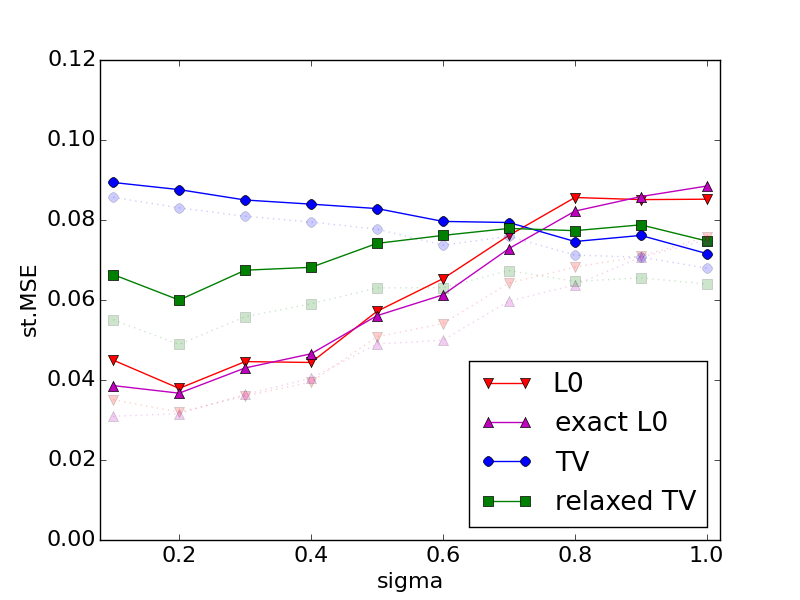}%
\includegraphics[width=0.5\textwidth]{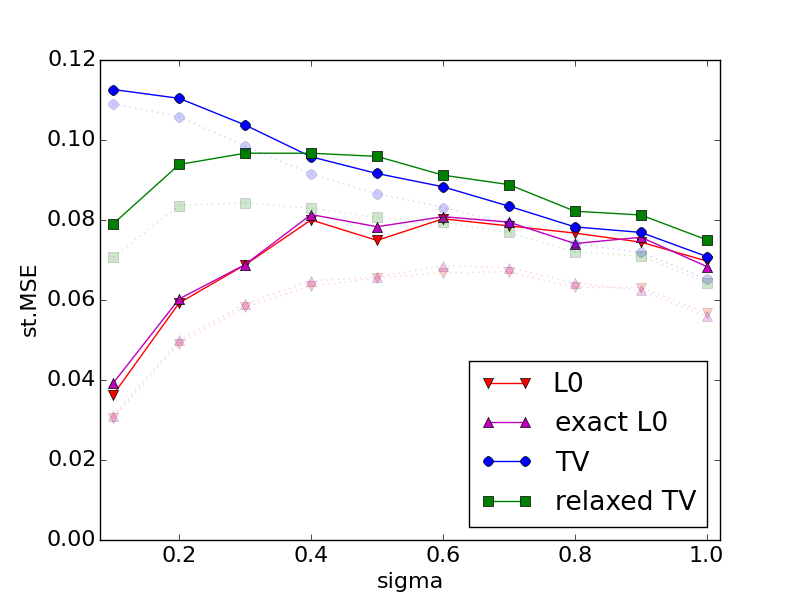}
\caption{Comparisons of st.MSE in (left) the equally-spaced and
(right) unequally-spaced examples of Figure \ref{fig:1Dsignals},
for $\hat{\mu}^\Lz$, $\hat{\mu}^\TV$, the exact minimizer of (\ref{eq:objl0}),
and $\hat{\mu}^{\TV,\text{relaxed}}$. Solid lines correspond to
data-tuned $\lambda$, and dashed transparent lines to best-achieved error.
All errors are averaged over 100 replicates of the
simulated noise.}\label{fig:1Derror}
\end{figure}
\begin{figure}
\includegraphics[width=0.5\textwidth]{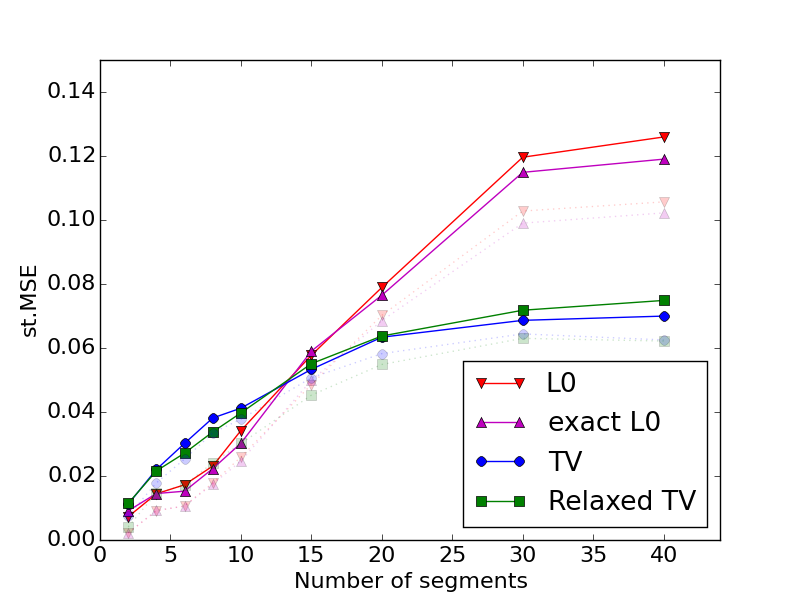}%
\includegraphics[width=0.5\textwidth]{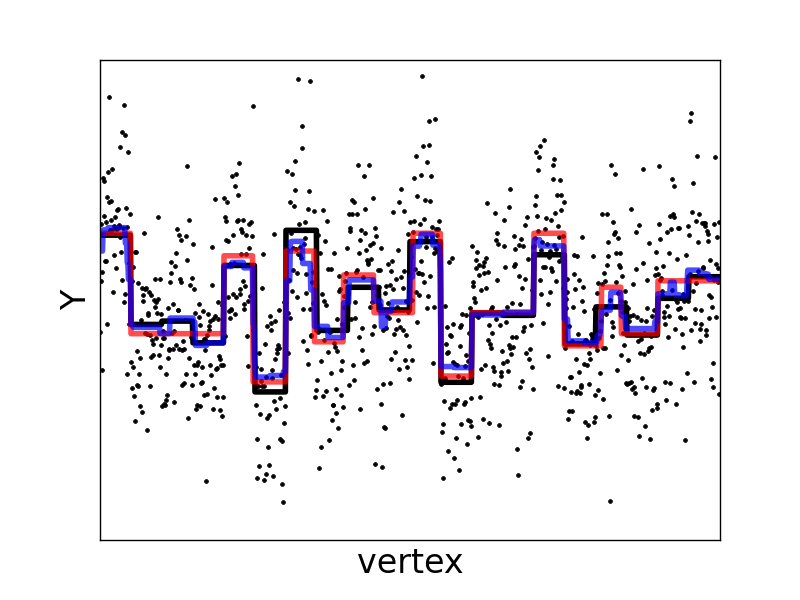}
\caption{(Left) Comparisons of st.MSE for signals of fixed total-variation
$\|D\mu_0\|_1/\sigma=20$ and increasing numbers of segments/decreasing jump
sizes. (Right) Raw data and true signal in black, $\hat{\mu}^\Lz$ in red, and
$\hat{\mu}^\TV$ in blue, for the signal with 20 segments.}\label{fig:1DconstL1}
\end{figure}

Two signals on a linear chain graph with $n=1000$ vertices are
depicted in Figure \ref{fig:1Dsignals}. The first signal
has 19 equally-spaced break points, while the second has 20 break points at
unequal spacing. We studied recovery for
noise levels $\sigma=0.1$ to $\sigma=1$. Figure \ref{fig:1Dsignals} displays
one instance of simulated noise and the resulting estimates
$\hat{\mu}^\Lz$ and $\hat{\mu}^\TV$. In both examples, for data-tuned
$\lambda$, $\hat{\mu}^\Lz$ tends to over-smooth (missing two and four
changepoints respectively) and $\hat{\mu}^\TV$ tends to undersmooth.

Figure \ref{fig:1Derror} plots st.MSE comparisons for $\hat{\mu}^\Lz$ and
$\hat{\mu}^\TV$. The $\hat{\mu}^\Lz$ estimate
achieves significantly smaller risk than $\hat{\mu}^\TV$ at higher
signal-to-noise regimes, for example those displayed in Figure
\ref{fig:1Dsignals}, while $\hat{\mu}^\TV$ becomes competitive or better
in lower signal-to-noise regimes, corresponding to lower values of
normalized total-variation $\|D\mu_0\|_1/\sigma$ for the true signal.
Figure \ref{fig:1DconstL1} presents a different example to further explore this
trade-off, in which the normalized total-variation of the signal is fixed at
$\|D\mu_0\|_1/\sigma=20$, and we increase the number of
changepoints while simultaneously decreasing the jump sizes. (Changepoints are
equally spaced, and distinct signal values are normally distributed.)
The estimate $\hat{\mu}^\Lz$ is better under strong edge-sparsity, while
$\hat{\mu}^\TV$ becomes better as we transition to weaker edge-sparsity.

For the linear chain, we may compare $\hat{\mu}^\Lz$ with the exact minimizer
of (\ref{eq:objl0}) (computed using PELT in the changepoint R package
\cite{killicketal}). Algorithm \ref{alg:boykovouter} achieves risk comparable
to the exact minimizer in all tested settings. One may ask, at the higher
signal-to-noise regimes, how much of the sub-optimality of $\hat{\mu}^\TV$ is
due to estimator bias incurred by shrinkage. To address this, we computed
also the ``relaxed'' TV estimate
\[\hat{\mu}^{\TV,\text{relaxed}}=\alpha \hat{\mu}^{\TV}+(1-\alpha)
\hat{\mu}^{\TV,\text{debiased}}\]
where $\alpha \in \{0,0.1,0.2,\ldots,1\}$ is an additional tuning parameter, and
where $\hat{\mu}^{\TV,\text{debiased}}$ replaces each
constant interval of $\hat{\mu}^\TV$ with the mean of $Y$ over this interval.
The error of $\hat{\mu}^\TV$ at high signal-to-noise is partially reduced,
but not to the same levels as $\hat{\mu}^\Lz$.

\subsection{2-D lattice graph}
\begin{figure}
\includegraphics[width=0.25\textwidth]{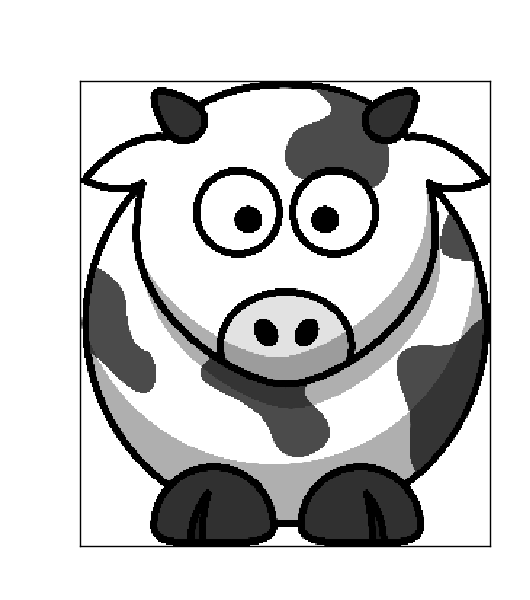}%
\includegraphics[width=0.25\textwidth]{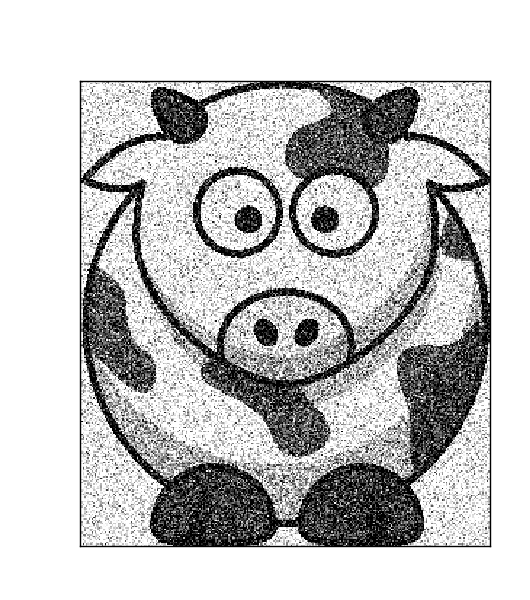}%
\includegraphics[width=0.25\textwidth]{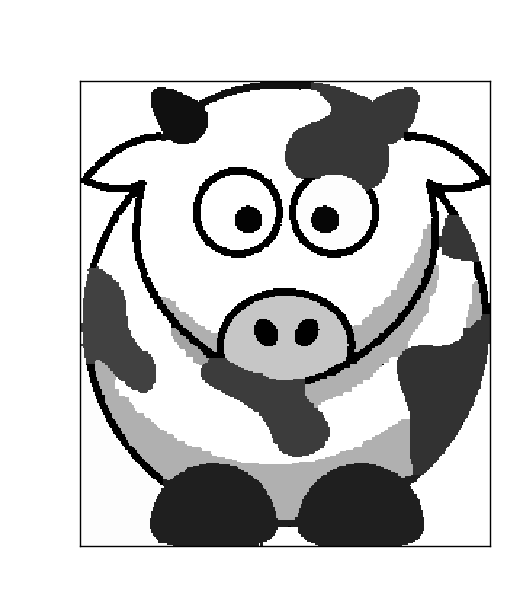}%
\includegraphics[width=0.25\textwidth]{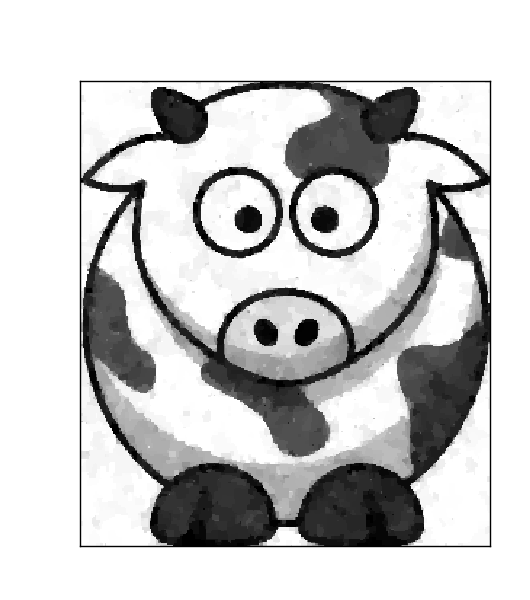}
\caption{(far left) Original image, with pixel values normalized to $[0,1]$.
(middle left) Noisy image, $\sigma=0.3$. (middle right) 
$\hat{\mu}^\Lz$ and (far right) $\hat{\mu}^\TV$, both with data-tuned
$\lambda$.}\label{fig:cow}
\end{figure}
Figure \ref{fig:cow} displays a cartoon gray-scale image of a cow, represented
by its pixel values on a 2-D lattice graph of size $320 \times 283$. Pure white
corresponds to $\mu_0=1$, and pure black to $\mu_0=0$. The figure also displays
$\hat{\mu}^\Lz$ and $\hat{\mu}^\TV$ when the image is contaminated by noise at
level $\sigma=0.3$. We again observe that $\hat{\mu}^\Lz$ oversmooths, missing
details in the cow's feet, right horn, and the shadows of the image.
In contrast, $\hat{\mu}^\TV$ undersmooths and returns a blotchy cow.

\begin{table}[b]
\input{cow_errors.tab}
\caption{Comparison of st.MSE for the cow image of Figure \ref{fig:cow}.
Non-parenthesized values correspond to data-tuned $\lambda$, and parenthesized
values to best-attained error.}
\label{tab:cow}
\end{table}
Table \ref{tab:cow} reports st.MSE comparisons for $\sigma=0.1$ to $\sigma=0.5$.
At the level $\sigma=0.3$ displayed in Figure \ref{fig:cow}, the st.MSE of
$\hat{\mu}^\Lz$ is slightly greater than that of $\hat{\mu}^\TV$. At higher
signal-to-noise levels, $\hat{\mu}^\Lz$ is better, while it is worse at
lower signal-to-noise.

\subsection{Road and digital networks}
We tested signal recovery over four real networks: the Oldenburg
and San Francisco road networks from
\url{www.cs.utah.edu/~lifeifei/SpatialDataset.htm},
and the Gnutella08 peer-to-peer network and Enron email network from
\url{snap.stanford.edu/data}. Duplicate edges were removed, and
only the largest connected component of each network was retained.

\begin{table}
\input{network_properties.tab}
\caption{For each network: Number of total vertices, total edges, variability of
effective edge resistances (measured by standard deviation / mean), and numbers
of infected vertices and cut edges corresponding to the signal at three
observation times.}\label{tab:networks}
\end{table}

\begin{figure}
\includegraphics[width=0.33\textwidth]{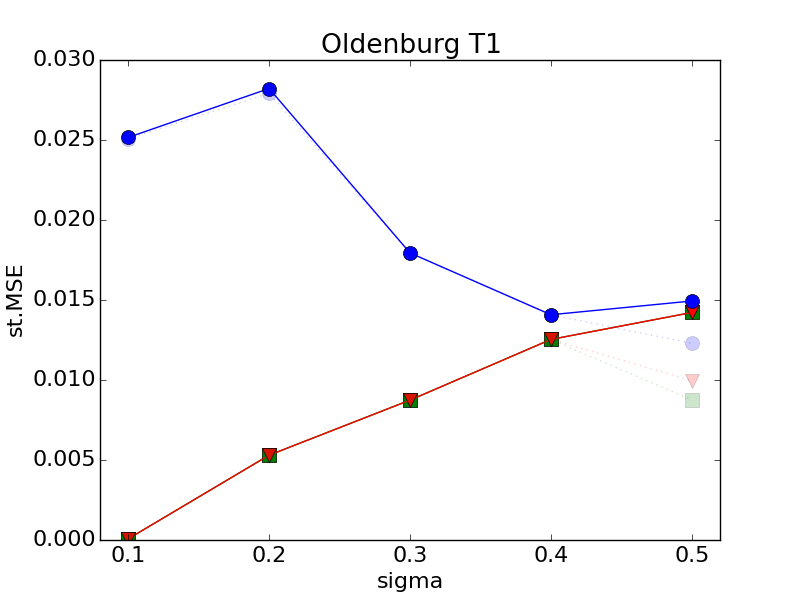}%
\includegraphics[width=0.33\textwidth]{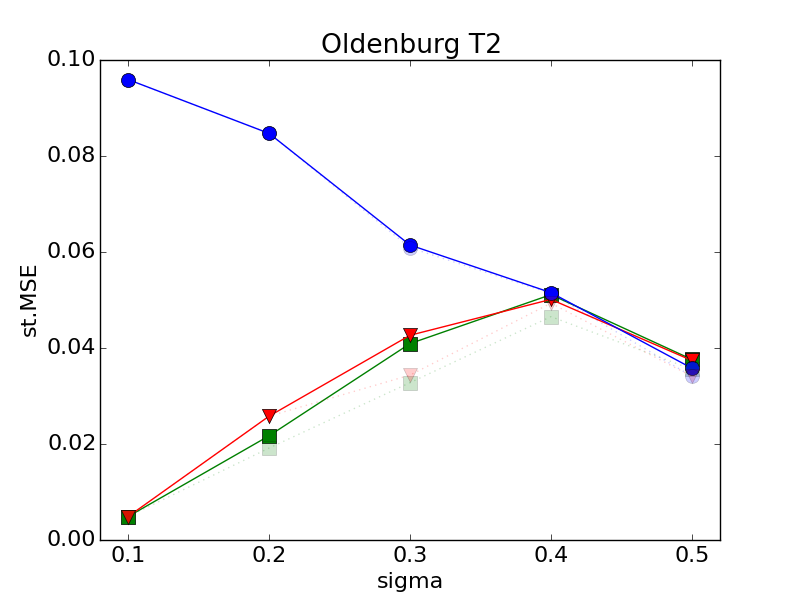}%
\includegraphics[width=0.33\textwidth]{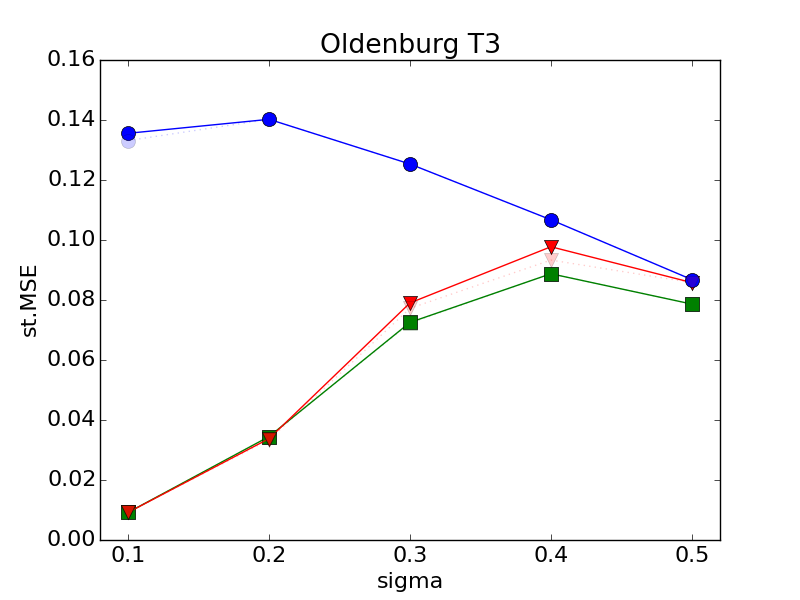}
\includegraphics[width=0.33\textwidth]{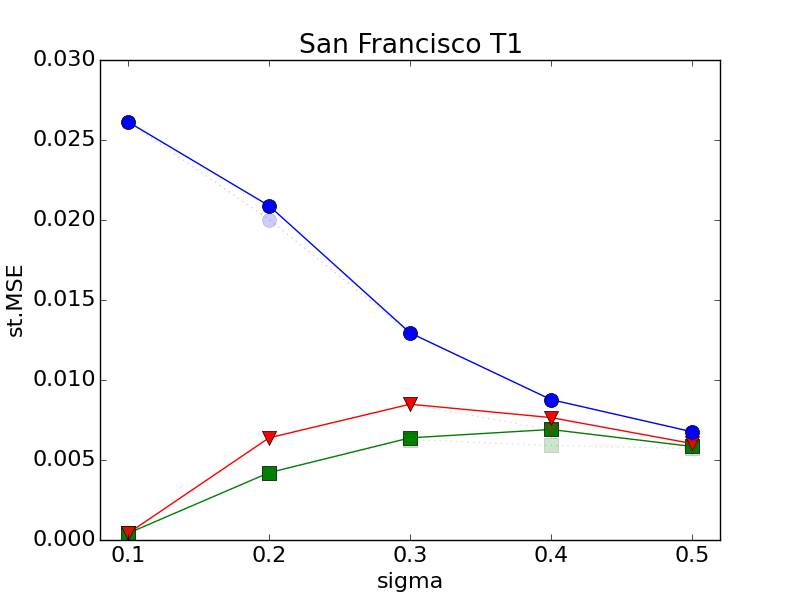}%
\includegraphics[width=0.33\textwidth]{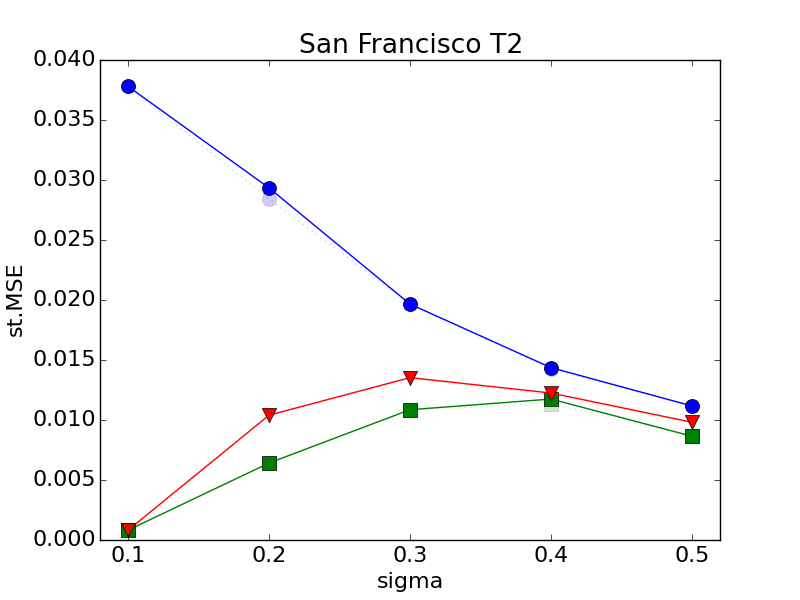}%
\includegraphics[width=0.33\textwidth]{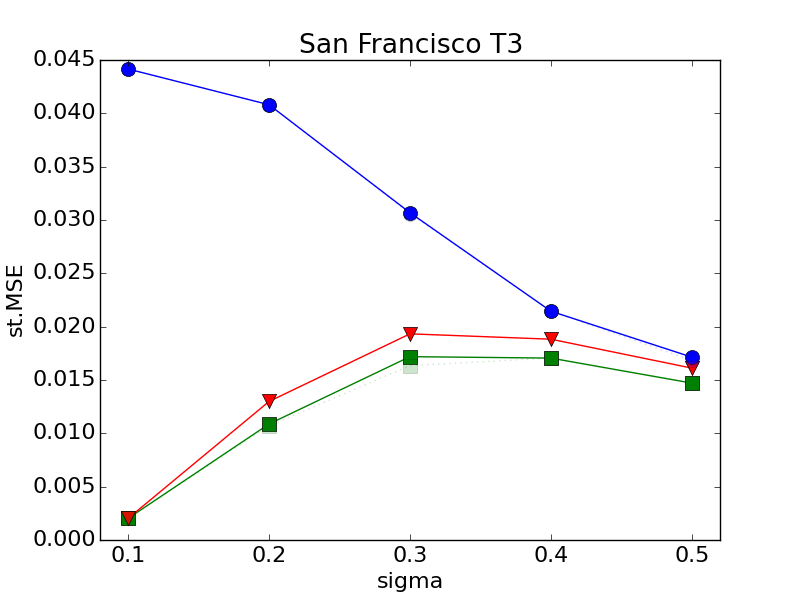}
\includegraphics[width=0.33\textwidth]{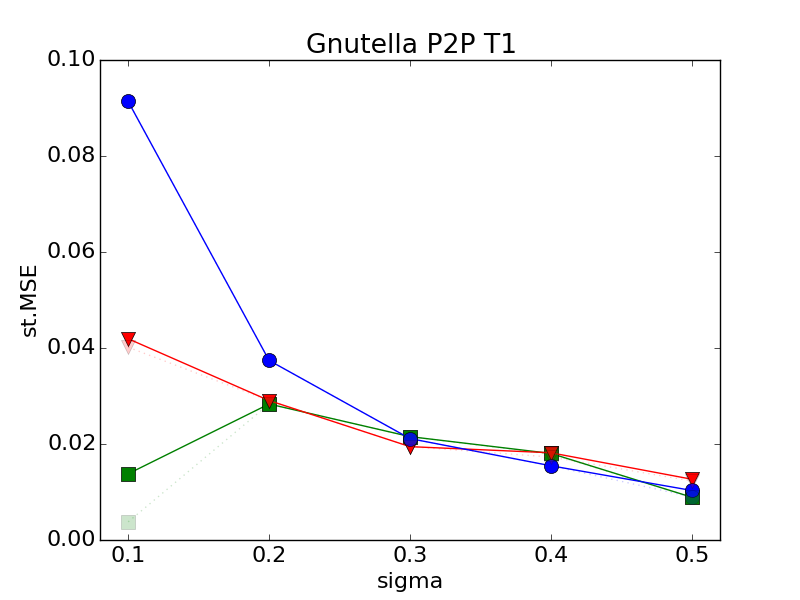}%
\includegraphics[width=0.33\textwidth]{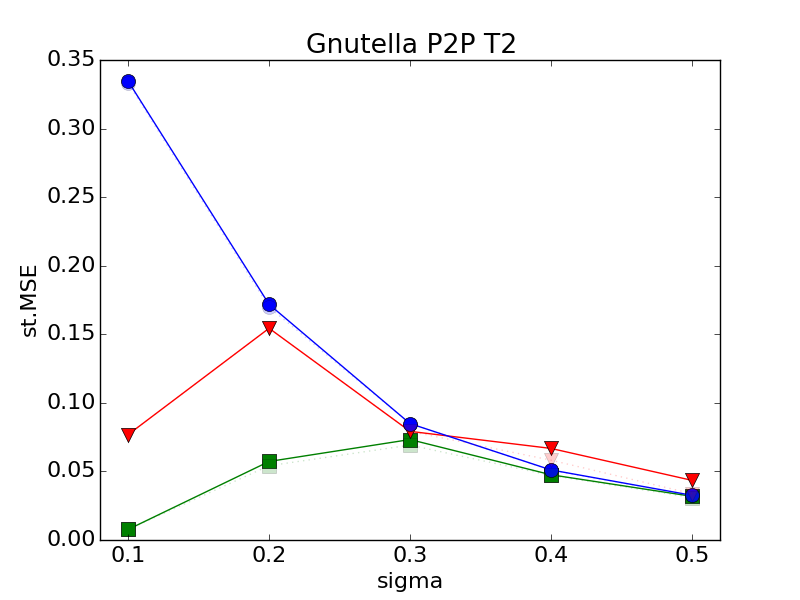}%
\includegraphics[width=0.33\textwidth]{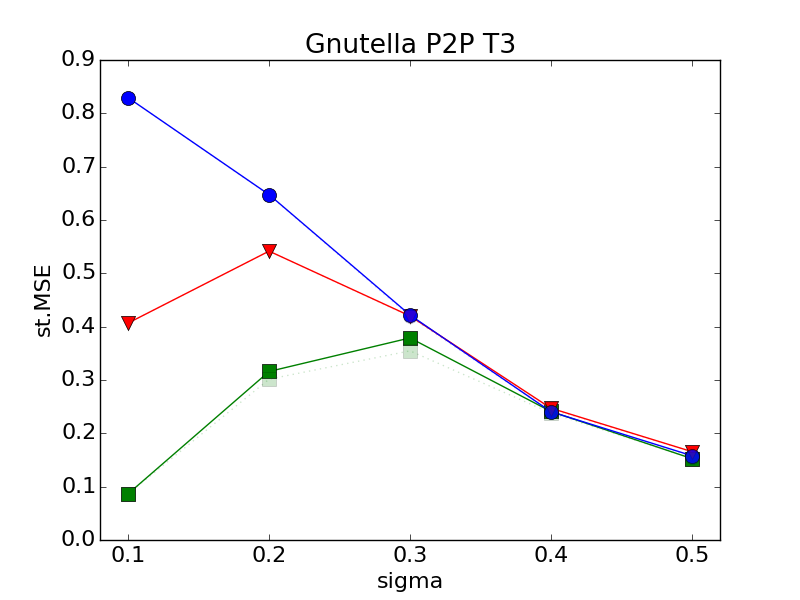}
\includegraphics[width=0.33\textwidth]{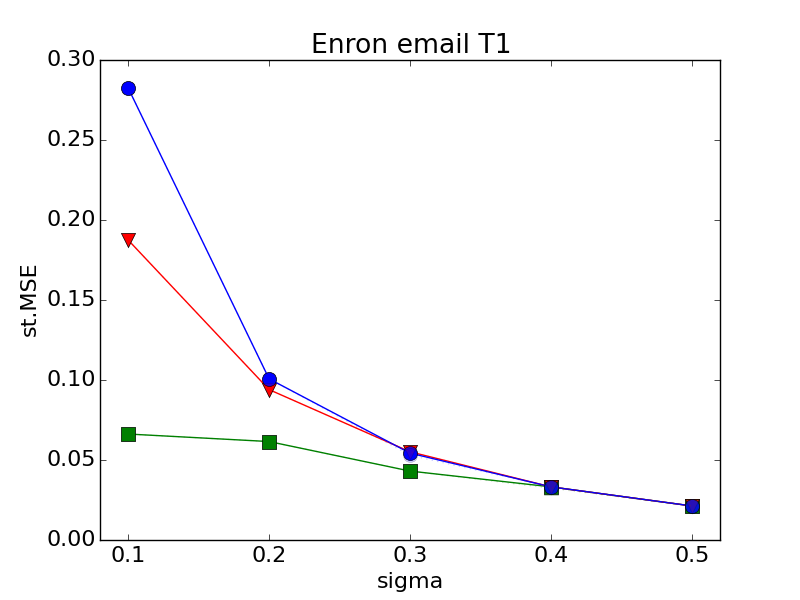}%
\includegraphics[width=0.33\textwidth]{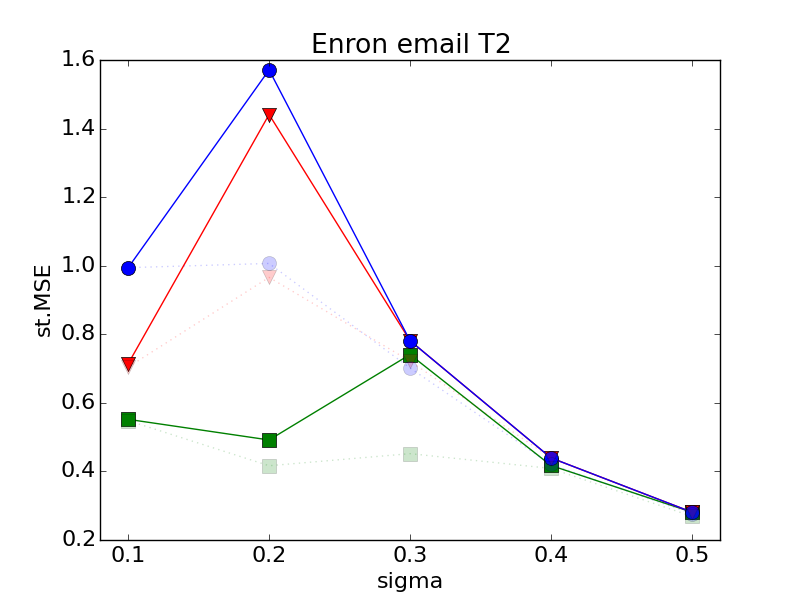}%
\includegraphics[width=0.33\textwidth]{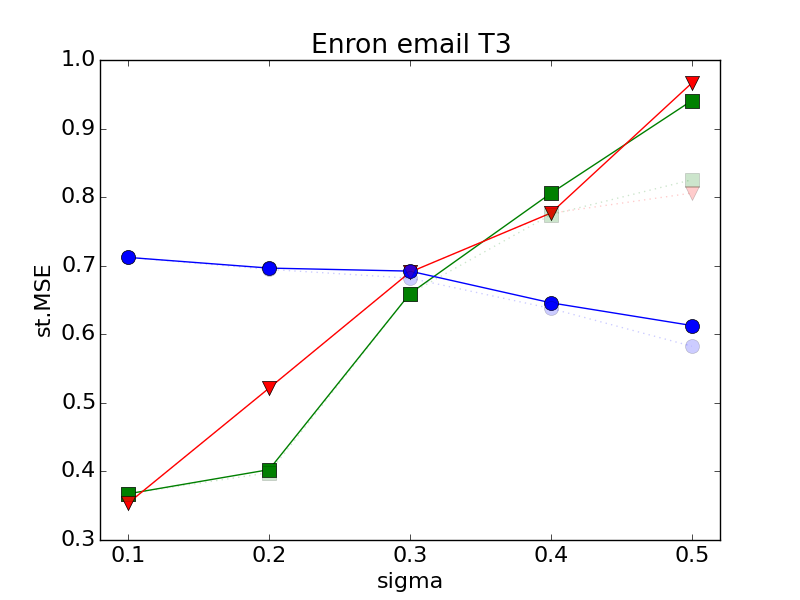}
\includegraphics[width=0.33\textwidth]{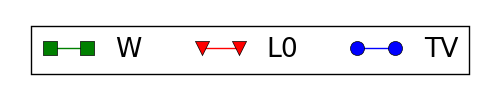}
\caption{Comparisons of st.MSE for recovery of epidemic signals on four
networks, for $\hat{\mu}^\Lz$, $\hat{\mu}^\TV$, and $\hat{\mu}^\W$ with
effective-resistance edge weights. Solid lines correspond to data-tuned
$\lambda$, and dashed transparent lines to best-achieved
error.}\label{fig:networkerrors}
\end{figure}

For each network, we simulated an epidemic according to a simple graph-based
discrete-time SI model \cite{moorenewman},
randomly selecting a source vertex to infect at time $t=0$ and, for each of
$T$ timesteps, allowing each infected vertex to independently infect each
non-infected neighbor with probability 0.5. We
associated the values $\mu_0=1.005$ and $\mu_0=0.005$ to infected and
non-infected vertices. For each
network, we considered three signals corresponding to observations of the
epidemic at three different times $T$. Various properties of these networks and
signals are summarized in Table \ref{tab:networks}.

\begin{figure}
\begin{minipage}{0.05\textwidth}
\includegraphics[width=\textwidth]{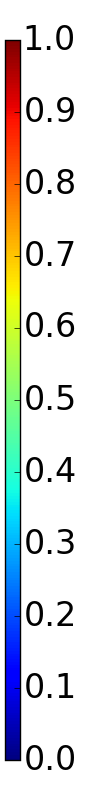}
\end{minipage}%
\begin{minipage}{0.2\textwidth}
\includegraphics[width=\textwidth]{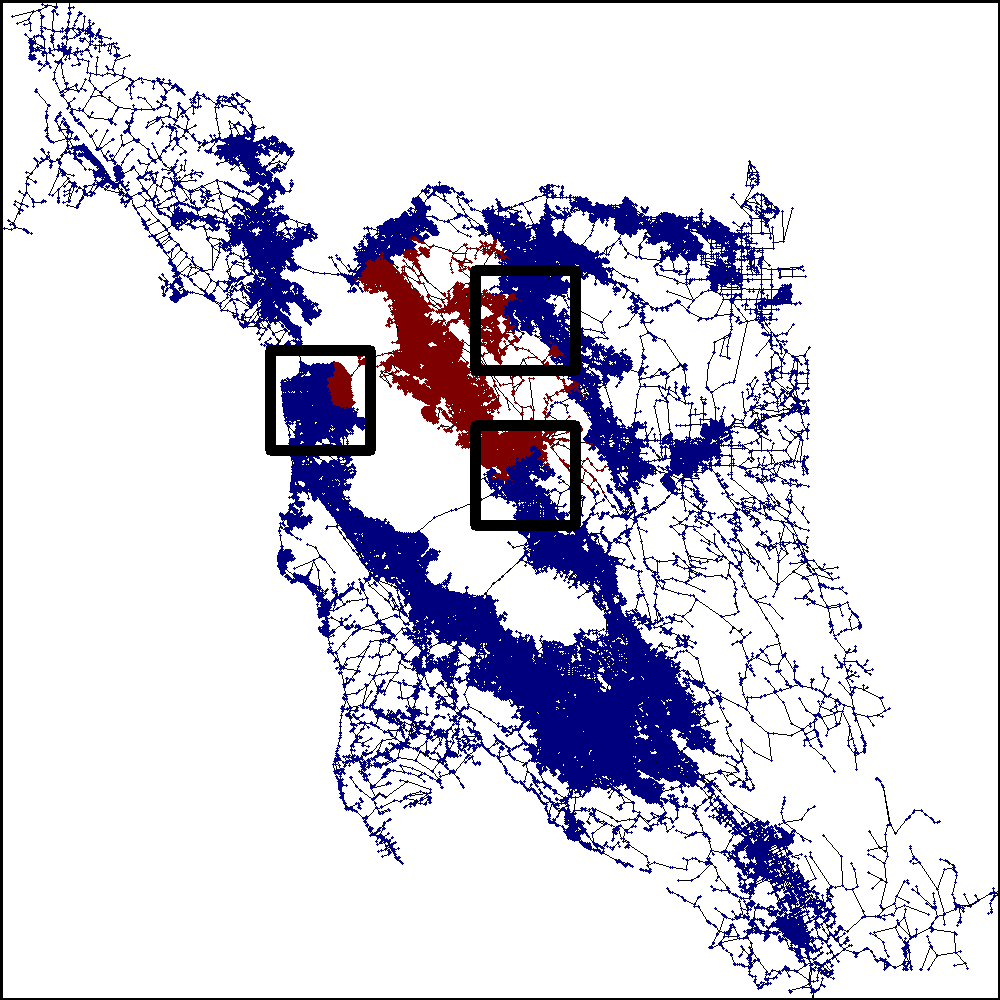}
\end{minipage}
\begin{minipage}{0.7\textwidth}
\includegraphics[width=0.33\textwidth]{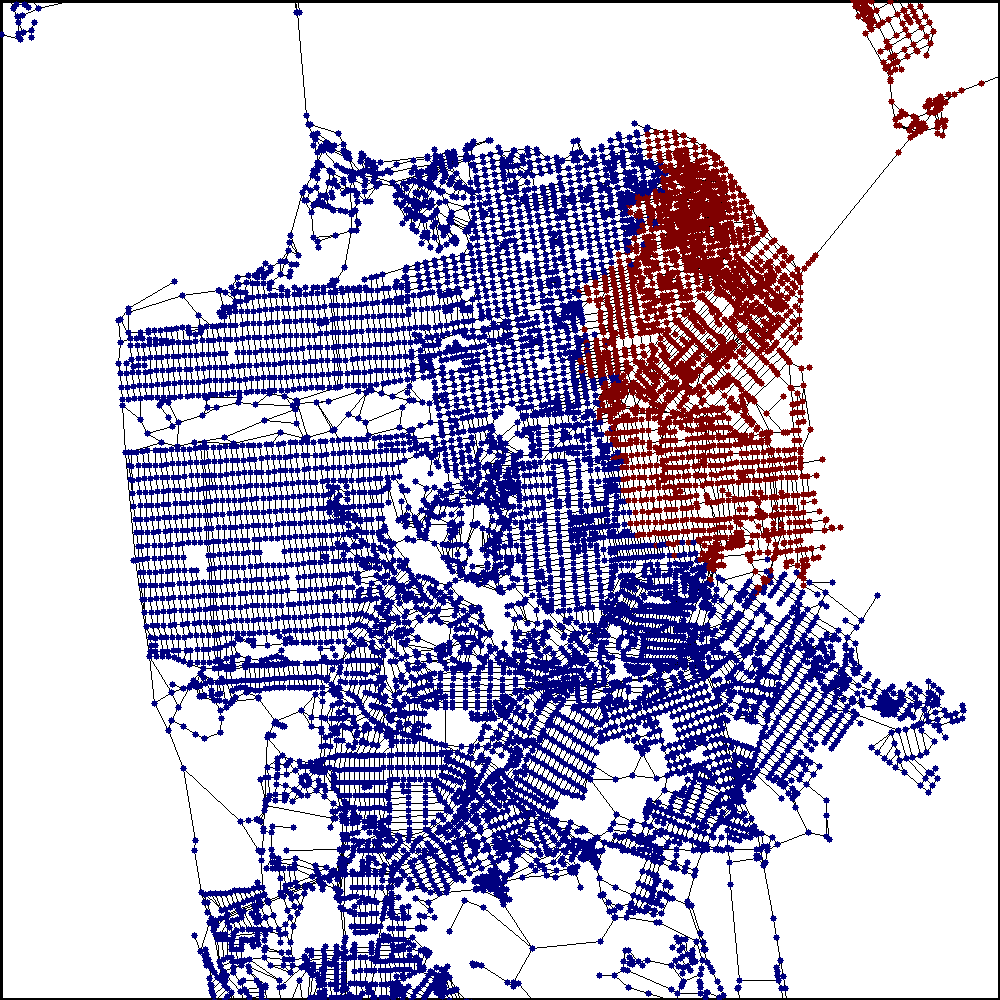}%
\includegraphics[width=0.33\textwidth]{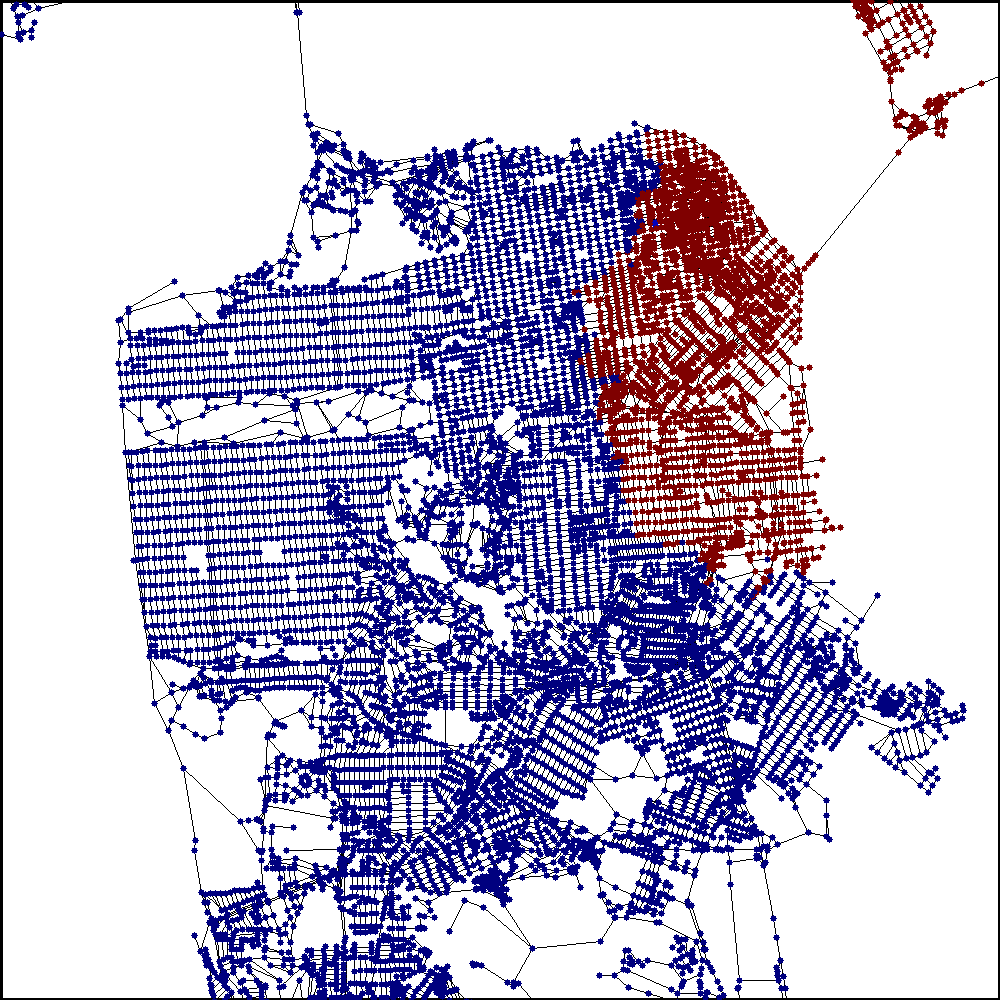}%
\includegraphics[width=0.33\textwidth]{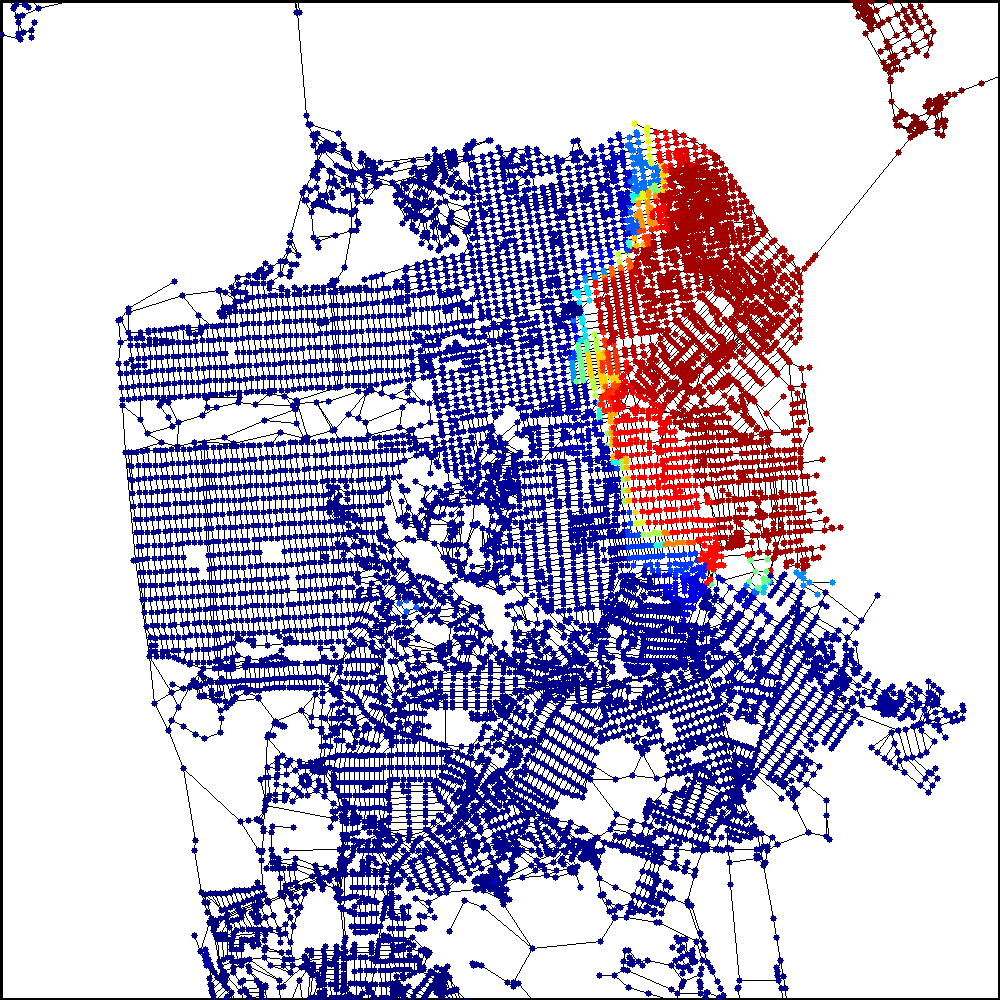}
\includegraphics[width=0.33\textwidth]{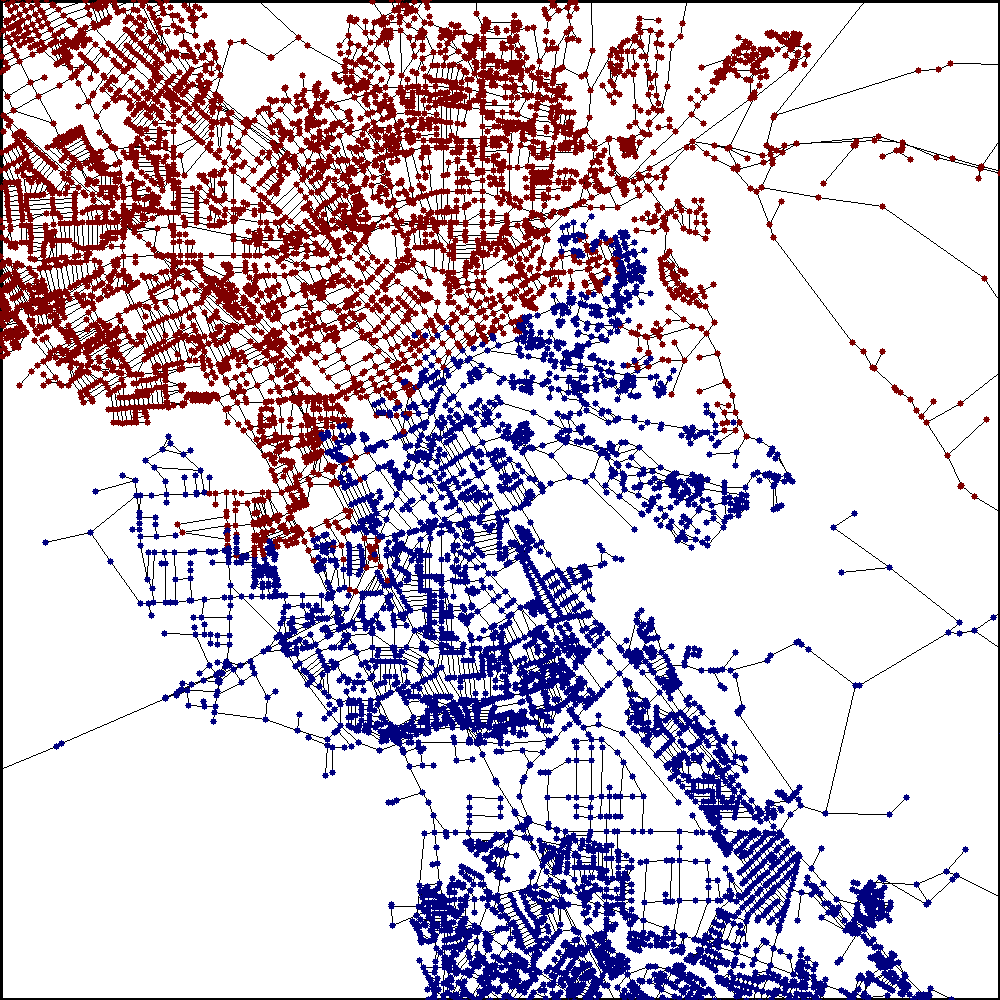}%
\includegraphics[width=0.33\textwidth]{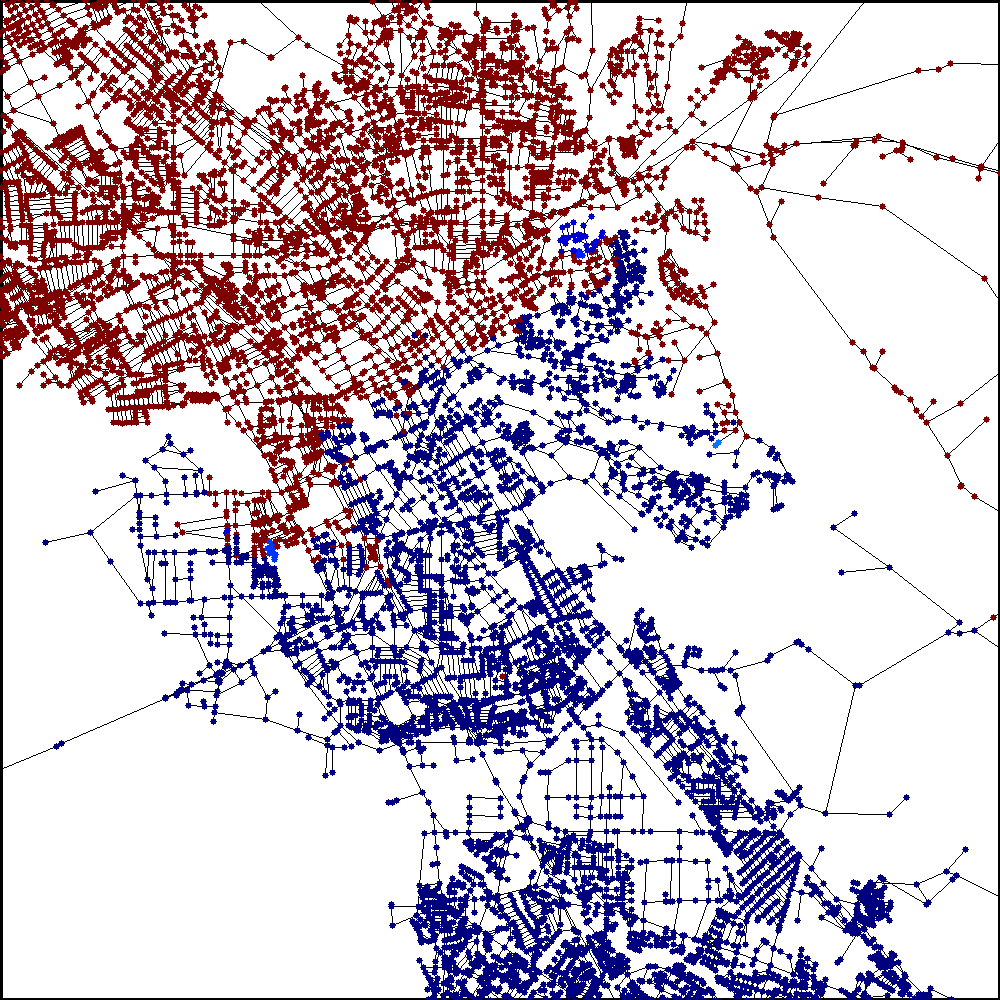}%
\includegraphics[width=0.33\textwidth]{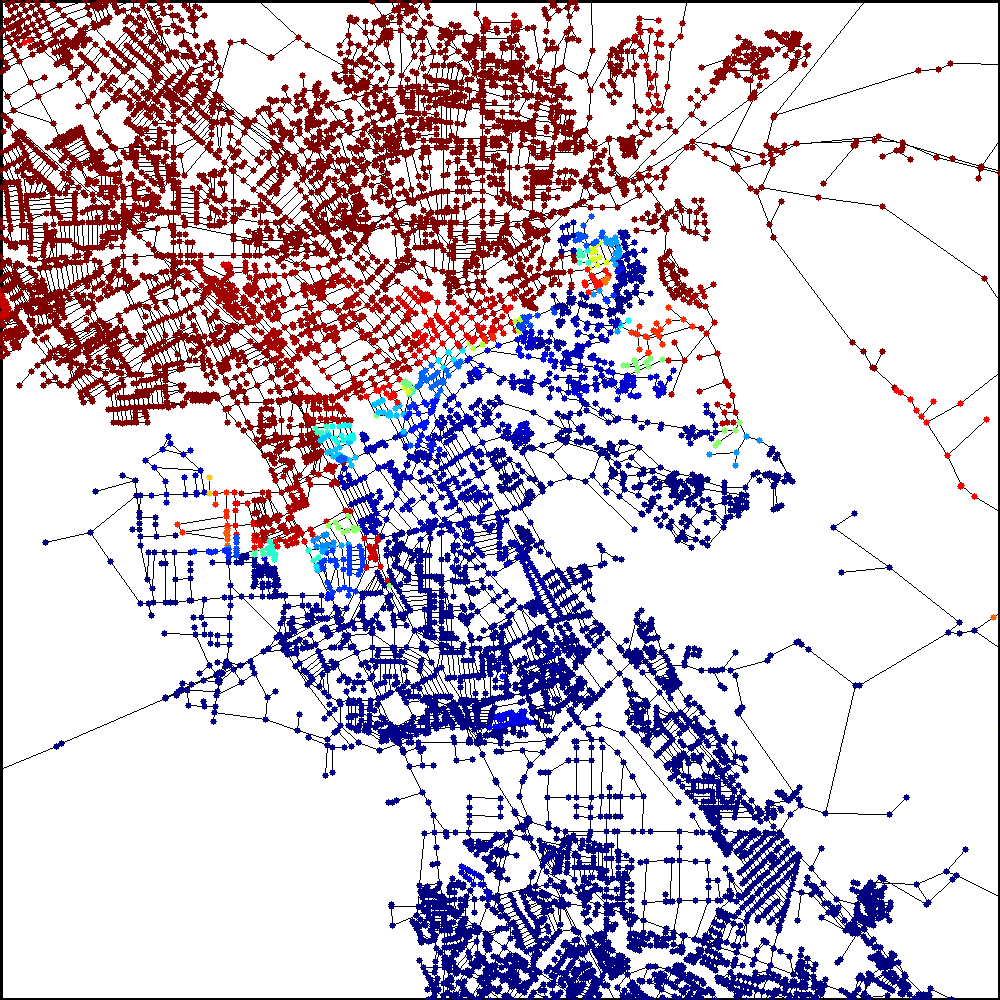}
\includegraphics[width=0.33\textwidth]{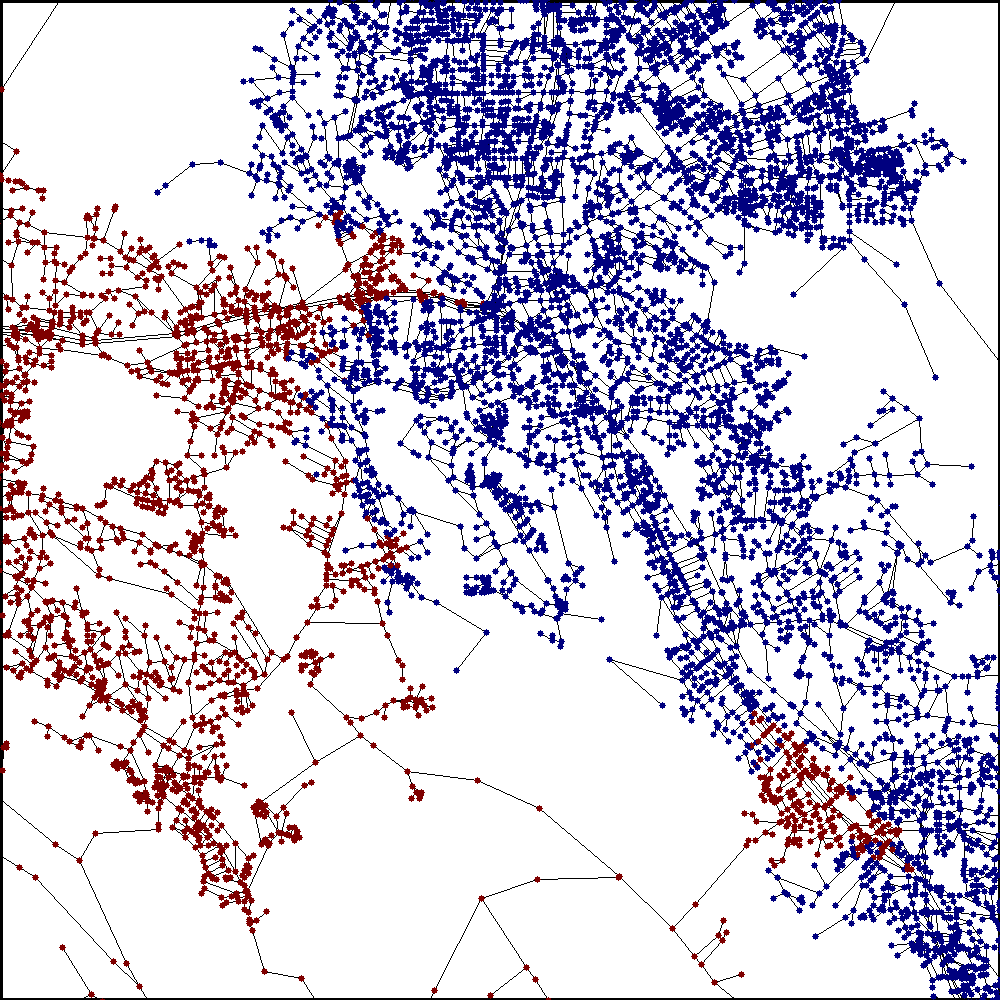}%
\includegraphics[width=0.33\textwidth]{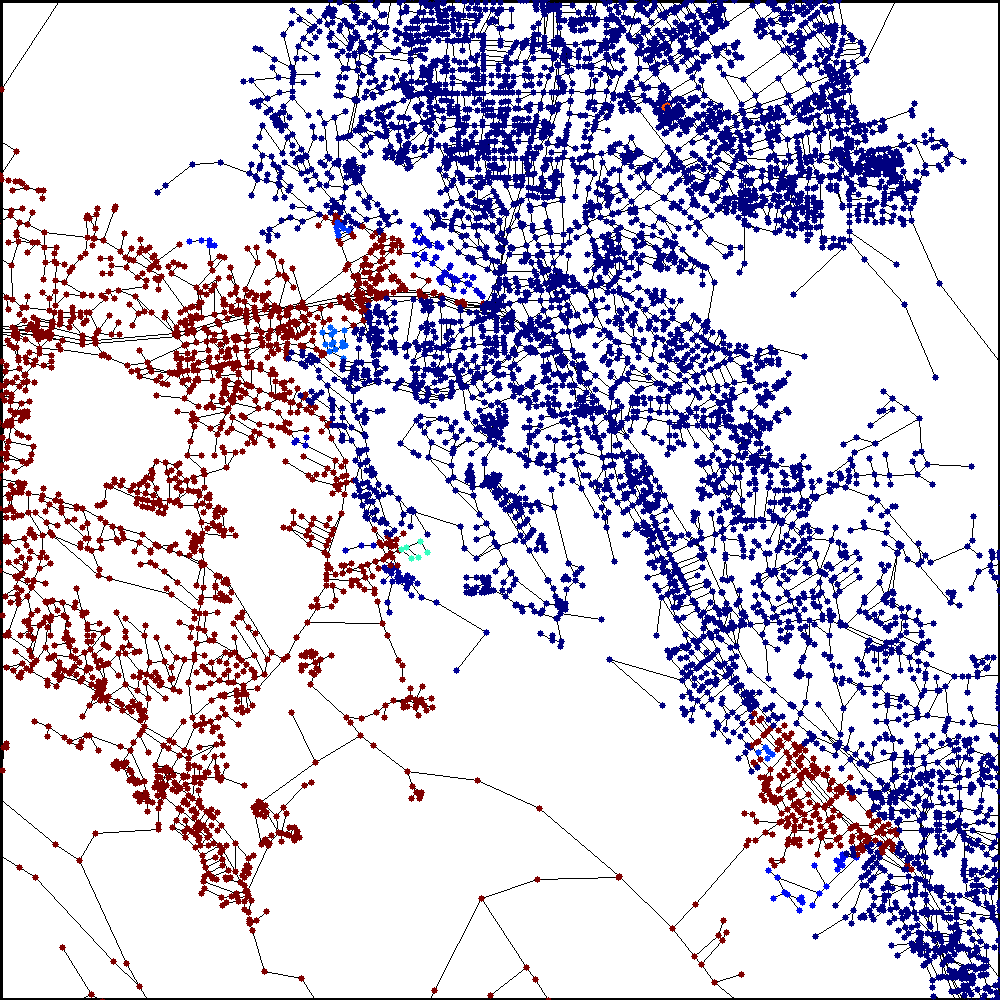}%
\includegraphics[width=0.33\textwidth]{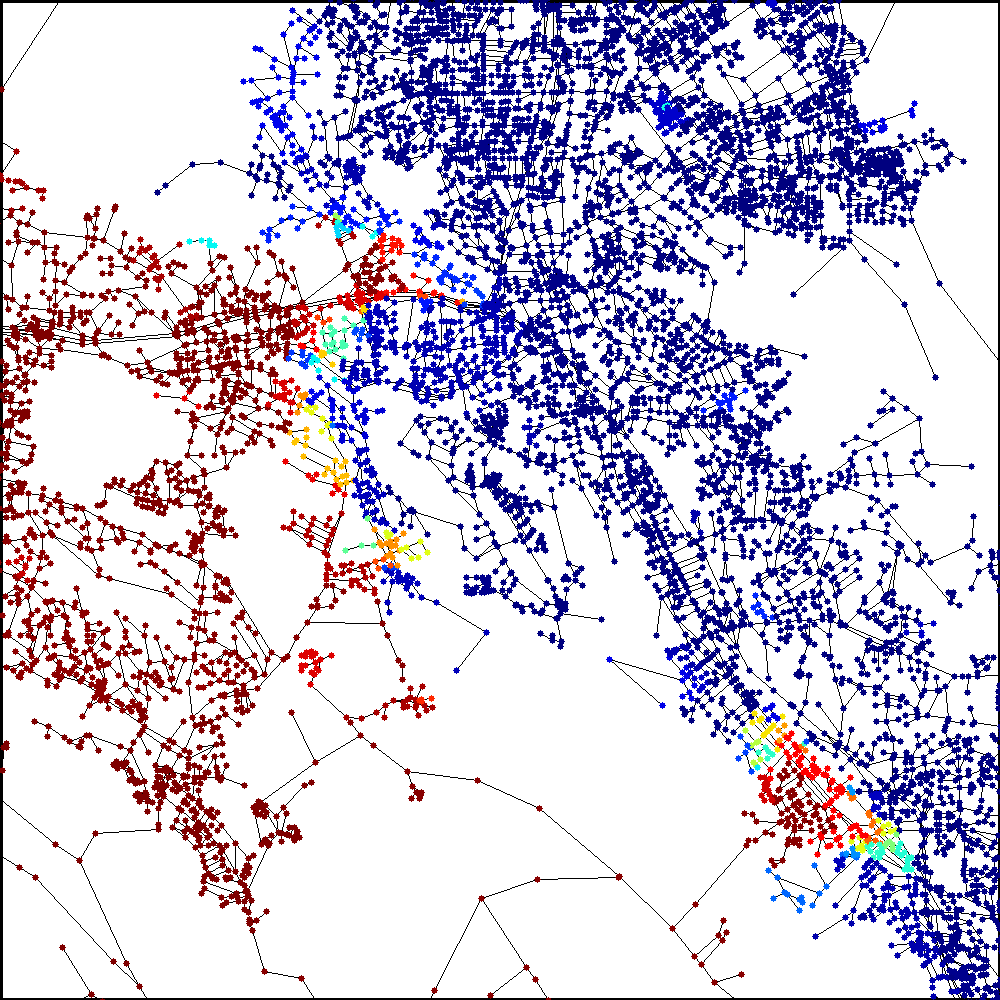}
\end{minipage}
\caption{A simulated epidemic signal over the San Francisco road network is
displayed on the far left. Three columns display (left) the true signal,
(middle) $\hat{\mu}^\W$, and (right) $\hat{\mu}^\TV$ for the three boxed areas
of the map. Both $\hat{\mu}^\W$ and $\hat{\mu}^\TV$ are estimated on noisy data
with $\sigma=0.3$, using data-tuned $\lambda$, and $\hat{\mu}^\W$ uses
effective-resistance edge weighting.}\label{fig:SF}
\end{figure}

For noise level $\sigma=0.3$, the simulated signal, $\hat{\mu}^\TV$, and
$\hat{\mu}^\W$ computed with effective-resistance edge weights are depicted in
Figure \ref{fig:SF} for the San Francisco road network at observation time
$T_2$. The most difficult regions to estimate are the signal
boundaries; we zoom in on three regions of the map where
$\hat{\mu}^\TV$ is inaccurate at these boundaries,
but $\hat{\mu}^\W$ is mostly correct. At this noise level, the st.MSE of
$\hat{\mu}^\TV$ exceeds $\hat{\mu}^\W$ by a factor of about 2.

Figure \ref{fig:networkerrors} displays st.MSE comparisons for $\hat{\mu}^\Lz$,
$\hat{\mu}^\TV$, and $\hat{\mu}^\W$ with effective-resistance weighting
at noise levels $\sigma=0.1$ to $\sigma=0.5$ in each example. We observe that
$\hat{\mu}^\W$ is not substantially worse than $\hat{\mu}^\Lz$ in any tested
setting, and that in the Gnutella and Enron digital networks where there is
large variation in effective edge resistances, $\hat{\mu}^\W$ is sometimes
substantially better. At the tested noise levels, these methods are
(with the exception of Enron $T_3$) not substantially worse than
$\hat{\mu}^\TV$, and can be substantially better in the lower noise settings.

\section{Conclusion}
We have studied estimation of piecewise-constant signals over arbitrary
graphs using an $l_0$ edge penalty, establishing minimax rate-optimal
statistical guarantees for the
local minimizer computed by an approximation algorithm for minimizing this
objective. We have shown theoretically that the same guarantees are not
necessarily achieved by total-variation denoising, and empirically that
$l_0$-penalization may be more effective in high signal-to-noise settings.
For application to networks with regions of varying connectivity, we have
proposed minimization of an edge-weighted objective, which
achieves better empirical performance in tested examples and leads
to theoretical guarantees that are spatially uniform over all graphs.

We note that while Algorithm \ref{alg:boykovouter} is provably polynomial-time,
discretization of the continuous parameter domain yields poor worst-case
runtime and may be computationally costly to extend to likelihood models with
multi-dimensional parameters. The development of faster
non-discretized algorithms is an interesting direction for future work.
Finally, our problem may be reformulated as sparse regression with particular
graph-based designs, and we believe it is an interesting question whether
similar computational ideas may be used to achieve better prediction error in
sparse regression for more general families of designs.

\appendix

\section{Properties of Algorithm \ref{alg:boykovouter}}
\label{appendix:algorithm}
In this appendix, we prove Propositions \ref{prop:runtime} and
\ref{prop:minimum}.

\begin{proof}[Proof of Proposition \ref{prop:runtime}]
The initial objective value is
$\frac{1}{2}\|Y-\bar{Y}^\delta\|^2=O(n(Y_{\max}-Y_{\min})^2)$. This value
decreases by at least $\tau$ in each outer loop of Algorithm
\ref{alg:boykovouter}, so there are at most $O(n(Y_{\max}-Y_{\min})^2/\tau)$
outer iterations. Within each outer iteration, there are at most
$(Y_{\max}-Y_{\min})/\delta$ inner loop iterations.
The augmented graph $G_{c,\hat{\mu}}$
of Algorithm \ref{alg:boykovinner} has $O(|E|)$ vertices and
edges, so solving minimum s-t cut using either Edmonds-Karp or Dinic's
algorithm requires time $O(|E|^3)$. This is the dominant cost of Algorithm
\ref{alg:boykovinner}, so combining these runtimes yields the proposition.
\end{proof}

\begin{proof}[Proof of Proposition \ref{prop:minimum}]
The objective (\ref{eq:objw}) may be written as
\[F_\w(\mu)=\frac{1}{2}\sum_{i=1}^n (Y_i-\mu_i)^2+\sum_{\{i,j\} \in E}
\lambda \w(i,j)\1\{\mu_i \neq \mu_j\},\]
and each edge-cost $\lambda \w(i,j)\1\{\mu_i \neq \mu_j\}$ satisfies the
triangle inequality in the sense
\[\lambda \w(i,j)\1\{a \neq c\}
\leq \lambda \w(i,j)\1\{a \neq b\}+\lambda \w(i,j)\1\{b \neq c\}\]
for any $a,b,c \in \R$. Hence, the first statement follows from
Theorem 5.4 and Corollary 5.5 of \cite{boykovetal}.
Termination of the algorithm implies that
$F_\w(\tilde{\mu})>F_\w(\hat{\mu})-\tau$ for every
$\delta\Z$-expansion $\tilde{\mu}$ of $\hat{\mu}$ with new value
$c \in \delta\Z \cap [Y_{\min},Y_{\max}]$, and hence also for every new value
$c \in \delta\Z$. Then $\hat{\mu}$ is a $(\tau,\delta\Z)$-local-minimizer.
\end{proof}

\section{Analysis of weighted $l_0$-denoising}\label{appendix:proofs}
In this appendix, we prove Theorems \ref{thm:oraclew}, \ref{thm:minimaxw},
and \ref{thm:nullw}. Recall from the end of Section \ref{sec:weighting} that
these imply, as direct corollaries, Theorems \ref{thm:oraclel0},
\ref{thm:minimaxupperl0}, \ref{thm:minimaxlowerl0}, and \ref{thm:nulll0}.
By rescaling, we will assume without loss of generality
that $\sigma^2=1$.

\begin{terminology*}
A {\bf multi-cut} $\bS:=\{S_1,\ldots,S_k\}$ of $G$ is a partition of
its vertices $V$ into non-empty disjoint vertex subsets
$S_1,\ldots,S_k$, which are called the {\bf elements} of the multi-cut.
In this appendix,
we do not require each element $S_\alpha$ of $\bS$ to be connected in $G$.
The multi-cut $\bS$ is {\bf trivial} if $k=1$, i.e.\ its only element is the
entire vertex set $V$. For two multi-cuts $\bS$ and $\bS'$, $\bS'$ is a
{\bf refinement} of $\bS$ if this holds in the usual sense of partitions:
For every element $S_\alpha$ of $\bS'$, there exists an element of $\bS$ that
contains $S_\alpha$ as a subset.
For any two multi-cuts $\bS$ and $\bS'$, their least common refinement is
denoted by $\bS \vee \bS'$.

For every $\mu \in \R^n$, the multi-cut {\bf induced by} $\mu$ is such that
any two vertices $i,j \in V$ are in the same
element of the cut if and only if $\mu_i=\mu_j$. (If $\mu$ is constant on $G$,
then it induces the trivial multi-cut.) A multi-cut {\bf cuts} an edge
$\{i,j\}$ if $i$ and $j$ belong to two different elements.
The set of all cut edges of $\bS$, i.e.\ the boundary of the multi-cut,
is denoted
\[\partial \bS:=\{\{i,j\} \in E: \bS \text{ cuts } \{i,j\}\}.\]
($\partial\bS$ is empty if $\bS$ is trivial.)
Recall $\w(E'):=\sum_{\{i,j\} \in E'} \w(i,j)$ for any $E' \subseteq E$. It is
then evident that if $\mu$ induces $\bS$, then
\[\|D\mu\|_\w=\w(\partial\bS).\]

Associated to each multi-cut $\bS=\{S_1,\ldots,S_k\}$ is a $k$-dimensional
subspace $K \subseteq \R^n$, such that $\mu \in K$ if and only if
$\mu$ takes a constant value on each element $S_\alpha$. (I.e., $\mu \in K$
if and only if $\bS$ is equal to, or is a refinement of, the
multi-cut induced by $\mu$.) We denote the orthogonal projection onto $K$
by $P^\bS:\R^n \to K$.
\end{terminology*}

We first recall a deterministic sub-optimality bound for any
$(\tau,\delta\Z)$-local-minimizer of (\ref{eq:objw}). Such a bound was stated
in \cite{boykovetal} as a factor-2 approximation,
$F_\w(\hat{\mu}) \leq 2F_\w(\mu^*)$, where $\hat{\mu}$ is any
$(0,\delta\Z)$-local-minimizer of (\ref{eq:objw}) and $\mu^*$ is the exact
minimizer in $(\delta\Z)^n$. In fact,
this factor of 2 applies only to the penalty term, the resulting bound
holds not only for $\mu^*$ but also for any $\mu \in (\delta\Z)^n$, and it
is easily extended to any $\tau \geq 0$. The proof is essentially the same,
and we provide it here for completeness.
\begin{lemma}\label{lemma:factor2approx}
Let $\delta>0$ and $\tau \geq 0$, and let $\hat{\mu}$ be any
$(\tau,\delta\Z)$-local-minimizer of $F_\w(\mu)$ in (\ref{eq:objw}). Then for
any vector $\mu^\delta \in (\delta\Z)^n$ whose entries take $k$ distinct values,
\[F_\w(\hat{\mu}) \leq \frac{1}{2}\|Y-\mu^\delta\|^2
+2\lambda\|D\mu^\delta\|_\w+k\tau.\]
\end{lemma}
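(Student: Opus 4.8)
The plan is to derive the bound by summing $k$ instances of the local-minimizer inequality, one for each distinct value taken by $\mu^\delta$. Write $c_1,\ldots,c_k \in \delta\Z$ for these distinct values and let $S_\alpha := \{i : \mu^\delta_i = c_\alpha\}$, so that $S_1,\ldots,S_k$ partition the vertex set. For each $\alpha$, define the $\delta\Z$-expansion $\tilde{\mu}^\alpha$ of $\hat{\mu}$ with new value $c_\alpha$ obtained by expanding exactly on $S_\alpha$: set $\tilde{\mu}^\alpha_i = c_\alpha$ for $i \in S_\alpha$ and $\tilde{\mu}^\alpha_i = \hat{\mu}_i$ otherwise. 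Since $\hat{\mu}$ is a $(\tau,\delta\Z)$-local-minimizer, $F_\w(\hat{\mu}) - \tau \leq F_\w(\tilde{\mu}^\alpha)$ holds for every $\alpha$, and summing over $\alpha = 1,\ldots,k$ gives $k F_\w(\hat{\mu}) - k\tau \leq \sum_{\alpha=1}^k F_\w(\tilde{\mu}^\alpha)$. The remaining work is to bound the right-hand side in terms of $\mu^\delta$ and $F_\w(\hat{\mu})$.

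For the data-fit term, I would use that the $S_\alpha$ partition the vertices: vertex $i$ takes its $\mu^\delta$-value in $\tilde{\mu}^\alpha$ for the unique $\alpha$ with $i \in S_\alpha$, and retains its $\hat{\mu}$-value for the other $k-1$ indices. Summing the squared residuals therefore telescopes to
\[\sum_{\alpha=1}^k \tfrac{1}{2}\|Y - \tilde{\mu}^\alpha\|^2 = \tfrac{1}{2}\|Y - \mu^\delta\|^2 + (k-1)\,\tfrac{1}{2}\|Y - \hat{\mu}\|^2.\]

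The penalty term is the main step, and it is where the factor of $2$ on $\|D\mu^\delta\|_\w$ appears. I would establish the edge-by-edge bound
\[\sum_{\alpha=1}^k \1\{\tilde{\mu}^\alpha_i \neq \tilde{\mu}^\alpha_j\} \leq (k-1)\1\{\hat{\mu}_i \neq \hat{\mu}_j\} + 2\,\1\{\mu^\delta_i \neq \mu^\delta_j\}\]
for each edge $\{i,j\}$, then multiply by $\w(i,j) \geq 0$ and sum over edges to obtain $\sum_\alpha \|D\tilde{\mu}^\alpha\|_\w \leq (k-1)\|D\hat{\mu}\|_\w + 2\|D\mu^\delta\|_\w$. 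The edge-by-edge bound splits into two cases. If $\mu^\delta_i = \mu^\delta_j$, then $i$ and $j$ share a single index $\alpha$, the corresponding expansion keeps them both at $c_\alpha$, and each of the other $k-1$ expansions alters neither endpoint, so the left-hand side equals $(k-1)\1\{\hat{\mu}_i \neq \hat{\mu}_j\}$. If $\mu^\delta_i \neq \mu^\delta_j$, then there are two distinct ``boundary'' indices; the corresponding two expansions each contribute an indicator bounded by $1$ (here only the trivial bound $\1\{\cdot\} \leq 1$ is needed, which is the $l_0$ specialization of the metric/triangle-inequality property of the edge cost used in \cite{boykovetal}), while the remaining $k-2$ expansions contribute $\1\{\hat{\mu}_i \neq \hat{\mu}_j\}$ each. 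The claimed bound then follows since $k-2 \leq k-1$ and the two boundary contributions sum to at most $2 = 2\,\1\{\mu^\delta_i \neq \mu^\delta_j\}$.

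Finally, I would assemble the pieces. Substituting the two bounds into $k F_\w(\hat{\mu}) - k\tau \leq \sum_\alpha F_\w(\tilde{\mu}^\alpha)$ and recognizing that $(k-1)\big[\tfrac{1}{2}\|Y - \hat{\mu}\|^2 + \lambda \|D\hat{\mu}\|_\w\big] = (k-1)F_\w(\hat{\mu})$, the right-hand side becomes $\tfrac{1}{2}\|Y - \mu^\delta\|^2 + (k-1)F_\w(\hat{\mu}) + 2\lambda\|D\mu^\delta\|_\w$. Cancelling $(k-1)F_\w(\hat{\mu})$ against the left-hand side and rearranging yields exactly $F_\w(\hat{\mu}) \leq \tfrac{1}{2}\|Y - \mu^\delta\|^2 + 2\lambda\|D\mu^\delta\|_\w + k\tau$. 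The main obstacle is the penalty bound in the cut-edge case; once it is phrased as the per-edge inequality above, it reduces to elementary bookkeeping, with the factor $2$ reflecting the two endpoints at which $\mu^\delta$ differs across a cut edge.
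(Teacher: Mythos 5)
Your proof is correct and takes essentially the same approach as the paper's: the same $k$ expansions $\tilde{\mu}^\alpha$ (one per level set of $\mu^\delta$), the same $k$ applications of the $(\tau,\delta\Z)$-local-minimizer property, and the same double-counting of cut edges (one boundary term per endpoint, each bounded by $1$) that produces the factor $2$. The only difference is organizational---the paper cancels the terms shared by $\hat{\mu}$ and $\tilde{\mu}^\alpha$ within each inequality before summing over $\alpha$, whereas you sum the full objectives and cancel $(k-1)F_\w(\hat{\mu})$ at the end.
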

\begin{proof}
Let $\bS=\{S_1,\ldots,S_k\}$ be the multi-cut induced by $\mu^\delta$. Fix
$\alpha \in \{1,\ldots,k\}$ and denote by $\mu^\delta_{S_\alpha}$ the
(constant) value
of $\mu^\delta$ on $S_\alpha$. Let $\tilde{\mu}$ be the $\delta\Z$-expansion
of $\hat{\mu}$ defined entrywise by
\[\tilde{\mu}_i=\begin{cases} \hat{\mu}_i & i \notin S_\alpha \\
\mu^\delta_{S_\alpha} & i \in S_\alpha.\end{cases}\]
Since $\hat{\mu}$ is a $(\tau,\delta\Z)$-local-minimizer of (\ref{eq:objw}),
by definition
\begin{align*}
F_\w(\hat{\mu})&=
\frac{1}{2}\sum_{i=1}^n (Y_i-\hat{\mu}_i)^2+\lambda\sum_{\{i,j\} \in E}
\w(i,j)\1\{\hat{\mu}_i \neq \hat{\mu}_j\}\\
&\leq \frac{1}{2}\sum_{i=1}^n (Y_i-\tilde{\mu}_i)^2+\lambda\sum_{\{i,j\} \in E}
\w(i,j)\1\{\tilde{\mu}_i \neq \tilde{\mu}_j\}+\tau=F_\w(\tilde{\mu})+\tau.
\end{align*}
As $\hat{\mu}_i=\tilde{\mu}_i$ for $i \notin S_\alpha$,
we may cancel terms from both sides to yield
\[\frac{1}{2}\sum_{i \in S_\alpha} (Y_i-\hat{\mu}_i)^2
+\lambda\!\!\!\mathop{\sum_{\{i,j\} \in E}}_{|\{i,j\} \cap S_\alpha| \geq 1}\!\!\!
\w(i,j)\1\{\hat{\mu}_i \neq \hat{\mu}_j\}
\leq \frac{1}{2}\sum_{i \in S_\alpha} (Y_i-\mu^\delta_{S_\alpha})^2
+\lambda\!\!\!\mathop{\sum_{\{i,j\} \in E}}_{|\{i,j\} \cap S_\alpha| \geq 1}\!\!\!
\w(i,j)\1\{\tilde{\mu}_i \neq \tilde{\mu}_j\}+\tau.\]
For the right side above,
note that $\1\{\tilde{\mu}_i \neq \tilde{\mu}_j\}=0$ when $i,j \in
S_\alpha$, and apply the trivial
bound $\1\{\tilde{\mu}_i \neq \tilde{\mu}_j\} \leq 1$
when $|\{i,j\} \cap S_\alpha|=1$, i.e.\ when $\{i,j\}$ is cut by $\bS$.
Then, summing the above over $\alpha=1,\ldots,k$,
\begin{align*}
\frac{1}{2}\|Y-\hat{\mu}\|^2+\lambda\sum_{\alpha=1}^k
\!\!\!\mathop{\sum_{\{i,j\} \in E}}_{|\{i,j\} \cap S_\alpha| \geq 1}\!\!\!
\w(i,j)\1\{\hat{\mu}_i \neq \hat{\mu}_j\}
&\leq \frac{1}{2}\|Y-\mu^\delta\|^2+\lambda\sum_{\alpha=1}^k\!\!\!
\mathop{\sum_{\{i,j\} \in E}}_{|\{i,j\} \cap S_\alpha|=1}\!\!\!
\w(i,j)+k\tau\\
&=\frac{1}{2}\|Y-\mu^\delta\|^2+2\lambda\|D\mu^\delta\|_\w+k\tau,
\end{align*}
where the second line follows from noting that each edge $\{i,j\}$ cut by $\bS$
appears twice in the sum.
The desired result then follows by noting that each edge $\{i,j\} \in E$
appears at least once in the double sum on the left side.
\end{proof}

Next, we control $\|P^\bS\eps\|^2$ using the weighted cut-size $\w(\partial\bS)$
uniformly over all multi-cuts $\bS$ of $G$. As the number of distinct
multi-cuts of $G$ may be very large, we will first
condition on a random spanning tree $T$ of $G$, consider the common
refinement of those multi-cuts that cut the same set of edges of $T$, and
take a union bound over these refinements. The distribution of the random
spanning tree $T$ that we take is the one specified by the
edge-weighting $\r \in \ST(G)$ such that $\w \geq \r$.

\begin{lemma}\label{lemma:badcut}
Let $\w:E \to \R_+$ be such that $\w \geq \r$ for some $\r \in \ST(G)$, and
let $\eps \in \R^n$ have coordinates
$\eps_i \overset{iid}{\sim} \Normal(0,1)$. Then there exist universal
constants $C,C'>0$ such that with probability at least $1-C'\w(E)^{-3}$, for
every multi-cut $\bS$ of $G$,
\[\|P^\bS \eps\|^2 \leq C\max(\w(\partial\bS),1)\log
\frac{e\cdot\w(E)}{\max(\w(\partial\bS),1)}.\]
\end{lemma}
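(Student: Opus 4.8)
The plan is to use that for any fixed multi-cut $\bS$ with $k$ elements the projection $P^\bS$ has rank $k$, so $\|P^\bS\eps\|^2\sim\chi^2_k$, and to combine a sharp $\chi^2$ tail bound with a union bound. The obstacle is that the number of multi-cuts is enormous, so a naive union bound is hopeless; following the idea of \cite[Theorem 3.3]{karger} I would reduce the union bound to edge-subsets of a single spanning tree. Concretely, sample a random spanning tree $T$ distributed according to $\r$, so that $\P[\{i,j\}\in T]=\r(i,j)$ as in (\ref{eq:expectedtreeedges}), independently of $\eps$. For a multi-cut $\bS$ put $F:=\partial\bS\cap T$ and let $\bS_F$ be the partition of $V$ into the connected components of $T$ after deleting the edges $F$. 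Every tree edge outside $F$ joins two vertices in the same element of $\bS$, so $\bS_F$ refines $\bS$; hence the associated subspaces satisfy $K_\bS\subseteq K_{\bS_F}$ and therefore $\|P^\bS\eps\|\le\|P^{\bS_F}\eps\|$. This dominates $\|P^\bS\eps\|^2$ by $\|P^{\bS_F}\eps\|^2\sim\chi^2_{|F|+1}$ and reduces the relevant family from all multi-cuts to the sets $\{\bS_F:F\subseteq T\}$, of which there are only $\binom{n-1}{j}$ with $|F|=j$.

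Next I would transfer the target threshold from the weighted cut-size to $|F|$. Write $m:=\max(\w(\partial\bS),1)$. Since $\w\ge\r$ gives $\E_T|F|=\r(\partial\bS)\le\w(\partial\bS)\le m$, Markov's inequality yields $\P_T[|F|\le 2m]\ge 1/2$. On this event $m\ge|F|/2$, and as $g(x):=x\log\frac{e\cdot\w(E)}{x}$ is increasing on $(0,\w(E)]$ one checks $g(m)\ge g(|F|/2)\ge\tfrac12 g(|F|)$. Consequently a violation $\|P^\bS\eps\|^2>C\,g(m)$ forces $\|P^{\bS_F}\eps\|^2>\tfrac{C}{2}|F|\log\frac{e\,\w(E)}{|F|}$. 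Thus, if the bad event $\mathcal B$ (some $\bS$ violates the claimed bound) holds for a given $\eps$, then with $T$-probability at least $1/2$ there is a nonempty $F\subseteq T$ with $\|P^{\bS_F}\eps\|^2>\tfrac C2|F|\log\frac{e\,\w(E)}{|F|}$; call this event $\mathcal A_T$. Averaging over $T$ gives $\tfrac12\P_\eps[\mathcal B]\le\P_{\eps,T}[\mathcal A_T]=\E_T\,\P_\eps[\mathcal A_T]$, so it suffices to bound $\P_\eps[\mathcal A_t]$ uniformly over all trees $t$.

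For a fixed tree $t$ with $n-1$ edges I would bound $\P_\eps[\mathcal A_t]$ by a union bound over $F\subseteq t$. Using $\|P^{\bS_F}\eps\|^2\sim\chi^2_{|F|+1}$ and a standard $\chi^2$ tail (Laurent--Massart), for $C$ a large enough universal constant each $F$ of size $j$ contributes at most $(e\,\w(E)/j)^{-\gamma j}$ with $\gamma$ growing in $C$, while there are $\binom{n-1}{j}\le(e\,\w(E)/j)^{j}$ such sets — here I crucially replace $n-1$ by $\w(E)$ via (\ref{eq:wE}). Hence the size-$j$ contribution is at most $(e\,\w(E)/j)^{-(\gamma-1)j}$, and taking $\gamma-1\ge 4$ makes this $(e\,\w(E)/j)^{-4j}=\exp\!\big(-4j\log\tfrac{e\,\w(E)}{j}\big)$, which is decreasing in $j$ on $[1,\w(E)]$ by the same monotonicity of $g$. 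Each term is therefore at most its $j=1$ value $(e\,\w(E))^{-4}$, and summing the at most $n-1\le\w(E)$ terms yields $\P_\eps[\mathcal A_t]\le\w(E)\cdot(e\,\w(E))^{-4}\le\w(E)^{-3}$. Combined with the averaging step this gives $\P_\eps[\mathcal B]\le 2\w(E)^{-3}$.

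The step I expect to be most delicate is making the weighted total $\w(E)$, rather than $n$, appear simultaneously inside the logarithm and in the probability exponent; this is exactly what $\w\ge\r$ buys, through $\E_T|F|=\r(\partial\bS)\le\w(\partial\bS)$ and $n-1=\r(E)\le\w(E)$, together with the monotone transfer $g(m)\ge\tfrac12 g(|F|)$ that lets the tree union bound be phrased in terms of $|F|$. The remaining work is bookkeeping: choosing $C$ large enough that the per-scale $\chi^2$ tail beats the binomial count, and confirming the sum over $j$ collapses to order $\w(E)^{-3}$. The trivial multi-cut, where $\bS_F$ is a single block and $\|P^\bS\eps\|^2\sim\chi^2_1$, is handled as a single separate term of probability $(e\,\w(E))^{-\Omega(C)}$, absorbed into the constant $C'$.
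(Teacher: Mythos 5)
Your proposal is correct, and it follows the same skeleton as the paper's proof---the Karger-style reduction via a random spanning tree $T$ drawn from the distribution associated to $\r$, the refinement observation that $\bS_F$ (components of $T$ minus $F=\partial\bS\cap T$) refines $\bS$ so that $\|P^\bS\eps\|\le\|P^{\bS_F}\eps\|$, and the union bound over edge subsets of $T$ with $\binom{n-1}{j}\le(e\cdot\w(E)/j)^j$ via $n-1=\r(E)\le\w(E)$. Where you genuinely diverge is in the mechanism for transferring the bound from the weighted cut size $\r(\partial\bS)=\E_T[|T\cap\partial\bS|]$ to the integer count $|T\cap\partial\bS|$. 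The paper runs a Chernoff argument: it bounds the bad probability by $\E_\eps[\sup_\bS\exp(\eta\|P^\bS\eps\|^2-\eta C\,g(\r(\partial\bS)))]$ with $g(x)=x\log(e\cdot\w(E)/x)$, applies Jensen's inequality to the convex map $x\mapsto\exp(-\eta C g(x))$ to push $\E_T$ inside the exponential, swaps $\sup_\bS$ with $\E_T$, and then uses the $\chi^2$ moment generating function plus a final endpoint-convexity argument to control the sum over scales $l$. You instead use Markov's inequality ($\P_T[|F|\le 2m]\ge 1/2$ for the violating cut), the monotonicity transfer $g(m)\ge\tfrac12 g(|F|)$, and a probabilistic-method averaging step to reduce to a per-tree union bound with Laurent--Massart tail bounds. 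Your route is more modular and elementary: it avoids the sup-inside-expectation manipulations, the MGF bookkeeping, and the concavity/convexity considerations (you need only monotonicity of $g$), at the cost of a factor-2 loss from Markov and slightly worse constants---immaterial here. The paper's exponential-moment formulation keeps everything in one computation and is the more standard template for such uniform bounds, but both arguments are complete and yield the stated $1-C'\w(E)^{-3}$ guarantee.
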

\begin{proof}
If $\bS$ is the trivial multi-cut consisting of the single element
$\{1,\ldots,n\}$, then $\w(\partial\bS)=0$ and $\|P^\bS\eps\|^2 \sim
\chi^2_1$, so $\|P^\bS\eps\|^2 \leq C\log (e\cdot\w(E))$
with probability at least $1-C'\w(E)^{-3}$ by a standard Gaussian tail bound.

Consider, then, non-trivial multi-cuts $\bS$.
By (\ref{eq:wcut}), $\w(\partial\bS) \geq 1$. Observe that for any $c>0$,
the function $x \mapsto x\log (ec/x)$ is increasing over
$x \in (0,c)$. Then, as $\w \geq \r$,
\[C\w(\partial \bS)\log \frac{e\cdot\w(E)}{\w(\partial\bS)}
\geq C\r(\partial \bS)\log \frac{e\cdot\w(E)}{\r(\partial\bS)},\]
and it suffices to establish the stronger bound given by the right side above.

For any $\eta>0$, we may bound
\begin{align*}
p_n&:=\P\left[\text{ there exists a non-trivial multi-cut }\bS:
\|P^{\bS}\eps\|^2>C\r(\partial\bS)\log
\frac{e\cdot\w(E)}{\r(\partial\bS)}\right]\\
&=\P\left[\sup_{\bS} \|P^{\bS}\eps\|^2-C\r(\partial\bS)\log
\frac{e\cdot\w(E)}{\r(\partial\bS)}>0\right]\\
&=\P\left[\sup_{\bS} \exp\left(\eta\|P^{\bS}\eps\|^2-\eta C\r(\partial\bS)
\log \frac{e\cdot\w(E)}{\r(\partial\bS)}\right)>1\right]\\
&\leq \E\left[\sup_{\bS} \exp\left(\eta\|P^{\bS}\eps\|^2
-\eta C\r(\partial\bS)\log\frac{e\cdot\w(E)}{\r(\partial\bS)}\right)\right],
\end{align*}
where the suprema are over all non-trivial multi-cuts $\bS$
of $G$. Corresponding to the weighting $\r:E \to \R_+$ is a distribution
over random spanning trees $T$ of $G$, satisfying the property
(\ref{eq:expectedtreeedges}).
Let $\E_\eps$ and $\E_T$ denote the expectations over $\eps$ and over $T$.
Then the above yields
\[p_n\leq \E_\eps\left[\sup_{\bS} \exp\left(\eta\|P^{\bS}\eps\|^2
-\eta C\,\E_T[|T \cap \partial \bS|]
\log\frac{e\cdot\w(E)}{\E_T[|T \cap \partial \bS|]}\right)\right].\]
Observe that for any $c>0$, $x \mapsto x \log (ec/x)$ is concave over
$x \in (0,c)$, so $x \mapsto \exp(-\eta Cx \log (ec/x))$ is convex over
$x \in (0,c)$. Also, $|T \cap \partial\bS| \leq |T|=n-1=\r(E) \leq \w(E)$. Then
applying Jensen's inequality,
\begin{align}
p_n&\leq \E_\eps\left[\sup_{\bS} \E_T\left[
\exp\left(\eta \|P^{\bS}\eps\|^2-\eta
C|T\cap\partial\bS|\log\frac{e\cdot\w(E)}{|T\cap\partial\bS|}\right)\right]\right]
\nonumber\\
&\leq \E_\eps\E_T\left[\sup_\bS\exp\left(\eta \|P^{\bS}\eps\|^2-\eta
C|T\cap\partial\bS|\log\frac{e\cdot\w(E)}{|T\cap\partial\bS|}\right)\right].
\label{eq:pn}
\end{align}

Fix $T$, and note that $|T \cap \partial \bS| \geq 1$ for any
non-trivial multi-cut $\bS$. We now control the supremum over $\bS$ by a
union bound over all possible non-empty edge sets $T \cap \partial \bS$:
\begin{align*}
&\sup_\bS \exp\left(\eta \|P^{\bS}\eps\|^2-\eta
C|T\cap\partial\bS|\log\frac{e\cdot\w(E)}{|T\cap\partial\bS|}\right)\\
&\leq \sum_{E' \subseteq T:|E'| \geq 1}\;
\sup_{\bS:\,T \cap \partial \bS=E'}\;
\exp\left(\eta \|P^{\bS}\eps\|^2-\eta C|E'|\log \frac{e\cdot\w(E)}{|E'|}
\right)
\end{align*}
Consider any fixed non-empty subset $E' \subseteq T$,
and denote by $\bS^*:=\bS^*(E',T)$ the multi-cut
obtained in the following way: Remove the edges $E'$ from $T$, and let the
elements of $\bS^*$ be the remaining connected components in
the graph with edges $T \setminus E'$. If $\bS$ is any other multi-cut
satisfying $T \cap \partial \bS=E'$, then any two vertices in the same element
of $\bS^*$ are connected by a path of edges in $T$ that are not cut by
$\bS$, and hence these vertices belong to the same element of $\bS$.
Thus $\bS^*$ is a refinement of $\bS$. Then the range
of the projection $P^\bS$ is a linear subspace of the range of $P^{\bS^*}$,
and we consequently have $\|P^{\bS^*}\eps\|^2 \geq \|P^{\bS}\eps\|^2$
(deterministically for any $\eps$). Then
\[p_n\leq \E_T\left[\sum_{E' \subseteq T:|E'| \geq 1}
e^{-\eta C|E'|\log \frac{e\cdot\w(E)}{|E'|}}\E_\eps\left[e^{\eta
\|P^{\bS^*}\eps\|^2}\right]\right].\]
$\bS^*$ has $|E'|+1$ elements, so $\|P^{\bS^*}\eps\|^2 \sim \chi^2_{|E'|+1}$.
Then for $\eta \in (0,1/2)$, the chi-squared moment-generating-function yields
$\E_\eps[e^{\eta \|P^{\bS^*}\eps\|^2}]=(1-2\eta)^{|E'|+1}$, so
\[p_n \leq \E_T\left[\sum_{l=1}^{n-1} \sum_{E' \subseteq T:|E'|=l}
e^{-\eta Cl\log \frac{e\cdot\w(E)}{l}}(1-2\eta)^{l+1}\right].\]
Applying $|\{E' \subseteq T:|E'|=l\}|=\binom{n-1}{l} \leq \exp(l \log
\frac{e(n-1)}{l}) \leq \exp(l \log \frac{e\cdot\w(E)}{l})$ for every spanning tree
$T$,
\[p_n \leq \sum_{l=1}^{n-1} e^{(-\eta C+1)l\log \frac{e\cdot\w(E)}{l}}
(1-2\eta)^{l+1}
\leq \sum_{l=1}^{\lfloor \w(E) \rfloor}
e^{(-\eta C+1)l\log \frac{e\cdot\w(E)}{l}}(1-2\eta)^{l+1}.\]
Fixing $\eta$ to be a constant (say $\eta=1/4$), it is easily verified
that for $C>1/\eta$, the function
\[l \mapsto (-\eta C+1)l\log \frac{e\cdot\w(E)}{l}+(l+1)\log(1-2\eta)\]
is convex, so its maximum over $l \in [1,\w(E)]$ is attained at either $l=1$ or
$l=\w(E)$. Thus
\[p_n \leq \w(E)\max\left((e\cdot\w(E))^{-\eta C+1}(1-2\eta)^2,
e^{-(\eta C+1)\w(E)}(1-2\eta)^{\w(E)+1}\right) \leq C'\w(E)^{-3}\]
for sufficiently large constants $C,C'>0$, as desired.
\end{proof}

The preceding two lemmas yield risk bounds for $\hat{\mu}$ in probability
via a standard Cauchy-Schwarz argument, see
e.g.\ \cite[Theorem 2]{birgemassart2001}.
\begin{lemma}\label{lemma:oracleinequalityprob}
Let $\delta>0$ and $\tau \geq 0$,
and let $\w:E \to \R_+$ be such that $\w \geq \r$ for some $\r \in \ST(G)$.
\begin{enumerate}[(a)]
\item For any $\eta>0$, there exist constants $C,C',C''>0$ depending only on
$\eta$ such that for
any true signal $\mu_0 \in \R^n$, any (fixed) $\mu \in \R^n$, and
any $\lambda \geq C\log \w(E)$, with probability
at least $1-C'\w(E)^{-3}$ all $(\tau,\delta\Z)$-local-minima
$\hat{\mu}$ of (\ref{eq:objw}) satisfy
\[\|\hat{\mu}-\mu_0\|^2 \leq (1+\eta)\|\mu-\mu_0\|^2+C''\left((\lambda+\tau)
\max(\|D\mu\|_\w,1)+n\delta^2\right).\]
\item There exist universal constants $C,C',C''>0$ such that for any $s \in
[1,\w(E)]$, any true signal $\mu_0 \in \R^n$ with $\|D\mu_0\|_\w \leq s$, and
any $\lambda \geq C(1+\log\frac{\w(E)}{s})$, with probability at least
$1-C'\w(E)^{-3}$ all $(\tau,\delta\Z)$-local-minima $\hat{\mu}$ of
(\ref{eq:objw})
satisfy
\[\|\hat{\mu}-\mu_0\|^2 \leq C''((\lambda+\tau)s+n\delta^2).\]
\end{enumerate}
\end{lemma}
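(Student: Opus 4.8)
The plan is to derive both statements from the deterministic sub-optimality bound of Lemma \ref{lemma:factor2approx} together with the uniform control of $\|P^\bS\eps\|^2$ in Lemma \ref{lemma:badcut}, following the Cauchy--Schwarz scheme of \cite{birgemassart2001}. Since we have normalized $\sigma^2=1$, write $Y=\mu_0+\eps$ with $\eps_i \overset{iid}{\sim} \Normal(0,1)$. Given the competitor $\mu$ (which equals $\mu_0$ in part (b)), first round each coordinate to the nearest multiple of $\delta$ to obtain $\mu^\delta \in (\delta\Z)^n$; this costs $\|\mu-\mu^\delta\|^2 \leq n\delta^2/4$, and since rounding can only merge equal coordinates, the multi-cut induced by $\mu^\delta$ coarsens that of $\mu$, so $\|D\mu^\delta\|_\w \leq \|D\mu\|_\w$ and, by (\ref{eq:wcut}), its number $k$ of distinct values obeys $k \leq \|D\mu^\delta\|_\w+1$. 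Applying Lemma \ref{lemma:factor2approx} with this $\mu^\delta$, expanding both squared-error terms using $Y=\mu_0+\eps$, and cancelling $\|\eps\|^2$ yields the basic inequality
\[
\tfrac12\|\hat{\mu}-\mu_0\|^2+\lambda\|D\hat{\mu}\|_\w
\leq \tfrac12\|\mu^\delta-\mu_0\|^2+2\lambda\|D\mu^\delta\|_\w+k\tau
+\langle \eps,\hat{\mu}-\mu^\delta\rangle.
\]

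The decisive observation is that $\hat{\mu}-\mu^\delta$ is constant on each element of the least common refinement $\bS:=\bS_{\hat{\mu}}\vee\bS_{\mu^\delta}$ of the two induced multi-cuts, hence lies in the range of $P^\bS$; therefore $\langle\eps,\hat{\mu}-\mu^\delta\rangle=\langle P^\bS\eps,\hat{\mu}-\mu^\delta\rangle \leq \|P^\bS\eps\|\,\|\hat{\mu}-\mu^\delta\|$, where $\partial\bS=\partial\bS_{\hat{\mu}}\cup\partial\bS_{\mu^\delta}$ gives $\w(\partial\bS) \leq \|D\hat{\mu}\|_\w+\|D\mu^\delta\|_\w$. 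I would then work on the single event of Lemma \ref{lemma:badcut}, which holds with probability at least $1-C'\w(E)^{-3}$ and bounds $\|P^\bS\eps\|^2$ simultaneously over \emph{all} multi-cuts; because this control is uniform in $\bS$, it applies to the data-dependent $\bS$ above without needing any independence between $\hat{\mu}$ and $\eps$.

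For part (a), I would bound $\|\hat{\mu}-\mu^\delta\| \leq \|\hat{\mu}-\mu_0\|+\|\mu_0-\mu^\delta\|$ and apply Young's inequality to each cross term with a parameter calibrated to $\eta$, so that the $\|\hat{\mu}-\mu_0\|^2$ contribution is absorbed into the left side while the oracle term $\|\mu^\delta-\mu_0\|^2$ keeps coefficient $(1+\eta)$ (after transferring $\mu^\delta\to\mu$ at the cost of an extra $n\delta^2$). Using the crude bound $\|P^\bS\eps\|^2 \lesssim \max(\w(\partial\bS),1)\log(e\,\w(E))$, the part proportional to $\|D\hat{\mu}\|_\w$ is absorbed into the penalty $\lambda\|D\hat{\mu}\|_\w$ precisely when $\lambda \geq C_\eta\log\w(E)$; the remaining pieces, together with $2\lambda\|D\mu^\delta\|_\w$ and $k\tau \leq 2\tau\max(\|D\mu^\delta\|_\w,1)$, collapse into $C''(\lambda+\tau)\max(\|D\mu\|_\w,1)$.

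For part (b) I would take $\mu=\mu_0$, so the oracle term is only the discretization error $\|\mu_0^\delta-\mu_0\|^2 \leq n\delta^2/4$ and $\|D\mu_0^\delta\|_\w \leq s$. Here the sharper logarithm in Lemma \ref{lemma:badcut} is essential: writing $m:=\max(\w(\partial\bS),1) \leq \|D\hat{\mu}\|_\w+s+1$ and using that $g(m):=m\log\frac{e\,\w(E)}{m}$ is increasing and concave (hence subadditive) on $(0,\w(E))$, I would split $\|P^\bS\eps\|^2 \lesssim g(\|D\hat{\mu}\|_\w)+g(2s)$. The second piece is $\lesssim s\log\frac{e\,\w(E)}{s} \lesssim \lambda s$, while the first is absorbed into $\lambda\|D\hat{\mu}\|_\w$ after a case analysis on whether $\|D\hat{\mu}\|_\w \geq s$ or $\|D\hat{\mu}\|_\w<s$: in the former case the logarithm is at most $\log\frac{e\,\w(E)}{s} \lesssim \lambda$, and in the latter the whole piece is itself $\lesssim \lambda s$. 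This is exactly why the weaker threshold $\lambda \geq C(1+\log\frac{\w(E)}{s})$ suffices in (b), and it is the main obstacle of the argument: the constants must be tracked carefully so that the $\|D\hat{\mu}\|_\w$-proportional error is dominated by the penalty itself rather than merely by the penalty threshold. Collecting terms then gives $\|\hat{\mu}-\mu_0\|^2 \leq C''((\lambda+\tau)s+n\delta^2)$.
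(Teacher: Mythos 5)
Your proposal is correct and follows essentially the same route as the paper: round $\mu$ to $\mu^\delta$, apply Lemma \ref{lemma:factor2approx} to get the basic inequality, project the cross term $\langle\eps,\hat{\mu}-\mu^\delta\rangle$ onto the common refinement $\bS^\delta\vee\hat{\bS}$, invoke the uniform (hence data-dependence-proof) event of Lemma \ref{lemma:badcut}, and absorb the $\|D\hat{\mu}\|_\w$-proportional noise into the penalty, with the crude $\log(e\,\w(E))$ bound for part (a) and the sharper $\log\frac{e\,\w(E)}{s}$ bound for part (b). Your handling of part (b) via subadditivity of $m\mapsto m\log\frac{e\,\w(E)}{m}$ plus a case analysis is a cosmetic variant of the paper's direct monotonicity bound $\log\frac{2e\,\w(E)}{s+\|D\hat{\mu}\|_\w}\leq\log\frac{2e\,\w(E)}{s}$ (note the paper's factor $2$ in the logarithm, which sidesteps the domain issue when $s+\|D\hat{\mu}\|_\w$ exceeds $\w(E)$), and both yield the same conclusion.
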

\begin{proof}
For part (a), let $\mu^\delta \in (\delta \Z)^n$ be the vector obtained by 
rounding each entry of $\mu$ to the nearest value in $\delta\Z$, and let
$\hat{\mu}$ be any $(\tau,\delta\Z)$-local-minimizer. Recall that if the entries
of $\mu^\delta$ take $k$ distinct values, then $\|D\mu^\delta\|_\w \geq k-1$
by (\ref{eq:wcut}). Then by Lemma \ref{lemma:factor2approx},
\[\frac{1}{2}\|Y-\hat{\mu}\|^2+\lambda \|D\hat{\mu}\|_\w
\leq \frac{1}{2}\|Y-\mu^\delta\|^2+2\lambda\|D\mu^\delta\|_\w
+\tau(\|D\mu^\delta\|_\w+1).\]
Writing $Y=\mu_0+\eps$ and canceling $\|\eps\|^2/2$ from both sides,
\begin{equation}
\frac{1}{2}\|\hat{\mu}-\mu_0\|^2+\langle \mu_0-\hat{\mu},\eps \rangle
+\lambda \|D\hat{\mu}\|_\w \leq \frac{1}{2}\|\mu^\delta-\mu_0\|^2
+\langle \mu_0-\mu^\delta,\eps \rangle+2\lambda\|D\mu^\delta\|_\w
+\tau(\|D\mu^\delta\|_\w+1).
\label{eq:oracleprob1}
\end{equation}

Suppose $\mu^\delta$ induces the multi-cut $\bS^\delta$ of $G$, and
$\hat{\mu}$ induces the multi-cut $\hat{\bS}$ of $G$.
Denote by $\bS^\delta \vee \hat{\bS}$ the least common refinement of
$\bS^\delta$ and $\hat{\bS}$, so
\[\langle \hat{\mu}-\mu^\delta,\eps \rangle
=\langle P^{\bS^\delta \vee \hat{\bS}}(\hat{\mu}-\mu^\delta),\eps \rangle
=\langle \hat{\mu}-\mu^\delta, P^{\bS^\delta \vee \hat{\bS}}\eps \rangle.\]
For any positive constants $c_1,c_2$ such that $c_1<\frac{1}{4}$ and
$c_1c_2=\frac{1}{4}$, applying $xy \leq c_1x^2+c_2y^2$ and $(x+y)^2 \leq
2(x^2+y^2)$ yields
\begin{align*}
\langle \hat{\mu}-\mu^\delta,\eps \rangle
&\leq (\|\hat{\mu}-\mu_0\|+\|\mu^\delta-\mu_0\|)\|P^{\bS^\delta \vee
\hat{\bS}}\eps\|\\
&\leq 2c_1(\|\hat{\mu}-\mu_0\|^2+\|\mu^\delta-\mu_0\|^2)
+c_2\|P^{\bS^\delta \vee \hat{\bS}}\eps\|^2.
\end{align*}
Applying this to (\ref{eq:oracleprob1}) and rearranging yields
\begin{equation}\label{eq:oracleprob2}
\left(\frac{1}{2}-2c_1\right)\|\hat{\mu}-\mu_0\|^2
\leq \left(\frac{1}{2}+2c_1\right)\|\mu^\delta-\mu_0\|^2
+c_2\|P^{\bS^\delta \vee \hat{\bS}}\eps\|^2
+2\lambda\|D\mu^\delta\|_\w+\tau(\|D\mu^\delta\|_\w+1)-\lambda\|D\hat{\mu}\|_\w.
\end{equation}

By Lemma \ref{lemma:badcut}, there exist constants $C,C'>0$ and an event of
probability $1-C'\w(E)^{-3}$ on which we are guaranteed
\[\|P^{\bS^\delta \vee \hat{\bS}}\eps\|^2
\leq C\max(\w(\partial(\bS^\delta \vee \hat{\bS})),1)
\log \frac{e\cdot\w(E)}{\max(\w(\partial(\bS^\delta \vee \hat{\bS})),1)}.\]
Note that $\partial(\bS^\delta \vee \hat{\bS}) \subseteq
\partial \bS^\delta \cup \partial \hat{\bS}$, so
\begin{equation}\label{eq:wrefinement}
\w(\partial(\bS^\delta \vee \hat{\bS}))
\leq \w(\partial \bS^\delta)+\w(\partial \hat{\bS})
=\|D\mu^\delta\|_\w+\|D\hat{\mu}\|_\w.
\end{equation}
Then on this event,
\[\|P^{\bS^\delta \vee \hat{\bS}}\eps\|^2
\leq C(\max(\|D\mu^\delta\|_\w,1)+\|D\hat{\mu}\|_\w)\log (e\cdot\w(E)).\]
If $\lambda \geq c_2C\log(e\cdot\w(E))$, then applying this to
(\ref{eq:oracleprob2}) yields (on this event)
\[\|\hat{\mu}-\mu_0\|^2 \leq \frac{1+4c_1}{1-4c_1}\|\mu^\delta-\mu_0\|^2
+C''\left((\lambda+\tau)\max(\|D\mu^\delta\|_\w,1)\right)\]
for some constant $C''>0$. Part (a) follows by noting $\|D\mu^\delta\|_\w
\leq \|D\mu\|_\w$,
\[\|\mu^\delta-\mu_0\|^2 \leq (1+c_1)\|\mu-\mu_0\|^2+(1+c_1^{-1})
\|\mu-\mu^\delta\|^2 \leq (1+c_1)\|\mu-\mu_0\|^2+(1+c_1^{-1})n\delta^2,\]
and taking $c_1$ sufficiently small.

For part (b), let us apply the preceding arguments for $\mu=\mu_0$, so that
$\mu^\delta$ is obtained by rounding the entries of the true signal $\mu_0$.
Then $\|\mu^\delta-\mu_0\|^2 \leq n\delta^2$, $\|D\mu^\delta\|_\w
\leq \|D\mu_0\|_\w \leq s$, and (\ref{eq:wrefinement}) implies
$\max(\w(\partial(\bS^\delta \vee \hat{\bS})),1) \leq s+\|D\hat{\mu}\|_\w$.
Observing that $s+\|D\hat{\mu}\|_\w \leq 2\w(E)$ and that $x \mapsto
x\log(ec/x)$ is increasing for $x \in (0,c)$, on an event of
probability $1-C'\w(E)^{-3}$ we have
\begin{align*}
\|P^{\bS^\delta \vee \hat{\bS}}\eps\|^2
&\leq C\max(\w(\partial(\bS^\delta \vee \hat{\bS})),1)
\log \frac{2e\cdot\w(E)}{\max(\w(\partial(\bS^\delta \vee \hat{\bS})),1)}\\
&\leq C(s+\|D\hat{\mu}\|_\w)\log \frac{2e\cdot\w(E)}{s+\|D\hat{\mu}\|_w}
\leq C(s+\|D\hat{\mu}\|_\w)\log \frac{2e\cdot\w(E)}{s}.
\end{align*}
If $\lambda \geq c_2C\log (2e\cdot\w(E)/s)$, then
applying this to (\ref{eq:oracleprob2}) yields
$\|\hat{\mu}-\mu_0\|^2 \leq C''((\lambda+\tau)s+n\delta^2)$ for a constant
$C''>0$, as desired.
\end{proof}

Finally, we obtain bounds in expectation by
applying H\"older's inequality and a crude bound on the $q$th power of
the squared-error risk for some $q>1$.
\begin{lemma}\label{lemma:qmoment}
Let $\delta>0$, $\tau \geq 0$, and $\hat{\mu}$ be any
$(\tau,\delta\Z)$-local-minimizer of (\ref{eq:objw}). Then for any $q>1$, there
exists a constant $C_q$ depending only on $q$ such that
\[\E[\|\hat{\mu}-\mu_0\|^{2q}] \leq C_q(\lambda\w(E)+n+n\delta^2+n\tau)^q.\]
\end{lemma}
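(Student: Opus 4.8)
The plan is to derive a \emph{deterministic} pointwise bound on $\|\hat{\mu}-\mu_0\|^2$ in terms of $\|\eps\|^2$ and the nonrandom quantities $\lambda\w(E)$, $n\delta^2$, and $n\tau$, and then to raise this to the $q$th power and take expectations using standard chi-squared moment estimates. Since the lemma asks only for a crude bound, there is no need to exploit any edge-sparsity of $\mu_0$; a trivial choice of comparison vector suffices, and the entire argument is deterministic except for the final moment computation.

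First I would apply the factor-2 approximation of Lemma \ref{lemma:factor2approx} with $\mu^\delta \in (\delta\Z)^n$ taken to be the entrywise rounding of the true signal $\mu_0$ to the nearest multiple of $\delta$. Writing $k$ for the number of distinct values of $\mu^\delta$, so that $k \leq n$, and using the trivial bound $\|D\mu^\delta\|_\w \leq \w(E)$ together with $\lambda\|D\hat{\mu}\|_\w \geq 0$, Lemma \ref{lemma:factor2approx} gives
\[\tfrac{1}{2}\|Y-\hat{\mu}\|^2 \leq \tfrac{1}{2}\|Y-\mu^\delta\|^2
+2\lambda\w(E)+n\tau.\]
Substituting $Y=\mu_0+\eps$ and using $\|\mu_0-\mu^\delta\|^2 \leq n\delta^2$, together with $\|a+b\|^2 \leq 2\|a\|^2+2\|b\|^2$ applied twice (once to bound $\|Y-\mu^\delta\|^2$ by $2\|\eps\|^2+2n\delta^2$, and once to pass from $\|Y-\hat{\mu}\|^2$ to $\|\hat{\mu}-\mu_0\|^2$), I would obtain a universal constant $C$ such that, deterministically for every realization of $\eps$,
\[\|\hat{\mu}-\mu_0\|^2 \leq C\big(\|\eps\|^2+\lambda\w(E)+n\delta^2+n\tau\big).\]

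It then remains to raise this to the $q$th power and take expectations. Writing $a:=\lambda\w(E)+n\delta^2+n\tau$ and applying $(x+y)^q \leq 2^{q-1}(x^q+y^q)$, the only random term is $\E[\|\eps\|^{2q}]$. Under the normalization $\sigma^2=1$ adopted at the start of this appendix, $\|\eps\|^2 \sim \chi^2_n$, and its $q$th moment is $\E[\|\eps\|^{2q}]=2^q\,\Gamma(n/2+q)/\Gamma(n/2) \leq 2^q(n/2+q)^q \leq C_q n^q$ for $n \geq 1$, using the elementary bound $\Gamma(n/2+q)/\Gamma(n/2) \leq (n/2+q)^q$ and absorbing the $q$-dependence into the constant. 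Combining these estimates and collecting the deterministic terms into a single $q$th power via $x^q+y^q \leq (x+y)^q$ (valid for $x,y \geq 0$ and $q \geq 1$) yields the claimed bound $\E[\|\hat{\mu}-\mu_0\|^{2q}] \leq C_q(\lambda\w(E)+n+n\delta^2+n\tau)^q$. I do not anticipate any genuine obstacle here; the only step requiring care is the chi-squared moment estimate for non-integer $q$, which is handled cleanly through the Gamma-function formula rather than by repeated integration by parts. The deliberate looseness of the bound is appropriate, as it is used together with H\"older's inequality (as indicated at the end of Section \ref{sec:weighting}) to control the contribution to the risk on the low-probability complement of the event in Lemma \ref{lemma:oracleinequalityprob}.
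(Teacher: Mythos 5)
Your proposal is correct and follows essentially the same route as the paper: compare $\hat{\mu}$ against the entrywise rounding $\mu^\delta$ of $\mu_0$, extract a deterministic bound $\|\hat{\mu}-\mu_0\|^2 \lesssim \|\eps\|^2+\lambda\w(E)+n\delta^2+n\tau$, and finish with chi-squared moment bounds and $(x+y)^q \leq 2^{q-1}(x^q+y^q)$. The only (immaterial) difference is how the deterministic bound is obtained: you invoke Lemma \ref{lemma:factor2approx} with the crude bounds $\|D\mu^\delta\|_\w \leq \w(E)$ and $k \leq n$, whereas the paper argues directly from the $(\tau,\delta\Z)$-local-minimizer property using $n$ single-vertex expansions $\hat{\mu}_i \mapsto \mu^\delta_i$ and sums over $i$ --- both yield exactly $\tfrac{1}{2}\|Y-\hat{\mu}\|^2 \leq \tfrac{1}{2}\|Y-\mu^\delta\|^2+2\lambda\w(E)+n\tau$.
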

\begin{proof}
Let $\mu^\delta \in (\delta\Z)^n$ denote the vector obtained by rounding each
entry of $\mu_0$ to the nearest value in $\delta\Z$. For each vertex
$i \in \{1,\ldots,n\}$, the vector $\tilde{\mu}$ obtained by replacing
$\hat{\mu}_i$ with $\mu^\delta_i$ and keeping all other coordinates
of $\hat{\mu}$ the same is a $\delta\Z$-expansion of $\hat{\mu}$. Since
$\hat{\mu}$ is a $(\tau,\delta\Z)$-local-minimizer, this implies
$F_\w(\hat{\mu}) \leq F_\w(\tilde{\mu})+\tau$ and hence (cancelling terms not
depending on the $i$th coordinate)
\[\frac{1}{2}(\hat{\mu}_i-Y_i)^2 \leq \frac{1}{2}(\mu^\delta_i-Y_i)^2
+\lambda\sum_{j:\{i,j\} \in E} \w(i,j)+\tau.\]
Summing over $i=1,\ldots,n$ and applying
$\|\mu^\delta-Y\|^2 \leq 2\|\mu_0-Y\|^2+2\|\mu^\delta-\mu_0\|^2
\leq 2\|\mu_0-Y\|^2+2n\delta$,
\[\frac{1}{2}\|\hat{\mu}-Y\|^2 \leq \frac{1}{2}\|\mu^\delta-Y\|^2
+2\lambda\w(E)+n\tau \leq \|\mu_0-Y\|^2+2\lambda\w(E)+n\delta^2+n\tau.\]
Then, applying $(a+b)^q \leq 2^{q-1}(a^q+b^q)$,
\begin{align*}
\|\hat{\mu}-\mu_0\|^{2q}
&\leq 2^{2q-1}(\|\hat{\mu}-Y\|^{2q}+\|\mu_0-Y\|^{2q})\\
&\leq 2^{2q-1}\left(2^{q-1}\left((2\|\mu_0-Y\|^2)^q
+(4\lambda\w(E)+2n\delta^2+2n\tau)^q\right)+\|\mu_0-Y\|^{2q}\right)\\
&\leq C_q(\|\mu_0-Y\|^{2q}+(\lambda \w(E)+n\delta^2+n\tau)^q)
\end{align*}
for a constant $C_q>0$. Since $\|\mu_0-Y\|^2 \sim \chi^2_n$, this implies
\[\E[\|\hat{\mu}-\mu_0\|^{2q}] \leq C_q'(\lambda\w(E)+n+n\delta^2+n\tau)^q\]
for a different constant $C_q'>0$.
\end{proof}

\begin{proof}[Proof of Theorem \ref{thm:oraclew}]
For $\mu \in \R^n$ and $\lambda,\eta,C''>0$, denote
\[M(\mu,\lambda,\eta,C''):=
(1+\eta)\|\mu-\mu_0\|^2+C''\lambda\max(\|D\mu\|_\w,1).\]
Note that $n\delta^2+\tau\max(\|D\mu\|_\w,1) \leq 2\lambda\max(\|D\mu\|_\w,1)$
whenever $\delta \leq 1/\sqrt{n}$, $\tau \leq 1$, and $\lambda \geq 1$.
Applying H\"older's inequality and Lemma \ref{lemma:oracleinequalityprob}(a),
there exist constants $C,C',C''>0$ such that
for any $\mu \in \R^n$, any $\lambda \geq C\log \w(E)$, and any
constants $p,q>1$ such that $\frac{1}{p}+\frac{1}{q}=1$,
\begin{align*}
\E[\|\hat{\mu}-\mu_0\|^2] &\leq M(\mu,\lambda,\eta,C'')
+\E\left[\|\hat{\mu}-\mu_0\|^2 
\1\{\|\hat{\mu}-\mu_0\|^2>M(\mu,\lambda,\eta,C'')\}\right]\\
&\leq M(\mu,\lambda,\eta,C'')+\left(\E[\|\hat{\mu}-\mu_0\|^{2q}]\right)^{1/q}
\left(\P[\|\hat{\mu}-\mu_0\|^2>M(\mu,\lambda,\eta,C'')]\right)^{1/p}\\
&\leq M(\mu,\lambda,\eta,C'')+\left(\E[\|\hat{\mu}-\mu_0\|^{2q}]\right)^{1/q}
(C'\w(E)^{-3})^{1/p}.
\end{align*}
Note that $\w(E) \geq \r(E)=n-1$, so $\lambda \w(E) \geq n+n\delta^2+n\tau$
for any $\lambda \geq 3$. Then, choosing $p=3$, $q=3/2$, and applying
Lemma \ref{lemma:qmoment},
\[\E[\|\hat{\mu}-\mu_0\|^2] \leq M(\mu,\lambda,\eta,C'')+C'''\lambda\]
for a constant $C'''>0$. Taking the infimum over $\mu \in \R^n$ yields the
desired result.
\end{proof}

\begin{proof}[Proof of Theorem \ref{thm:minimaxw}]
The proof of the upper bound is the same as that of Theorem \ref{thm:oraclew},
using Lemma \ref{lemma:oracleinequalityprob}(b) instead of
Lemma \ref{lemma:oracleinequalityprob}(a):
For universal constants $C,C',C'',C'''>0$, any $\delta \leq 1/\sqrt{n}$,
$\tau \leq 1$, $\lambda \geq C(1+\log\frac{\w(E)}{s})$, and $p=3$ and $q=3/2$,
\[\E[\|\hat{\mu}-\mu_0\|^2] \leq C''\lambda s+\left(\E[\|\hat{\mu}-\mu_0\|^{2q}]
\right)^{1/q}(C'\w(E)^{-3})^{1/p} \leq C'''\lambda s.\]

For the lower bound, denote $\w(i)=\sum_{j:\{i,j\} \in E} \w(i,j)$, and consider
the vertex subset $V'=\{i:\w(i) \leq 4\w(E)/n\}$. Then the class
$\{\mu_0 \in \R^n:\|D\mu_0\|_\w \leq s\}$ contains the class of sparse vectors
\[B_0:=\left\{\mu_0 \in \R^n:\supp(\mu_0) \subseteq V',\;\|\mu_0\|_0 \leq
\left\lfloor \frac{sn}{4\w(E)}\right\rfloor \right\},\]
where $\supp(\mu_0)$ denotes the set of vertices $i$ for which
$\mu_{0,i} \neq 0$. (This is because for any $\mu_0 \in B_0$, $\|D\mu_0\|_\w
\leq \sum_{i \in \supp(\mu_0)} \w(i) \leq \lfloor \frac{sn}{4\w(E)}\rfloor
\frac{4\w(E)}{n} \leq s$.) Then
\[\inf_{\hat{\mu}} \sup_{\mu_0:\|D\mu_0\|_\w \leq s} \E[\|\hat{\mu}-\mu_0\|^2]
\geq \inf_{\hat{\mu}} \sup_{\mu_0 \in B_0} \E[\|\hat{\mu}-\mu_0\|^2].\]
For $s \geq 4\w(E)/n$, the lower-bound for the sparse normal-means
problem (see e.g.\ Theorem 1(b) of \cite{raskuttietal} in the case
$X=\sqrt{n}\operatorname{Id}$) yields, for a universal constant $c>0$,
\[\inf_{\hat{\mu}} \sup_{\mu_0 \in B_0} \E[\|\hat{\mu}-\mu_0\|^2]
\geq c\left\lfloor \frac{sn}{4\w(E)}\right\rfloor
\log \frac{|V'|}{\lfloor \frac{sn}{4\w(E)}\rfloor}
\geq c\frac{sn}{8\w(E)}\log \frac{4\w(E)|V'|}{sn}.\]
Note that $\w(i)>4\w(E)/n$ for all $i \notin V'$, so 
\[2\w(E)=\sum_{i=1}^n \w(i)>\frac{4\w(E)}{n}(n-|V'|).\]
Then $|V'|>n/2$, and the lower bound follows.
\end{proof}

\begin{proof}[Proof of Theorem \ref{thm:nullw}]
Let $\mu^\delta \in (\delta\Z)^n$ denote the (constant) vector obtained
by rounding the value in $\mu_0$ to the closest value in $\delta\Z$.
As $\mu^\delta$ is a $\delta\Z$-expansion of $\hat{\mu}$, and $\hat{\mu}$ is
a $(\tau,\delta\Z)$-local-minimizer, we have
\[F_\w(\hat{\mu})=\frac{1}{2}\|Y-\hat{\mu}\|^2+\lambda\|D\hat{\mu}\|_\w
\leq \frac{1}{2}\|Y-\mu^\delta\|^2+\tau=F_\w(\mu^\delta)+\tau.\]
Denoting by $\bS$ the multi-cut induced by $\hat{\mu}$,
the same steps as leading to (\ref{eq:oracleprob2}) yield, for any positive
constants $c_1<\frac{1}{4}$ and $c_2>1$ such that $c_1c_2=\frac{1}{4}$,
\[\left(\frac{1}{2}-2c_1\right)\|\hat{\mu}-\mu_0\|^2
\leq \left(\frac{1}{2}+2c_1\right)n\delta^2+c_2\|P^{\bS}\eps\|^2+\tau
-\lambda\|D\hat{\mu}\|_\w.\]
Hence, for $\delta \leq 1/\sqrt{n}$ and $\tau \leq 1$,
\[\lambda\|D\hat{\mu}\|_\w \leq C''(1+\|P^\bS\eps\|^2)\]
for a constant $C''>0$. Lemma \ref{lemma:badcut} implies, with
probability at least $1-C'\w(E)^{-3} \geq 1-C'(n-1)^{-3}$,
\[\|P^\bS\eps\|^2 \leq C\max(\w(\partial\bS),1)\log(e\cdot\w(E))
=C\max(\|D\hat{\mu}\|_\w,1)\log(e\cdot\w(E)).\]
Recall by (\ref{eq:wcut}) that $\|D\hat{\mu}\|_\w \geq 1$ if
$\|D\hat{\mu}\|_\w \neq 0$. Thus, if $\lambda>C''(1+C\log(e\cdot\w(E)))$, then
the above two statements imply $\|D\hat{\mu}\|_\w=0$, and hence $\hat{\mu}$ is
constant with probability at least $1-C'(n-1)^{-3}$, as desired.
\end{proof}

\section{Total-variation lower bound for the linear chain}\label{appendix:TV}
In this appendix, we prove Theorems \ref{thm:TVnull} and
\ref{thm:TValternative}. By rescaling, we will assume $\sigma^2=1$. We say
that $k$ consecutive intervals $\bS:=(S_1,\ldots,S_k)$ partition $(1,\ldots,n)$
if there are integers $1=i_1<i_2<\ldots<i_k<i_{k+1}=n+1$ such that
$S_\alpha=(i_\alpha,i_\alpha+1,\ldots,i_{\alpha+1}-1)$ for each $\alpha$.
We associate to each such partition $\bS$ two orthogonal
projections $P^\bS:\R^n \to \R^n$ and $Q^\bS:\R^n \to \R^n$,
such that $P^\bS$ projects onto the $k$-dimensional subspace of
vectors assuming a constant value on each $S_\alpha$, and $Q^\bS$ projects
onto the $(n-k)$-dimensional
orthogonal complement of this subspace. More formally, for each
$\alpha$ and each $i \in S_\alpha$, $P^\bS$ is defined by
$(P^\bS \mu)_i=|S_\alpha|^{-1}\sum_{j \in S_\alpha} \mu_j$, and
$Q^\bS=\Id-P^\bS$. We denote by $P^\bS(\R^n)$ the range of $P^\bS$, and for any
$v \in P^\bS(\R^n)$ we denote by $v_{S_\alpha}$ the (constant) value of $v$ over
$S_\alpha$.

We will say that $\mu \in \R^n$ induces such a partition
$\bS=(S_1,\ldots,S_k)$ and a sign vector
$\bs=(s_1,\ldots,s_{k-1}) \in \{-1,1\}^{k-1}$ if $\mu \in P^\bS(\R^n)$ and
$\sign(\mu_{S_{\alpha+1}}-\mu_{S_{\alpha}})=s_\alpha$ for each
$\alpha=1,\ldots,k-1$. Recall the vertex-edge incidence matrix $D \in \R^{n
\times (n-1)}$: In this appendix, for the linear chain graph, let us fix the
sign convention for $D$ so that
$D\mu=(\mu_2-\mu_1,\mu_3-\mu_2,\ldots,\mu_n-\mu_{n-1})$.

The following lemma is an implication of the subgradient condition for
$\hat{\mu}$ minimizing (\ref{eq:objTV}); a similar result was stated as Lemmas
2.1 and A.1 of \cite{rinaldo}.
\begin{lemma}\label{lemma:subgradient}
Let $\lambda>0$. Fix $k$ consecutive intervals $\bS:=(S_1,\ldots,S_k)$ that
partition $(1,\ldots,n)$, and fix a sign vector $\bs:=(s_1,\ldots,s_{k-1})
\in \{-1,1\}^{k-1}$. Define $v \in P^\bS(\R^n)$
such that, for each $\alpha=1,\ldots,k$,
\begin{equation}\label{eq:v}
v_{S_\alpha}=\frac{\lambda(s_\alpha-s_{\alpha-1})}{|S_\alpha|},
\end{equation}
where we set $s_0=s_k:=0$. Define also
\begin{align*}
W&:=\{w \in [-1,1]^{n-1}:w_i=s_\alpha \text{ if }
i \in S_\alpha \text{ and }
i+1 \in S_{\alpha+1} \text{ for some } \alpha \in \{1,\ldots,k-1\}\},\\
K&:=\{x \in P^\bS(\R^n):\sign(x_{S_{\alpha+1}}-x_{S_{\alpha}})=s_\alpha
\text{ for all } \alpha=1,\ldots,k-1\}.
\end{align*}
Then the following two events are equivalent:
\begin{enumerate}
\item The minimizer $\hat{\mu}$ of (\ref{eq:objTV})
induces the partition $\bS$ and the signs $\bs$.
\item $P^\bS Y+v \in K$ and $Q^\bS Y-v \in \lambda D^TW$.
\end{enumerate}
Furthermore, if this event holds, then $\hat{\mu}=P^\bS Y+v$.
\end{lemma}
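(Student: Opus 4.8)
The plan is to work entirely from the first-order (subgradient) optimality condition for the convex objective $F_1$. Since the quadratic term makes $F_1$ strictly convex, it has a unique minimizer $\hat{\mu}$, characterized by $0 \in -(Y-\hat{\mu})+\lambda D^T\,\partial\|D\hat{\mu}\|_1$. Written out, $\hat{\mu}$ is optimal if and only if $Y-\hat{\mu}=\lambda D^T w$ for some $w \in [-1,1]^{n-1}$ with $w_i=\sign((D\hat{\mu})_i)$ at every edge $i$ where $(D\hat{\mu})_i \neq 0$. The whole proof then amounts to translating this single condition into the two conditions of the lemma by decomposing it through the orthogonal projections $P^\bS$ and $Q^\bS$.

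The computational crux I would isolate first is the identity $(D^T w)_i=w_{i-1}-w_i$ for the chain (with the boundary convention $w_0=w_n=0$), so that averaging $D^T w$ over a block $S_\alpha=(i_\alpha,\ldots,i_{\alpha+1}-1)$ telescopes:
\[
[P^\bS D^T w]_{S_\alpha}=\frac{1}{|S_\alpha|}\sum_{i \in S_\alpha}(w_{i-1}-w_i)
=\frac{w_{i_\alpha-1}-w_{i_{\alpha+1}-1}}{|S_\alpha|}.
\]
The point is that $P^\bS D^T w$ depends only on the values of $w$ at the boundary edges flanking each block. For $w \in W$ these boundary values are exactly $s_{\alpha-1}$ and $s_\alpha$ (the conventions $s_0=s_k=0$ matching $w_0=w_n=0$ at the two ends), so that $-\lambda P^\bS D^T w=v$ precisely as defined in (\ref{eq:v}). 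This one calculation forces the explicit form of $v$ and is the only place the chain structure is used.

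With this in hand I would prove both implications by splitting $Y-\hat{\mu}=\lambda D^T w$ through $P^\bS$ and $Q^\bS$. For (1) $\Rightarrow$ (2): if $\hat{\mu}$ induces $\bS$ and $\bs$, then $\hat{\mu} \in K$, and the sign requirement in the subgradient pins the boundary entries of $w$ to the $s_\alpha$, i.e.\ $w \in W$; the $P^\bS$-component of the optimality equation reads $\hat{\mu}=P^\bS Y+v$ via the telescoping identity, and substituting back leaves $Q^\bS Y-v=\lambda D^T w \in \lambda D^T W$, which together with the formula for $\hat{\mu}$ is exactly (2). For (2) $\Rightarrow$ (1): set $\mu^*:=P^\bS Y+v$; the hypothesis $Q^\bS Y-v=\lambda D^T w$ with $w \in W$ rearranges to $Y-\mu^*=\lambda D^T w$, and since $\mu^* \in K$ is constant on each block with jump signs $s_\alpha$, the vector $w \in W$ is a legitimate subgradient of $\|\cdot\|_1$ at $D\mu^*$ (matching $s_\alpha$ on the nonzero boundary differences, and free in $[-1,1]$ on the vanishing interior differences). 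Hence $\mu^*$ satisfies the optimality condition, so by uniqueness $\hat{\mu}=\mu^* \in K$, meaning $\hat{\mu}$ induces exactly $\bS$ and $\bs$.

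The main obstacle is the edge-by-edge bookkeeping certifying that $w \in W$ is a valid subgradient: one must check that the defining constraints of $W$ coincide with the subgradient constraints, namely that boundary edges carry strictly signed jumps of $\mu^*$ and are pinned to $s_\alpha$, while interior edges see $(D\mu^*)_i=0$ and therefore need only $w_i \in [-1,1]$, which is exactly how $W$ is specified. The remaining care is purely in tracking the two endpoint conventions $s_0=s_k=0$ and $w_0=w_n=0$, so that the telescoping identity and the block formula for $v$ hold uniformly, including for the first and last blocks.
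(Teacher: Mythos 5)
Your proposal is correct and follows essentially the same route as the paper's proof: the KKT/subgradient characterization $Y-\hat{\mu}\in\lambda D^T W$, the observation that $W$ is exactly the subdifferential of $\|\cdot\|_1$ at $D\hat{\mu}$ when $\hat{\mu}$ induces $\bS$ and $\bs$, splitting through $P^\bS$ and $Q^\bS$, and uniqueness from strict convexity for the converse. The only difference is that you explicitly carry out the telescoping computation showing $P^\bS(\lambda D^T w)=-v$ for $w\in W$, a fact the paper asserts without proof, so your write-up is if anything slightly more complete.
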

\begin{proof}
Suppose $\hat{\mu}$ minimizes (\ref{eq:objTV}) and
induces $\bS$ and $\bs$. Then $W$ is precisely the
subdifferential of the $l_1$-norm at $D\hat{\mu}$, so $\lambda D^TW$ is the
subdifferential of $\lambda \|D\hat{\mu}\|_1$ with respect to $\hat{\mu}$. Since
$\hat{\mu}$ minimizes (\ref{eq:objTV}), it satisfies the subgradient condition
$Y-\hat{\mu} \in \lambda D^TW$. Apply $P^\bS$ to both sides, noting that
$P^\bS \hat{\mu}=\hat{\mu}$ and
$P^\bS x=-v$ for any $x \in \lambda D^TW$. Then $\hat{\mu}=P^\bS Y+v$. Since
$\hat{\mu}$ induces the signs $\bs$, we have $P^\bS Y+v\in K$. Also,
$Y-\hat{\mu}=Q^\bS Y-v \in \lambda D^TW$, as desired.

Conversely, suppose $P^\bS Y+v \in K$ and $Q^\bS Y-v \in \lambda D^TW$.
Defining $\hat{\mu}=P^\bS Y+v$, the first condition implies that $\hat{\mu}$
induces $\bS$ and $\bs$, while the second implies $Y-\hat{\mu} \in \lambda
D^TW$. Since $\lambda D^TW$ is the subdifferential of $\lambda\|D\hat{\mu}\|_1$ 
with respect to $\hat{\mu}$, this implies that $\hat{\mu}$ minimizes
(\ref{eq:objTV}). (The objective (\ref{eq:objTV}) is strictly convex, so
$\hat{\mu}$ is the unique minimizer.)
\end{proof}

Using this characterization, we may lower-bound the number of intervals in the
partition induced by $\hat{\mu}$ when the true signal $\mu_0$ is 0:
\begin{proof}[Proof of Theorem \ref{thm:TVnull}(a)]
For a fixed partition $\bS:=(S_1,\ldots,S_k)$ into $k$ consecutive
intervals, let us first bound
\[p_\bS:=\P\left[Q^\bS Y \in \lambda Q^\bS D^T[-1,1]^{n-1}\right].\]
Denoting by $\sB_m$ the Euclidean unit ball in $\R^m$ centered at 0,
clearly
\[p_\bS \leq \P[Q^\bS Y \in \lambda\sqrt{n}Q^\bS D^T\sB_{n-1}].\]
The range of $D^T$ is the $(n-1)$-dimensional space orthogonal to the all-1's
vector, and $Q^{\bS}$ is a projection onto an $(n-k)$-dimensional subspace of
this range---hence
$Q^{\bS}D^T$ has rank $n-k$. Let $Q^{\bS}D^T=U\Gamma V^T$ denote the (reduced)
singular value decomposition of $Q^\bS D^T$, where
$\Gamma=\diag(\gamma_1,\ldots,\gamma_{n-k})$
contains the singular values $\gamma_1 \leq \ldots \leq \gamma_{n-k}$,
and $U \in \R^{n \times (n-k)}$ and $V \in \R^{(n-1) \times (n-k)}$ have
orthonormal columns. Then, as $V^T\sB_{n-1}=\sB_{n-k}$, we have
\[p_\bS \leq \P[U^TQ^{\bS}Y \in \lambda\sqrt{n}\Gamma \sB_{n-k}].\]
The set $\mathcal{E}:=\lambda\sqrt{n}\Gamma \sB_{n-k}$ is an ellipsoid in
$\R^{n-k}$, whose principal axes are aligned with the standard basis and
have lengths $\lambda\sqrt{n}\gamma_i$. The vector
$\eps':=U^TQ^\bS Y$ of length $n-k$
has i.i.d.\ entries distributed as $\Normal(0,1)$ (when
$\mu_0=0$.) Then, for any $l \in \{1,\ldots,n-k\}$,
letting $\eps_{1:l}'$ denote the first $l$ coordinates of $\eps'$
and $\mathcal{E}_{1:l}$ denote the projection of $\mathcal{E}$ onto the first
$l$ dimensions,
\[p_{\bS} \leq \P[\eps_{1:l}' \in \mathcal{E}_{1:l}]
\leq (2\pi)^{-l/2}\operatorname{Vol}(\mathcal{E}_{1:l})
=\frac{\left(\lambda \sqrt{\frac{n}{2}}\right)^l}
{\Gamma(\frac{l}{2}+1)}\prod_{i=1}^l \gamma_i.\]
Here, $\operatorname{Vol}(\cdot)$ denotes the volume in $\R^l$,
and we have used $\operatorname{Vol}(\sB_l)=\pi^{l/2}/\Gamma(\frac{l}{2}+1)$.
Note, by Cauchy interlacing, that $\gamma_i \leq s_{i+k-1}(D^T)$ where
$s_{i+k-1}(D^T)$ denotes the
$i+k-1^\text{th}$ smallest non-zero singular value of $D^T$. (Equality holds
if the $k-1$ directions in the column span of $D^T$ that are
projected out by $Q^\bS$ are exactly those corresponding to the smallest $k-1$
non-zero singular values of $D^T$.)
As $D^TD$ is the Laplacian of the linear chain graph, which has
non-zero eigenvalues $4\sin^2(i\pi/2n) \leq (i\pi/n)^2$
for $i=1,\ldots,n-1$ \cite{andersonmorley}, this implies
$\gamma_i \leq (i+k-1)\pi/n$ and hence, for any $l \in \{1,\ldots,n-k\}$,
\[p_{\bS} \leq \frac{\left(\frac{\lambda \pi}{\sqrt{2n}}\right)^l}
{\Gamma(\frac{l}{2}+1)}\frac{(l+k-1)!}{(k-1)!}
=\left(\frac{\lambda \pi}{\sqrt{2n}}\right)^l
\frac{l!}{\Gamma(\frac{l}{2}+1)}\binom{l+k-1}{k-1}.\]
Applying the Gamma function duplication formula
$\sqrt{\pi}l!=\Gamma(\frac{l}{2}+1)\Gamma(\frac{l}{2}+\frac{1}{2})2^l$,
and $\binom{l+k-1}{k-1} \leq \binom{n-1}{k-1} \leq \binom{n}{k}$, this yields
\begin{equation}\label{eq:psbound}
p_{\bS} \leq \left(\lambda \pi\sqrt{2/n}\right)^l
\frac{\Gamma(\frac{l+1}{2})}{\sqrt{\pi}}\binom{n}{k}.
\end{equation}

Fix $k \geq 1$. Suppose $\hat{\mu}$ induces the partition $\hat{\bS}$ having
$\hat{k}$ elements, and $\hat{k} \leq k$. Then
Lemma \ref{lemma:subgradient} implies $Q^{\hat{\bS}} Y-v
\in \lambda D^TW$, where $v$ and $W$ are defined in terms of $\hat{\bS}$ and
$\hat{\bs}$. Applying $Q^{\hat{\bS}}$ to both sides and noting
$Q^{\hat{\bS}} v=0$,
\[Q^{\hat{\bS}} Y \in \lambda Q^{\hat{\bS}} D^TW
\subseteq \lambda Q^{\hat{\bS}} D^T[-1,1]^{n-1}.\]
Thus there exists some partition $\bS$ into exactly $k$ consecutive intervals
(equal to or refining $\hat{\bS}$)
for which applying $Q^\bS$ to both sides above yields
$Q^\bS Y \in \lambda Q^{\bS} D^T[-1,1]^{n-1}$.
As there are $\binom{n-1}{k-1} \leq \binom{n}{k}$ such partitions, applying
a union bound, (\ref{eq:psbound}), and $\Gamma(x+1)
\leq \sqrt{2\pi}((x+0.5)/e)^{x+0.5}$ for any $x>0$ (see Theorem 1.5 of
\cite{batir}),
\begin{align*}
\P[\hat{k} \leq k]&\leq \binom{n}{k}^2\left(\lambda \pi\sqrt{2/n}\right)^l
\frac{\Gamma(\frac{l+1}{2})}{\sqrt{\pi}}\\
&\leq \sqrt{2}\exp\left(2k\log \frac{en}{k}+l\log \lambda\pi\sqrt{2/n}
+\frac{l}{2}\log\frac{l}{2}-\frac{l}{2}\right).
\end{align*}
Taking $l=\lfloor \frac{n}{\max(\lambda^2,1)\pi^2} \rfloor$
ensures $l \log \lambda \pi\sqrt{2/n}+\frac{l}{2}\log \frac{l}{2} \leq 0$,
and taking $k=\lfloor \frac{cn}{\max(\lambda^2,1)\log n} \rfloor$
for a sufficiently small constant $c>0$ then ensures
$\P[\hat{k} \leq k] \leq Ce^{-c'n/\max(\lambda^2,1)}$,
for constants $C,c'>0$, as desired.
\end{proof}

To show part (b) of Theorem \ref{thm:TVnull}, we will argue that on the event
where $\hat{\mu}$ induces a partition into at least $k$ intervals, the
squared-error of $\hat{\mu}$ is typically also at least $k$ (up to logarithmic
factors). We establish this using the next two lemmas.

Suppose that $\hat{\mu}$ induces the partition $\bS=(S_1,\ldots,S_k)$
and the signs $\bs=(s_1,\ldots,s_{k-1}) \in \{-1,1\}^{k-1}$. For any $\alpha \in
\{1,\ldots,k\}$, let us call $S_\alpha$ a local maximum interval if
$s_{\alpha-1} \geq 0$ and $s_\alpha \leq 0$, and a local minimum interval
if $s_{\alpha-1} \leq 0$ and $s_\alpha \geq 0$, where by convention
$s_0=s_k:=0$.
\begin{lemma}\label{lemma:localmaxmin}
Let $\lambda>0$, let $\hat{\mu}$ be the minimizer of (\ref{eq:objTV}), and
suppose $\mu_0$ is such that $\|\mu_0\|_\infty \leq M$. There exists a
constant $c>0$ such that with probability at least $1-2/n$, every local minimum
interval and local maximum interval of the partition and signs induced by
$\hat{\mu}$ has length at
least $c\min(\frac{\lambda^2}{\log n},\frac{\lambda}{M})$.
\end{lemma}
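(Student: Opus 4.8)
The plan is to combine the explicit description of $\hat{\mu}$ on a constant interval, coming from the subgradient characterization in Lemma \ref{lemma:subgradient}, with a Gaussian large-deviation bound. First I would record that if $\hat{\mu}$ induces the partition $\bS=(S_1,\ldots,S_k)$ and signs $\bs$, then on each interval $S_\alpha=[a,b]$ of length $\ell$ the fitted value equals $\hat{\mu}_{S_\alpha}=\bar{Y}_{S_\alpha}+v_{S_\alpha}$ with $v_{S_\alpha}=\lambda(s_\alpha-s_{\alpha-1})/\ell$, so that on a local maximum interval $c:=\hat{\mu}_{S_\alpha}=\bar{Y}_{S_\alpha}-2\lambda/\ell$ (or $-\lambda/\ell$ at the two ends of the chain). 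Equivalently, the cumulative residual $U_i:=\sum_{l\le i}(Y_l-\hat{\mu}_l)$ satisfies $U_i\in[-\lambda,\lambda]$ for all $i$, touches $-\lambda$ at the left boundary $a-1$ and $+\lambda$ at the right boundary $b$ of a local maximum interval, and touches the bands the other way around at a local minimum interval; I will use this ``taut-string'' picture throughout.

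Second, I would argue that a short local maximum interval forces an anomalously large Gaussian average. The key deterministic step is to produce, to the right of $[a,b]$, a window $W$ disjoint from $[a,b]$, of length $\ell_W\le\ell$, on which $\bar{Y}_W$ is small. There are two cases. In case (a), no fitted value exceeds $c$ within distance $\ell$ to the right; then $W=[b+1,b+\ell]$ has $U_{b+\ell}\le\lambda$ while $U_b=\lambda$, forcing $\bar{Y}_W\le c$. In case (b), the nearest higher local maximum $S_R$ lies within distance $\ell$; then I take $W$ to be the intervening valley, ending at the left boundary of $S_R$ where $U=-\lambda$, so that $\sum_{l\in W}(Y_l-\hat{\mu}_l)=-2\lambda$ and, since every fitted value on $W$ is at most $c$, $\bar{Y}_W\le c-2\lambda/\ell_W$. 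In both cases $\bar{Y}_{S_\alpha}-\bar{Y}_W\gtrsim\lambda/\ell_W$, whereas $\|\mu_0\|_\infty\le M$ makes the signal contribute at most $2M$ to this difference. Hence the difference of the independent centered Gaussian averages satisfies $\bar{\eps}_{S_\alpha}-\bar{\eps}_W\gtrsim\lambda/\ell_W-2M$, which is $\gtrsim\lambda/\ell_W$ once $\ell_W\lesssim\lambda/M$; as this difference has variance $1/\ell+1/\ell_W\le 2/\ell_W$, the event has probability at most
\[
\exp\!\left(-\frac{c\lambda^2}{\ell_W}\right).
\]

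Finally I would take a union bound over the polynomially many choices of the pair of disjoint windows $(S_\alpha,W)$, together with the symmetric statement for local minimum intervals. Setting $L:=c\min(\lambda^2/\log n,\,\lambda/M)$ with a sufficiently small constant $c$, any local extremum interval of length $<L$ yields a reference window of length $\ell_W<L\le\lambda^2/(16\log n)$, so each bad event has probability at most $n^{-4}$ and the union bound gives total probability $O(n^{-2})\le 2/n$. Intervals lying within distance $\ell$ of both ends of the chain would force $\ell\gtrsim n$ and hence cannot be short, so they are handled trivially.

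The step I expect to be the main obstacle is precisely the deterministic window selection in the second paragraph: one must choose the reference window so that $\bar{\eps}_W$ concentrates at the correct scale $1/\ell_W$. A crude comparison against the immediate neighbor fails when that neighbor is short, degrading the exponent to $\lambda^2/\ell^2$ and hence the final bound to $\lambda/\sqrt{\log n}$. Routing around this through the intervening valley in case (b)---which is legitimate because a local maximum interval can have no immediately adjacent higher neighbor, so a higher peak to the right is always separated from $[a,b]$ by a genuine valley ending at a band touch---is what recovers the sharp exponent $\lambda^2/\ell_W$ and the two regimes $\lambda^2/\log n$ and $\lambda/M$.
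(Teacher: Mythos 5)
Your proposal is correct in its essentials, but it takes a genuinely different route from the paper's proof. The paper also starts from Lemma \ref{lemma:subgradient}, but its deterministic step is global rather than local: it considers only the \emph{shortest} local-extremum interval $S_\alpha$ and compares its $Y$-average with that of the adjacent extremum interval $S_\beta$ in the sequence of extrema (a local max neighbors a local min, and the fitted values between them are monotone, so $|\bar Y_{S_\alpha}-\bar Y_{S_\beta}|\ge |v_{S_\alpha}|+|v_{S_\beta}|\ge \lambda/|S_\alpha|$ directly from (\ref{eq:v})). Since $|S_\beta|\ge |S_\alpha|$ by minimality, both noise averages are controlled at scale $\sqrt{(\log n)/|S_\alpha|}$ on a single event --- all interval noise averages bounded by $\sqrt{6\log n/|S|}$, an $O(n^2)$ union bound --- and a quadratic inequality in $\sqrt{|S_\alpha|}$ finishes. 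This sidesteps exactly the obstacle you flag (a short immediate neighbor) by choosing the comparison partner among extremum intervals, rather than by routing through a valley; your taut-string construction (window ending at the band touch $U=-\lambda$ just before the first up-crossing of the level $c$) solves the same problem locally, and in exchange yields a statement for every short extremum interval rather than only the shortest. The costs are symmetric: your union bound runs over $O(n^3)$ pairs of windows instead of $O(n^2)$ intervals, and your case analysis must be duplicated for left/right and max/min and patched at the chain ends, which the paper's argument handles automatically since its comparison interval exists whenever there are at least two extrema.

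Two small repairs to your write-up. First, the union-bound arithmetic is off: with $O(n^3)$ (let alone $O(n^4)$) pairs, a per-event probability of $n^{-4}$ gives $O(n^{-1})$, not $O(n^{-2})$; you should shrink the constant in $L$ so that $\ell_W<L$ forces per-event probability at most $n^{-5}$, which is harmless since the exponent scales like $\lambda^2/\ell_W\gtrsim (\log n)/c$. Second, the first index $j^*>b$ with $\hat\mu_{j^*}>c$ need not belong to a ``higher local maximum''; what your argument actually uses --- and all it needs --- is that the jump into $j^*$ is an up-jump, so that $U_{j^*-1}=-\lambda$ and every fitted value on $(b,j^*)$ is at most $c$. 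The fix is purely terminological, but as written the case split is misdescribed.
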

\begin{proof}
For any $v \in \R^n$ and any interval $S \subseteq (1,\ldots,n)$,
denote $\bar{v}_S=|S^{-1}|\sum_{i \in S} v_i$. Define the event
\[\mathcal{E}:=\left\{\text{there exists an interval } S \subseteq (1,\ldots,n):
|\bar{\eps}_S|>\sqrt{(6\log n)/|S|}\right\}.\]
By a Gaussian tail bound and union bound, $\P[\mathcal{E}]
\leq 2n^2e^{-3\log n}=2/n$.

Let $\bS=(S_1,\ldots,S_k)$ be the partition into consecutive intervals induced
by $\hat{\mu}$. Define
\[A:=\{\alpha:S_\alpha \text{ is a local maximum interval or local minimum
interval}\}.\]
Consider $\alpha \in A$ corresponding to the
smallest length $|S_\alpha|$. Let $\beta \in A$ be the element immediately
before or after $\alpha$ in $A$, and note that if $S_\alpha$ is a
local maximum, then $S_\beta$ is a local minimum, and vice versa.
Then, letting $v \in P^{\bS}(\R^n)$ be as defined in (\ref{eq:v}),
$|\bar{Y}_{S_\alpha}-\bar{Y}_{S_\beta}|$ exceeds
$|\hat{\mu}_{S_\alpha}-\hat{\mu}_{S_\beta}|$ by
$|v_{S_\alpha}|+|v_{S_\beta}|$, so we must have
\[|\bar{Y}_{S_\alpha}-\bar{Y}_{S_\beta}|
\geq |v_{S_\alpha}|+|v_{S_\beta}|
\geq \frac{\lambda}{|S_\alpha|}+\frac{\lambda}{|S_\beta|}
\geq \frac{\lambda}{|S_\alpha|}.\]
On the event $\mathcal{E}$,
\[|\bar{Y}_{S_\alpha}-\bar{Y}_{S_\beta}|
\leq |\bar{\eps}_{S_\alpha}-\bar{\eps}_{S_\beta}|+2M
\leq \sqrt{\frac{6\log n}{|S_\alpha|}}
+\sqrt{\frac{6\log n}{|S_\beta|}}+2M
\leq 2\sqrt{\frac{6\log n}{|S_\alpha|}}+2M,\]
where the last bound recalls that $S_\alpha$ has smallest length among $\alpha
\in A$. Then on $\mathcal{E}$,
\[0 \leq 2M|S_\alpha|+2\sqrt{6(\log n)|S_\alpha|}-\lambda,\]
which implies
\[\sqrt{|S_\alpha|} \geq 
\frac{\sqrt{24\log n+8M\lambda}-2\sqrt{6\log n}}{4M}
=\frac{2\lambda}{\sqrt{24\log n+8M\lambda}+2\sqrt{6\log n}}
\geq c\min\left(\frac{\lambda}{\sqrt{\log n}},\sqrt{\frac{\lambda}{M}}\right).\]
\end{proof}

\begin{lemma}\label{lemma:conebound}
Fix any $\mu \in \R^k$. Let $\xi \in \R^k$ be a random vector distributed as
$\xi \sim \Normal(\mu,\Id)$, and let $K \subseteq \R^k$ be any convex cone with
vertex 0 and non-empty interior. Then there exists a universal constant $c>0$
such that
\[\E[\|\xi\|^2 \mid \xi \in K] \geq \frac{ck^2}{\max(\|\mu\|^2,k)}.\]
\end{lemma}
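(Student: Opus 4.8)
The plan is to reduce the statement to a one-dimensional radial computation by exploiting the scale-invariance of the cone $K$. First I would pass to polar coordinates, writing $\xi = r\omega$ with $r = \|\xi\| > 0$ and $\omega = \xi/\|\xi\| \in S^{k-1}$. Because $K$ is a convex cone with vertex $0$, the event $\{\xi \in K\}$ depends only on the direction: $r\omega \in K \iff \omega \in K$ for every $r > 0$, so $\{\xi \in K\} = \{\omega \in \Omega\}$ where $\Omega := K \cap S^{k-1}$ (the non-empty interior of $K$ ensuring $\P[\xi \in K]>0$, so the conditional expectation is well-defined). Consequently
\[\E[\|\xi\|^2 \mid \xi \in K] = \E[r^2 \mid \omega \in \Omega],\]
and by the tower property it suffices to lower bound the conditional radial moment $\E[r^2 \mid \omega]$ uniformly over all directions $\omega$, since averaging such a uniform bound over $\omega \in \Omega$ preserves it.

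For a fixed $\omega$, writing $a := \langle \omega,\mu\rangle$, the Gaussian density $\propto \exp(-\tfrac12\|x-\mu\|^2)$ factorizes in polar coordinates so that the conditional law $\rho_a$ of $r$ given $\omega$ has density proportional to $r^{k-1}e^{-r^2/2 + ra}$ on $(0,\infty)$; note this depends on $\omega$ only through $a$. The key step will be an exact identity from integration by parts. Using $(r-a)e^{-r^2/2+ra} = -\tfrac{d}{dr}e^{-r^2/2+ra}$ and integrating $r^k(r-a)e^{-r^2/2+ra}$, the boundary term vanishes for $k \geq 1$, yielding $\E_{\rho_a}[r(r-a)] = k$, i.e.
\[\E_{\rho_a}[r^2] - a\,\E_{\rho_a}[r] = k.\]
This single relation drives the entire estimate.

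With the identity in hand I would split into two cases. If $a \geq 0$, then $\E_{\rho_a}[r^2] \geq k$, which already exceeds the target $ck^2/\max(\|\mu\|^2,k) \leq ck$ for any $c < 1$. If $a < 0$, write $a = -b$ with $0 < b \leq \|\mu\|$ (by Cauchy--Schwarz), so the identity gives $\E_{\rho_a}[r^2] = k - b\,\E_{\rho_a}[r]$. Bounding $\E_{\rho_a}[r] \leq (\E_{\rho_a}[r^2])^{1/2}$ and setting $m := (\E_{\rho_a}[r^2])^{1/2}$ produces the quadratic inequality $m^2 + bm - k \geq 0$, whence $m \geq \tfrac{1}{2}(-b+\sqrt{b^2+4k})$. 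Rationalizing and using $\sqrt{b^2+4k} \leq b + 2\sqrt{k}$ gives $m \geq k/(b+\sqrt{k})$, and squaring together with $(b+\sqrt{k})^2 \leq 4\max(b^2,k)$ and $b \leq \|\mu\|$ yields $\E_{\rho_a}[r^2] \geq k^2/(4\max(\|\mu\|^2,k))$. Since the bound holds for every $\omega$, averaging over $\omega \in \Omega$ gives the claim with $c = 1/4$.

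I expect the only genuine subtlety to be the reduction in the first paragraph: one must verify carefully that the cone structure decouples the radius from the direction, so that conditioning on $\xi \in K$ leaves the radial marginal undistorted beyond the dependence already captured by $a = \langle\omega,\mu\rangle$. Once this decoupling and the integration-by-parts identity are established, the remaining steps are elementary inequalities, and no concentration or fine control of the truncated Gaussian moments is needed.
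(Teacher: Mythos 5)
Your proposal is correct, and it takes a genuinely different middle route from the paper, even though both start from polar coordinates and end with the same algebra. The paper does not condition on the direction; instead it compares the full conditional radial law of $\|\xi\|$ given $\xi \in K$ against a single worst-case radial law $q(r) \propto r^{k-1}e^{-(r+\|\mu\|)^2/2}$ (the law one would get if the direction were exactly anti-aligned with $\mu$), proving a likelihood-ratio ordering $p(s)/p(t) \geq q(s)/q(t)$ for $s \geq t$ and hence stochastic dominance; it then bounds the second moment of that worst-case law via the truncated-Gaussian moment recurrence $M_{k+1} = -\|\mu\|M_k + kM_{k-1}$ together with Cauchy--Schwarz $M_{k+1}M_{k-1} \geq M_k^2$. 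Your argument replaces the stochastic-dominance step---the most delicate part of the paper's proof---by disintegrating over the direction $\omega$ and proving a bound uniform in $\omega$: your integration-by-parts identity $\E_{\rho_a}[r^2] - a\,\E_{\rho_a}[r] = k$ is exactly the paper's moment recurrence, but applied pointwise in each direction (with $a = \langle \omega,\mu\rangle$ in place of $-\|\mu\|$), and your bound $\E_{\rho_a}[r] \leq (\E_{\rho_a}[r^2])^{1/2}$ plays the role of the paper's $M_k^2 \leq M_{k+1}M_{k-1}$. What your route buys is economy and transparency: the worst case $a = -\|\mu\|$ (forced by Cauchy--Schwarz, $|a| \leq \|\mu\|$) emerges from elementary monotonicity of the final expression in $b = -a$ rather than from a distributional comparison, and no coupling or ordering of laws is needed. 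What the paper's route buys is slightly more information: it identifies an explicit law $R$ that is stochastically dominated by the conditional radius, which could be reused to bound other monotone functionals of $\|\xi\|$, not just the second moment. Both yield the same universal constant up to trivial adjustments (your $c = 1/4$ suffices), so as a proof of the stated lemma your version is, if anything, the cleaner one.
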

\begin{proof}
Consider a change of variables to $S=\|\xi\|$ and
$\Theta$ a set of $k-1$ angular variables representing the angle of $\xi$.
Since $K$ is a cone with vertex 0, the condition $\xi \in K$ is a function only
of $\Theta$. Hence the joint density of $S$ and $\Theta$ conditional on $\xi \in
K$ is given by
\[p(s,\theta)=c(k,\theta,K)s^{k-1}e^{-\frac{\|\xi(s,\theta)-\mu\|^2}{2}},\]
for some quantity $c(k,\theta,K)$ independent of $s$. Note that for any
$s \geq t \geq 0$ and any $\theta$,
\begin{align*}
\|\xi(t,\theta)-\mu\|^2-\|\xi(s,\theta)-\mu\|^2
&=t^2-s^2+2\langle \xi(s,\theta)-\xi(t,\theta),\mu \rangle\\
&\geq t^2-s^2-2(s-t)\|\mu\|
=(t+\|\mu\|)^2-(s+\|\mu\|)^2.
\end{align*}
Letting $R$ denote a positive random variable with density function
\[q(r)=I(\mu)r^{k-1}e^{-\frac{(r+\|\mu\|)^2}{2}},\]
where $I(\mu)$ is the normalization constant such that $\int_0^\infty q(r)dr=1$,
this implies $p(s,\theta) \geq \frac{q(s)}{q(t)}p(t,\theta)$. Integrating over
$\theta$ and denoting by $p(s)$ the marginal distribution of $S$ conditional on
$\xi \in K$, this implies $p(s) \geq \frac{q(s)}{q(t)}p(t)$. The
inequality is reversed for $0 \leq s \leq t$, so
\begin{align*}
\P[\|\xi\| \geq t \mid \xi \in K]
=\frac{\int_t^\infty p(s)ds}
{\int_0^t p(s)ds+\int_t^\infty p(s)ds}
&\geq \frac{\frac{p(t)}{q(t)}\int_t^\infty q(s)ds}
{\int_0^t p(s)ds+\frac{p(t)}{q(t)}\int_t^\infty q(s)ds}\\
&\geq \frac{\frac{p(t)}{q(t)}\int_t^\infty q(s)ds}
{\frac{p(t)}{q(t)} \int_0^t q(s)ds+\frac{p(t)}{q(t)}\int_t^\infty q(s)ds}
=\P[R \geq t].
\end{align*}
That is, the distribution of $\|\xi\|$ conditional on $\xi \in K$ stochastically
dominates the (unconditional) distribution of $R$.

Thus, to conclude the proof of the lemma, it suffices to lower bound
\[\E[R^2]=\int_0^\infty r^{k+1}e^{-\frac{(r+\|\mu\|)^2}{2}}dr \Bigg/
\int_0^\infty r^{k-1}e^{-\frac{(r+\|\mu\|)^2}{2}}dr=:M_{k+1}/M_{k-1},\]
where $M_l$ is the $l^\text{th}$ moment of the $\Normal(-\|\mu\|,1)$
distribution truncated on the left at 0. Integration by parts yields the
recurrence $M_{k+1}=-\|\mu\|M_k+kM_{k-1}$, and Cauchy-Schwarz yields
$M_{k+1}M_{k-1} \geq M_k^2$. Then
\[\frac{M_{k+1}}{M_{k-1}}+\|\mu\|\sqrt{\frac{M_{k+1}}{M_{k-1}}}-k
\geq \frac{M_{k+1}}{M_{k-1}}+\|\mu\|\frac{M_k}{M_{k-1}}-k
\geq 0,\]
so
\[\sqrt{\frac{M_{k+1}}{M_{k-1}}} \geq \frac{\sqrt{\|\mu\|^2+4k}-\|\mu\|}{2}
=\frac{2k}{\|\mu\|+\sqrt{\|\mu\|^2+4k}},\]
and the result follows by taking the square.
\end{proof}

\begin{proof}[Proof of Theorem \ref{thm:TVnull}(b)]
For any positive constant $n_0$, it suffices to consider graphs with
$n \geq n_0$. (For $n<n_0$, the lower bound trivially holds by adjusting $c'$
based on $n_0$, as the risk is clearly non-zero for any graph.)

Let $\hat{\mu}$ induce the partition $\hat{\bS}$ and signs $\hat{\bs}$.
For any fixed partition $\bS=(S_1,\ldots,S_k)$ and signs $\bs$, 
by Lemma \ref{lemma:subgradient},
\[\E\big[\|\hat{\mu}\|^2 \mid \hat{\bS}=\bS,\hat{\bs}=\bs\big]
=\E\big[\|P^\bS Y+v\|^2 \mid P^\bS Y+v \in K,\,Q^\bS Y-v
\in \lambda D^TW\big],\]
where $v$, $K$, and $W$ are fixed and defined by $\bS$ and $\bs$.
Since the projections $P^\bS$ and $Q^\bS$ are orthogonal to each other, $P^\bS
Y$ is independent of $Q^\bS Y$, so we may drop the conditioning on the event
$Q^\bS Y-v \in \lambda D^TW$. Define $\xi \in \R^k$ by
$\xi_\alpha=|S_\alpha|^{1/2}(\bar{Y}_{S_\alpha}+v_{S_\alpha})$ for each
$\alpha=1,\ldots,k$, where
$\bar{Y}_{S_\alpha}=|S_\alpha|^{-1}\sum_{i \in S_\alpha} Y_i$.
Then $\|P^\bS Y+v\|^2=\|\xi\|^2$, $\xi \sim \Normal(w,\Id)$
for $w=(|S_1|^{1/2}v_{S_1},\ldots,|S_k|^{1/2}v_{S_k})$, and the condition
$P^\bS Y+v \in K$ is equivalent to $\xi \in \tilde{K}$ for some convex cone
$\tilde{K} \in \R^k$ with vertex 0 and non-empty interior.
Applying Lemma \ref{lemma:conebound}, for a constant $c'>0$,
\begin{equation}\label{eq:MSEboundTVnull}
\E\big[\|\hat{\mu}\|^2 \mid \hat{\bS}=\bS,\hat{\bs}=\bs\big]
=\E[\|\xi\|^2 \mid \xi \in \tilde{K}] \geq \frac{c'k^2}{\max(\|w\|^2,k)}.
\end{equation}

Let $\hat{k}$, $\hat{v}$, and $\hat{w}$ denote $k$, $v$, and $w$ as defined
above for the random induced partition and signs $\hat{\bS}$ and $\hat{\bs}$.
By part (a) of this theorem (already established), for some constant $n_0$ and
any $n \geq n_0$, on an event of probability at least $3/4$, 
$\hat{k} \geq \frac{c'n}{\max(\lambda^2,1) \log n}$.
Note that $\hat{v}$ is only non-zero on those intervals $\hat{S}_\alpha$
which are local maxima or local minima, and that
the (constant) value of $\hat{v}$ on any such interval has
magnitude at most $2\lambda/|\hat{S}_\alpha|$.
Applying Lemma \ref{lemma:localmaxmin} with $M=0$, on a different event of
probability at least $3/4$, every local minimum and local maximum interval of
$\hat{\bS}$ has length $|\hat{S}_\alpha| \geq c'\max(\lambda^2,1)/(\log n)$.
(For $\lambda<1$ this statement is trivial.) Then, on this event,
there are at most $\frac{n\log n}{c'\max(\lambda^2,1)}$ such intervals, and
hence
\[\|\hat{w}\|^2=
\sum_{\alpha=1}^{\hat{k}} |\hat{S}_\alpha|\hat{v}_{\hat{S}_\alpha}^2
\leq \sum_{\alpha:\hat{S}_\alpha \text{ local max or min}}
\frac{4\lambda^2}{|\hat{S}_\alpha|}
\leq \frac{n\log n}{c'\max(\lambda^2,1)}\frac{4\log n}{c'} \leq
\frac{4n(\log n)^2}{{c'}^2\max(\lambda^2,1)}.\]
Taking the full expectation of (\ref{eq:MSEboundTVnull}) over the intersection
of these two events of probability at least $1/2$, the theorem follows.
\end{proof}

Finally, we prove Theorem \ref{thm:TValternative} using
Theorem \ref{thm:TVnull}(b) and Lemma \ref{lemma:localmaxmin}:
\begin{proof}[Proof of Theorem \ref{thm:TValternative}]
Let $c_0$ denote the constant in Lemma \ref{lemma:localmaxmin}. If
\[\lambda^2 \leq \frac{3\log n}{c_0}\sqrt{\frac{n\Delta}{s}},\]
then the lower bound follows from Theorem \ref{thm:TVnull}(b)
by considering the risk at $\mu_0=0$. Otherwise, construct a signal $\mu_0$
by the following procedure:
\begin{enumerate}
\item Choose
\[k=\left\lfloor \min\left(\frac{s}{2},\frac{n}{3\Delta}\right)\right\rfloor\]
disjoint intervals $I_1,\ldots,I_k$ in $(1,\ldots,n)$, each of length $3\Delta$.
\item Set $M=c_0\lambda/(3\Delta)$. For each $\alpha=1,\ldots,k$, divide
$I_\alpha$ into three sub-intervals of length $\Delta$, and set $\mu_{0,i}=M$
for each $i$ belonging to the middle such sub-interval.
\item Set $\mu_{0,i}=0$ for all remaining indices.
\end{enumerate}
This signal satisfies $\|D\mu_0\|_0 \leq 2k \leq s$ and $\Delta(\mu_0) \geq
\Delta$. We have
\[\frac{\lambda^2}{\log n}>\frac{3}{c_0}\sqrt{\frac{n\Delta}{s}}
>\frac{3\Delta}{c_0}=\frac{\lambda}{M},\]
so Lemma \ref{lemma:localmaxmin} implies that on an event $\mathcal{E}$ of
probability at least $1-2/n$, every local maximum or local minimum interval for
the partition and signs induced by $\hat{\mu}$ has length at least $3\Delta$. On
this event $\mathcal{E}$, there cannot be a sub-interval of any $I_\alpha$
that is a local maximum or local minimum, so the estimate $\hat{\mu}$ must be
monotonically increasing or monotonically decreasing over each interval
$I_\alpha$. Then it is easily verified that
\[\|\hat{\mu}-\mu_0\|^2 \geq
\sum_{\alpha=1}^k \sum_{i \in I_\alpha} \|\hat{\mu}_i-\mu_{0,i}\|^2 \geq
ck\Delta M^2 \geq c'(\log n)\sqrt{\frac{sn}{\Delta}}\]
for constants $c,c'>0$, where the last inequality uses $\lambda^2 \gtrsim (\log
n)\sqrt{n\Delta/s}$ and $k \gtrsim s$.
The result follows upon taking the expectation over $\mathcal{E}$.
\end{proof}

\section*{Acknowledgments}
We would like to thank Professors Amin Saberi, James Sharpnack, and Emmanuel
Candes for helpful and inspiring discussions on this problem and related
literature.

\bibliography{references}{}

\newcommand{\etalchar}[1]{$^{#1}$}
\begin{thebibliography}{HMPSST16}

\bibitem[ABBDL10]{addarioberryetal}
Louigi Addario-Berry, Nicolas Broutin, Luc Devroye, and G{\'a}bor Lugosi.
\newblock On combinatorial testing problems.
\newblock {\em Ann. Statist.}, 38(5):3063--3092, 2010.

\bibitem[ACCD11]{ariascastroetal2011}
Ery Arias-Castro, Emmanuel~J Cand{\`e}s, and Arnaud Durand.
\newblock Detection of an anomalous cluster in a network.
\newblock {\em Ann. Statist.}, 39(1):278--304, 2011.

\bibitem[ACCHZ08]{ariascastroetal2008}
Ery Arias-Castro, Emmanuel~J Cand{\`e}s, Hannes Helgason, and Ofer Zeitouni.
\newblock Searching for a trail of evidence in a maze.
\newblock {\em Ann. Statist.}, 36(4):1726--1757, 2008.

\bibitem[ACDH05]{ariascastroetal2005}
Ery Arias-Castro, David~L Donoho, and Xiaoming Huo.
\newblock Near-optimal detection of geometric objects by fast multiscale
  methods.
\newblock {\em IEEE Trans. Inform. Theory}, 51(7):2402--2425, 2005.

\bibitem[ACG13]{ariascastrogrimmett}
Ery Arias-Castro and Geoffrey~R Grimmett.
\newblock Cluster detection in networks using percolation.
\newblock {\em Bernoulli}, 19(2):676--719, 2013.

\bibitem[AJM85]{andersonmorley}
William~N Anderson~Jr and Thomas~D Morley.
\newblock Eigenvalues of the {L}aplacian of a graph.
\newblock {\em Linear Multilinear Algebra}, 18(2):141--145, 1985.

\bibitem[AL89]{augerlawrence}
Ivan~E Auger and Charles~E Lawrence.
\newblock Algorithms for the optimal identification of segment neighborhoods.
\newblock {\em Bull. Math. Biol.}, 51(1):39--54, 1989.

\bibitem[Bat08]{batir}
Necdet Batir.
\newblock Inequalities for the gamma function.
\newblock {\em Arch. Math. (Basel)}, 91(6):554--563, 2008.

\bibitem[BBM99]{barronetal}
Andrew Barron, Lucien Birg{\'e}, and Pascal Massart.
\newblock Risk bounds for model selection via penalization.
\newblock {\em Probab. Theory Related Fields}, 113(3):301--413, 1999.

\bibitem[Bes86]{besag}
Julian Besag.
\newblock On the statistical analysis of dirty pictures.
\newblock {\em J. R. Stat. Soc. Ser. B. Stat. Methodol.}, 48(3):259--302, 1986.

\bibitem[BH93]{barryhartigan}
Daniel Barry and John~A Hartigan.
\newblock A {B}ayesian analysis for change point problems.
\newblock {\em J. Amer. Statist. Assoc.}, 88(421):309--319, 1993.

\bibitem[BK04]{boykovkolmogorov}
Yuri Boykov and Vladimir Kolmogorov.
\newblock An experimental comparison of min-cut/max-flow algorithms for energy
  minimization in vision.
\newblock {\em IEEE Trans. Pattern Anal. Mach. Intell.}, 26(9):1124--1137,
  2004.

\bibitem[BKL{\etalchar{+}}09]{boysenetal}
Leif Boysen, Angela Kempe, Volkmar Liebscher, Axel Munk, and Olaf Wittich.
\newblock Consistencies and rates of convergence of jump-penalized least
  squares estimators.
\newblock {\em Ann. Statist.}, 37(1):157--183, 2009.

\bibitem[BM01]{birgemassart2001}
Lucien Birg{\'e} and Pascal Massart.
\newblock Gaussian model selection.
\newblock {\em J. Euro. Math. Soc.}, 3(3):203--268, 2001.

\bibitem[BM07]{birgemassart2007}
Lucien Birg{\'e} and Pascal Massart.
\newblock Minimal penalties for {G}aussian model selection.
\newblock {\em Probab. Theory Related Fields}, 138(1-2):33--73, 2007.

\bibitem[BRT09]{bickeletal}
Peter~J Bickel, Ya'acov Ritov, and Alexandre~B Tsybakov.
\newblock Simultaneous analysis of {L}asso and {D}antzig selector.
\newblock {\em Ann. Statist.}, 37(4):1705--1732, 2009.

\bibitem[BVZ01]{boykovetal}
Yuri Boykov, Olga Veksler, and Ramin Zabih.
\newblock Fast approximate energy minimization via graph cuts.
\newblock {\em IEEE Trans. Pattern Anal. Mach. Intell.}, 23(11):1222--1239,
  2001.

\bibitem[CDS01]{chenetal}
Scott~Shaobing Chen, David~L Donoho, and Michael~A Saunders.
\newblock Atomic decomposition by basis pursuit.
\newblock {\em SIAM Rev.}, 43(1):129--159, 2001.

\bibitem[Cha05]{chambolle}
Antonin Chambolle.
\newblock Total variation minimization and a class of binary {MRF} models.
\newblock In {\em EMMCVPR}, volume~5, pages 136--152. Springer, 2005.

\bibitem[CL97]{chambollelions}
Antonin Chambolle and Pierre-Louis Lions.
\newblock Image recovery via total variation minimization and related problems.
\newblock {\em Numer. Math.}, 76(2):167--188, 1997.

\bibitem[CZ64]{chernoffzacks}
Herman Chernoff and Shelemyahu Zacks.
\newblock Estimating the current mean of a normal distribution which is
  subjected to changes in time.
\newblock {\em Ann. Math. Statist.}, 35(3):999--1018, 1964.

\bibitem[DHL17]{dalalyanetal}
Arnak~S Dalalyan, Mohamed Hebiri, and Johannes Lederer.
\newblock On the prediction performance of the {L}asso.
\newblock {\em Bernoulli}, 23(1):552--581, 2017.

\bibitem[DJ94]{donohojohnstone}
David~L Donoho and Iain~M Johnstone.
\newblock Ideal spatial adaptation by wavelet shrinkage.
\newblock {\em Biometrika}, 81(3):425--455, 1994.

\bibitem[DK01]{davieskovac}
P~Laurie Davies and Arne Kovac.
\newblock Local extremes, runs, strings and multiresolution.
\newblock {\em Ann. Statist.}, 29(1):1--48, 2001.

\bibitem[Don99]{donoho}
David~L Donoho.
\newblock Wedgelets: {N}early minimax estimation of edges.
\newblock {\em Ann. Statist.}, 27(3):859--897, 1999.

\bibitem[DS05]{darbonsigelle}
J{\'e}r{\^o}me Darbon and Marc Sigelle.
\newblock A fast and exact algorithm for total variation minimization.
\newblock In {\em Iberian Conference on Pattern Recognition and Image
  Analysis}, pages 351--359. Springer, 2005.

\bibitem[Efr04]{efron}
Bradley Efron.
\newblock Large-scale simultaneous hypothesis testing: the choice of a null
  hypothesis.
\newblock {\em J. Amer. Statist. Assoc.}, 99(465):96--104, 2004.

\bibitem[GBS08]{ghoshetal}
Arpita Ghosh, Stephen Boyd, and Amin Saberi.
\newblock Minimizing effective resistance of a graph.
\newblock {\em SIAM Rev.}, 50(1):37--66, 2008.

\bibitem[GG84]{gemangeman}
Stuart Geman and Donald Geman.
\newblock Stochastic relaxation, {G}ibbs distributions, and the {B}ayesian
  restoration of images.
\newblock {\em IEEE Trans. Pattern Anal. Mach. Intell.}, 6:721--741, 1984.

\bibitem[GLCS17]{guntuboyinaetal}
Adityanand Guntuboyina, Donovan Lieu, Sabyasachi Chatterjee, and Bodhisattva
  Sen.
\newblock Spatial adaptation in trend filtering.
\newblock {\em arXiv:1702.05113}, 2017.

\bibitem[GO09]{goldsteinosher}
Tom Goldstein and Stanley Osher.
\newblock The split {B}regman method for {L}1-regularized problems.
\newblock {\em SIAM J. Imaging Sci.}, 2(2):323--343, 2009.

\bibitem[GPS89]{greigetal}
Dorothy~M Greig, Bruce~T Porteous, and Allan~H Seheult.
\newblock Exact maximum a posteriori estimation for binary images.
\newblock {\em J. R. Stat. Soc. Ser. B. Stat. Methodol.}, 51(2):271--279, 1989.

\bibitem[Har16]{harris}
Xiaoying~Tian Harris.
\newblock Prediction error after model search.
\newblock {\em arXiv:1610.06107}, 2016.

\bibitem[HLL10]{harchaouilevyleduc}
Za{\i}d Harchaoui and C{\'e}line L{\'e}vy-Leduc.
\newblock Multiple change-point estimation with a total variation penalty.
\newblock {\em J. Amer. Statist. Assoc.}, 105(492):1480--1493, 2010.

\bibitem[HMPSST16]{hernanmadridpadillaetal}
Oscar Hernan Madrid~Padilla, James~G Scott, James Sharpnack, and Ryan~J
  Tibshirani.
\newblock The {DFS} fused lasso: nearly optimal linear-time denoising over
  graphs and trees.
\newblock {\em arXiv:1608.03384}, 2016.

\bibitem[Hoe10]{hoefling}
Holger Hoefling.
\newblock A path algorithm for the fused lasso signal approximator.
\newblock {\em J. Comput. Graph. Statist.}, 19(4):984--1006, 2010.

\bibitem[HR16]{hutterrigollet}
Jan-Christian H\"utter and Philippe Rigollet.
\newblock Optimal rates for total variation denoising.
\newblock In {\em Conf. Learning Theory}, pages 1115--1146, 2016.

\bibitem[Joh15]{johnstone}
Iain Johnstone.
\newblock {\em Gaussian estimation: {S}equence and wavelet models}.
\newblock Available at statweb.stanford.edu/~imj/GE09-08-15.pdf, 2015.

\bibitem[JS{\etalchar{+}}05]{jacksonetal}
Brad Jackson, Jeffrey~D Scargle, et~al.
\newblock An algorithm for optimal partitioning of data on an interval.
\newblock {\em IEEE Signal Process. Lett.}, 12(2):105--108, 2005.

\bibitem[KFE12]{killicketal}
Rebecca Killick, Paul Fearnhead, and Idris~A Eckley.
\newblock Optimal detection of changepoints with a linear computational cost.
\newblock {\em J. Amer. Statist. Assoc.}, 107(500):1590--1598, 2012.

\bibitem[KS96]{karger}
David~R Karger and Clifford Stein.
\newblock A new approach to the minimum cut problem.
\newblock {\em Journal of the ACM (JACM)}, 43(4):601--640, 1996.

\bibitem[KS11]{kovacsmith}
Arne Kovac and Andrew~DAC Smith.
\newblock Nonparametric regression on a graph.
\newblock {\em J. Comput. Graph. Statist.}, 20(2):432--447, 2011.

\bibitem[KT93]{korostelevtsybakov}
Aleksandr~Petrovich Korostelev and Alexandre~B Tsybakov.
\newblock {\em Minimax theory of image reconstruction}, volume~82.
\newblock Springer Science \& Business Media, 1993.

\bibitem[KZ04]{kolmogorovzabih}
Vladimir Kolmogorov and Ramin Zabin.
\newblock What energy functions can be minimized via graph cuts?
\newblock {\em IEEE Trans. Pattern Anal. Mach. Intell.}, 26(2):147--159, 2004.

\bibitem[L93]{lovasz}
Lov\'asz L.
\newblock Random walks on graphs: {A} survey.
\newblock {\em Combinatorics, Paul Erd\"os is eighty}, 2:1--46, 1993.

\bibitem[LB12]{livnebrandt}
Oren~E Livne and Achi Brandt.
\newblock Lean algebraic multigrid ({LAMG}): Fast graph {L}aplacian linear
  solver.
\newblock {\em SIAM J. Sci. Comput.}, 34(4):B499--B522, 2012.

\bibitem[Leb05]{lebarbier}
{\'E}milie Lebarbier.
\newblock Detecting multiple change-points in the mean of {G}aussian process by
  model selection.
\newblock {\em Signal processing}, 85(4):717--736, 2005.

\bibitem[LF97]{landfriedman}
Stephanie~R Land and Jerome~H Friedman.
\newblock Variable fusion: A new adaptive signal regression method.
\newblock Technical report, 656, Department of Statistics, Carnegie Mellon
  University, 1997.

\bibitem[LSRT16]{linetal}
Kevin Lin, James Sharpnack, Alessandro Rinaldo, and Ryan~J Tibshirani.
\newblock Approximate recovery in changepoint problems, from $\ell_2$
  estimation error rates.
\newblock {\em arXiv:1606.06746}, 2016.

\bibitem[MN00]{moorenewman}
Cristopher Moore and Mark~EJ Newman.
\newblock Epidemics and percolation in small-world networks.
\newblock {\em Phys. Rev. E}, 61(5):5678, 2000.

\bibitem[MS89]{mumfordshah}
David Mumford and Jayant Shah.
\newblock Optimal approximations by piecewise smooth functions and associated
  variational problems.
\newblock {\em Comm. Pure Appl. Math.}, 42(5):577--685, 1989.

\bibitem[MvdG97]{mammenvandegeer}
Enno Mammen and Sara van~de Geer.
\newblock Locally adaptive regression splines.
\newblock {\em Ann. Statist.}, 25(1):387--413, 1997.

\bibitem[Rin09]{rinaldo}
Alessandro Rinaldo.
\newblock Properties and refinements of the fused lasso.
\newblock {\em Ann. Statist.}, 37(5B):2922--2952, 2009.

\bibitem[ROF92]{rudinetal}
Leonid~I Rudin, Stanley Osher, and Emad Fatemi.
\newblock Nonlinear total variation based noise removal algorithms.
\newblock {\em Phys. D}, 60(1-4):259--268, 1992.

\bibitem[RWY11]{raskuttietal}
Garvesh Raskutti, Martin~J Wainwright, and Bin Yu.
\newblock Minimax rates of estimation for high-dimensional linear regression
  over $\ell_q$-balls.
\newblock {\em IEEE Trans. Inform. Theory}, 57(10):6976--6994, 2011.

\bibitem[SKS13]{sharpnacketallovasz}
James~L Sharpnack, Akshay Krishnamurthy, and Aarti Singh.
\newblock Near-optimal anomaly detection in graphs using lovasz extended scan
  statistic.
\newblock In {\em Adv. Neural Inform. Process. Syst.}, pages 1959--1967, 2013.

\bibitem[SRS12]{sharpnacketalsparsistency}
James Sharpnack, Alessandro Rinaldo, and Aarti Singh.
\newblock Sparsistency of the edge lasso over graphs.
\newblock In {\em Int. Conf. Artific. Intell. Statist.}, pages 1028--1036,
  2012.

\bibitem[SS11]{spielmansrivastava}
Daniel~A Spielman and Nikhil Srivastava.
\newblock Graph sparsification by effective resistances.
\newblock {\em SIAM J. Comput.}, 40(6):1913--1926, 2011.

\bibitem[SSR13]{sharpnacketalwavelet}
James Sharpnack, Aarti Singh, and Alessandro Rinaldo.
\newblock Detecting activations over graphs using spanning tree wavelet bases.
\newblock In {\em Int. Conf. Artific. Intell. Statist.}, pages 545--553, 2013.

\bibitem[ST04]{spielmanteng}
Daniel~A Spielman and Shang-Hua Teng.
\newblock Nearly-linear time algorithms for graph partitioning, graph
  sparsification, and solving linear systems.
\newblock In {\em ACM Symp. Theory Comput.}, pages 81--90. ACM, 2004.

\bibitem[SWT16a]{sadhanalaetal}
Veeranjaneyulu Sadhanala, Yu-Xiang Wang, and Ryan~J Tibshirani.
\newblock Total variation classes beyond 1d: {M}inimax rates, and the
  limitations of linear smoothers.
\newblock In {\em Adv. Neural Inform. Process. Syst.}, pages 3513--3521, 2016.

\bibitem[SWT16b]{sadhanalaetalsparsification}
Veeru Sadhanala, Yu-Xiang Wang, and Ryan Tibshirani.
\newblock Graph sparsification approaches for {L}aplacian smoothing.
\newblock In {\em Int. Conf. Artific. Intell. Statist.}, pages 1250--1259,
  2016.

\bibitem[Tib96]{tibshirani}
Robert Tibshirani.
\newblock Regression shrinkage and selection via the lasso.
\newblock {\em J. R. Stat. Soc. Ser. B. Stat. Methodol.}, 58(1):267--288, 1996.

\bibitem[TS15]{tanseyscott}
Wesley Tansey and James~G Scott.
\newblock A fast and flexible algorithm for the graph-fused lasso.
\newblock {\em arXiv:1505.06475}, 2015.

\bibitem[TSR{\etalchar{+}}05]{tibshiranietal}
Robert Tibshirani, Michael Saunders, Saharon Rosset, Ji~Zhu, and Keith Knight.
\newblock Sparsity and smoothness via the fused lasso.
\newblock {\em J. R. Stat. Soc. Ser. B. Stat. Methodol.}, 67(1):91--108, 2005.

\bibitem[TT11]{tibshiranitaylor}
Ryan~J Tibshirani and Jonathan Taylor.
\newblock The solution path of the generalized lasso.
\newblock {\em Ann. Statist.}, 39(3):1335--1371, 2011.

\bibitem[TT15]{tiantaylor}
Xiaoying Tian and Jonathan~E Taylor.
\newblock Selective inference with a randomized response.
\newblock {\em arXiv:1507.06739}, 2015.

\bibitem[VDGB09]{vandegeerbuhlmann}
Sara~A Van De~Geer and Peter B{\"u}hlmann.
\newblock On the conditions used to prove oracle results for the {L}asso.
\newblock {\em Electron. J. Stat.}, 3:1360--1392, 2009.

\bibitem[WL02]{winklerliebscher}
Gerhard Winkler and Volkmar Liebscher.
\newblock Smoothers for discontinuous signals.
\newblock {\em J. Nonparametr. Stat.}, 14(1-2):203--222, 2002.

\bibitem[WSST16]{wangetal}
Yu-Xiang Wang, James Sharpnack, Alex Smola, and Ryan~J Tibshirani.
\newblock Trend filtering on graphs.
\newblock {\em J. Mach. Learn. Res.}, 17(105):1--41, 2016.

\bibitem[XKWG14]{xinetal}
Bo~Xin, Yoshinobu Kawahara, Yizhou Wang, and Wen Gao.
\newblock Efficient generalized fused lasso and its application to the
  diagnosis of {A}lzheimer's disease.
\newblock In {\em Proc. Assoc. Adv. Artific. Intell. Conf.}, pages 2163--2169,
  2014.

\bibitem[YA89]{yaoau}
Yi-Ching Yao and Siu-Tong Au.
\newblock Least-squares estimation of a step function.
\newblock {\em Sankhya A}, 51(3):370--381, 1989.

\bibitem[Yao84]{yao1984}
Yi-Ching Yao.
\newblock Estimation of a noisy discrete-time step function: Bayes and
  empirical {B}ayes approaches.
\newblock {\em Ann. Statist.}, 12(3):1434--1447, 1984.

\bibitem[Yao88]{yao1988}
Yi-Ching Yao.
\newblock Estimating the number of change-points via {S}chwarz' criterion.
\newblock {\em Statist. Probab. Lett.}, 6(3):181--189, 1988.

\bibitem[ZWJ14]{zhangetal1}
Yuchen Zhang, Martin~J Wainwright, and Michael~I Jordan.
\newblock Lower bounds on the performance of polynomial-time algorithms for
  sparse linear regression.
\newblock In {\em Conf. Learning Theory}, volume~35, pages 1--28, 2014.

\bibitem[ZWJ17]{zhangetal2}
Yuchen Zhang, Martin~J Wainwright, and Michael~I Jordan.
\newblock Optimal prediction for sparse linear models? {L}ower bounds for
  coordinate-separable {M}-estimators.
\newblock {\em Electron. J. Stat.}, 11(1):752--799, 2017.

\end{thebibliography}
\bibliographystyle{alpha}
\end{document}